\documentclass[a4paper]{article}
\usepackage{amsthm}
\usepackage{amsmath}
\usepackage{amsfonts}
\usepackage{amssymb}

\usepackage[all]{xy}

\allowdisplaybreaks
\usepackage{mathtools}
\usepackage{stmaryrd}


\usepackage{multicol}

\usepackage[shortlabels]{enumitem}
\setlist[enumerate]{leftmargin=*,align=left,labelindent=\parindent}

\makeatletter
\newcommand{\myitem}[1][]{%
\item[(#1)]\protected@edef\@currentlabel{\textup{#1}}\ignorespaces%
}
\makeatother

%
%
\usepackage{hyperref}

\newtheorem{theorem}{Theorem}[section]
\newtheorem{lemma}[theorem]{Lemma}
\newtheorem{proposition}[theorem]{Proposition}
\newtheorem{corollary}[theorem]{Corollary}

\theoremstyle{definition}
\newtheorem{definition}[theorem]{Definition}

\theoremstyle{remark}
\newtheorem{remark}[theorem]{Remark}
\newtheorem{notation}[theorem]{Notation}

\numberwithin{equation}{section}


\newcommand{\ci}{\mathtt{i}}
\newcommand{\ck}{\mathtt{k}}
\newcommand{\cs}{\mathtt{s}}
\newcommand{\ce}{\mathtt{e}}

\newcommand{\cf}{\mathtt{f}}
\newcommand{\ct}{\mathtt{t}}

\newcommand{\vX}{\mathtt{X}}
\newcommand{\vx}{\mathtt{x}}
\newcommand{\vy}{\mathtt{y}}
\newcommand{\vz}{\mathtt{z}}
\newcommand{\Unit}{\mathit{I}}
\newcommand{\terminal}{\mathtt{T}}

\newcommand{\pa}{\star}
\newcommand{\qa}{\bullet}
\newcommand{\qna}{\ast}

\newcommand{\LambdaAbst}[1]{\Lambda(#1)}
\newcommand{\pair}[1]{\langle #1 \rangle}

\newcommand{\Cat}[1]{\mathbb{#1}}
\newcommand{\Nat}{\mathbb{N}}
\newcommand{\id}{\mathrm{id}}
\newcommand{\FV}{\mathrm{FV}}


\title{Reflexive combinatory algebras}

\usepackage{authblk}

\author{Marlou M. Gijzen\thanks{\texttt{marlou.gijzen@gmail.com}}}
\author{Hajime Ishihara\thanks{\texttt{ishihara@jaist.ac.jp}}}
\author{Tatsuji Kawai\thanks{\texttt{tatsuji.kawai@jaist.ac.jp}
}}
\affil{Japan Advanced Institute of Science and Technology\authorcr
1-1 Asahidai, Nomi, Ishikawa 923-1292, Japan}

\date{}

\begin{document}
\maketitle

\begin{abstract}
  We introduce the notion of reflexivity for combinatory algebras. 
  Reflexivity can be thought of as an equational counterpart of
  the Meyer--Scott axiom of combinatory models, which indeed allows us to
  characterise an equationally definable counterpart of combinatory
  models. This new structure, called strongly reflexive combinatory
  algebra, admits a finite axiomatisation with seven closed equations,
  and the structure is shown to be exactly the retract of combinatory
  models.
  Lambda algebras can be characterised as strongly reflexive
  combinatory algebras which are stable. Moreover, there is a canonical
  construction of a lambda algebra from a strongly reflexive
  combinatory algebra. The resulting axiomatisation of lambda algebras
  by the seven axioms for strong reflexivity
  together with those for stability is shown to correspond to the
  axiomatisation of lambda algebras due to Selinger [J.\ Funct.\
  Programming, 12(6), 549--566, 2002].
  \bigskip
  %
  %

\noindent
\textsl{Keywords:}
combinatory algebra;
reflexivity;
combinatory model;
lambda algebra;
lambda model
\medskip

\noindent
\textsl{MSC2010:}
03B40 
\end{abstract}

\section{Introduction}\label{sec:Introduction}
The paper is written in the hope that the study of structures more
general than the established notion of models of the lambda calculus
(i.e., lambda algebras) should lead to a better understanding of the
properties of these models. To this end, we introduce the notion of
\emph{reflexivity} in a general setting for combinatory algebras,
and show how this property relates to models of the lambda calculus.

Recall that a lambda algebra is a combinatory algebra which satisfies
Curry's five closed axioms (cf.\ Definition \ref{def:LambdaAlg}).
Apart from this succinct axiomatisation, they have nice structural
properties: lambda algebras are exactly the retracts of lambda models
\cite{MeyerLambda}, the first-order models of lambda calculus
characterised as lambda algebras satisfying the Meyer--Scott axiom
(cf.\ Definition~\ref{def:CombMod}). Moreover, lambda algebras
correspond to the reflexive objects in cartesian closed categories
\cite{KoymansModelOfLambda,ScottLambda}. 
However, the choice of axioms for lambda algebras by Curry is elusive
and looks arbitrary~\cite{BarendregtLambda,MeyerLambda,
Selinger02,KoymansModelOfLambda}.%
\footnote{See also Lambek~\cite{Lambek}, Freyd~\cite{FreydCombinator},
Hindley and Seldin~\cite[Chapter 8,
8B]{HindleySeldinLambdaCombinators} for discussions on the finite
axiomatisation of Curry algebras (namely, the models of
$\lambda_{\beta\eta}$).}

On the other hand, lambda models can be characterised independently
from lambda algebras as combinatory models which are stable \cite{MeyerLambda}.
Combinatory models can be characterised simply as combinatory algebras
satisfying the Meyer--Scott axiom:
  \begin{equation*}
    \left[ \forall c \in \underline{A} \left( ac = bc \right) \right]
    \implies {\ce a = \ce b}
  \end{equation*}
where $\ce$ is a distinguished constant (often defined as $\cs
(\ck \ci)$).
Combinatory models are sufficient for interpreting lambda
calculus. Moreover, there is a canonical way of stabilising a
combinatory model to obtain a lambda model. In this sense, combinatory
models encapsulate the essence of lambda models.

The observation above allows us to identify \emph{reflexivity} as a
fundamental property of combinatory algebras, which comes into play as follows:
the combinatory completeness implies that there is a surjection
$\varphi \colon a \mapsto a \vx$ from a combinatory algebra $A$ to its
polynomial algebra $A[\vx]$, whose section is provided by any choice
of a defined $\lambda$-abstraction.  The problem is that familiar
abstraction mechanisms fail to respect the equality on the
polynomials: the question is thus under what conditions this
succeeds. Since $\varphi$ is surjective, it induces an
equivalence relation $\sim_{A}$ on $A$ which makes $A
\slash{\sim_{A}}$ a combinatory algebra isomorphic to $A[\vx]$. The
key observation is that the relation $\sim_{A}$ is generated by finite
schemas of equations on the elements of $A$. The above problem is then
reduced to the condition, called reflexivity, that the constant
$\ce$ preserves these equations, namely
\begin{equation}
  \label{eq:reflexivity}
  a \sim_{A} b \implies \ce a = \ce b,
\end{equation}
or in terms of polynomial algebra, 
$a \vx = b \vx \implies \ce a = \ce b$.
Reflexivity yields seven simple universal sentences on $A$, which are
equivalent to the requirement that  an \emph{alternative} choice of
lambda abstraction, denoted $\lambda^\dag$, is a well-defined
operation on the polynomial algebra with one indeterminate. Then,
taking $\lambda^\dag$-closures of these seven sentences yields seven
closed equations; these equations correspond to the requirement that
the polynomial algebra be reflexive, or equivalently, that
$\lambda^\dag$ be well-defined on polynomial algebras with any finite
numbers of indeterminates.

We call a combinatory algebra satisfying the seven closed equations as
mentioned above a \emph{strongly reflexive combinatory algebra}.
The
class of strongly reflexive combinatory algebras provides the
algebraic (i.e., equational) counterpart to the notion of
combinatory models in that they are precisely the retracts of
combinatory models. Indeed, the condition \eqref{eq:reflexivity} can
be considered as an equational counterpart of the Meyer--Scott axiom.
On the other hand, combinatory models can be characterised as strongly
reflexive combinatory algebras satisfying the Meyer--Scott axiom.
The relation between strongly reflexive combinatory algebras and
lambda algebras can then be captured by
stability~\cite[5.6.4~(ii)]{BarendregtLambda}.  Moreover,
every strongly reflexive combinatory algebra can be made into a stable
one (i.e., a lambda algebra) with an appropriate choice of constants.
This passage to a lambda algebra also manifests itself in another
form: we can associate a cartesian closed monoid (and thus a cartesian
closed category with a reflexive object) to a strongly reflexive
combinatory algebra, from which the above mentioned lambda algebra is
obtained. Furthermore, the resulting axiomatisation of lambda algebras
with the seven closed equations and the axiom of stability naturally
corresponds to the axiomatisation of lambda algebras due to Selinger~\cite{Selinger02}. 
Thus, the notion of strongly reflexive combinatory algebras serves as a
common generalisation of those of combinatory models and lambda
algebras, which moreover is finitely axiomatisable (see Figure~\ref{fig:IntroDiagram}).

Throughout this paper, we work with \emph{combinatory
pre-models}, combinatory algebras extended with distinguished elements
$\ci$ and $\ce$. Of course, these elements can be defined in terms of
elements $\ck$ and $\cs$ as $\ci = \cs \ck \ck$ and $\ce = \cs (\ck
\ci)$. Nevertheless, the inclusion of $\ce$ as a primitive, in
particular, may be justified given its fundamental role in reflexivity.  
Accordingly, most of the notions for combinatory algebras mentioned
above (such as polynomial algebra, reflexivity, strong reflexivity,
stability) will be introduced for combinatory pre-models.  The
exceptions are lambda algebras and lambda models, which
are defined with respect to the conventional notion of combinatory
algebras with  $\ck$ and $\cs$ as the only primitive constants.
 
\begin{figure}[tb]
  \small
  \centering
 \[
   \xymatrix@H=+27pt@C=-28pt@L=6pt{
     & \txt{\bf Combinatory algebras}
     \ar[d]_{\txt{seven sentences}}
     \ar `r[rddd]  [rddd]^{\txt{Curry's axioms}}
     &  \\
     & \txt{\bf Reflexive combinatory algebras}
     \ar[d]_{\txt{$\lambda^\dag$-closures of seven sentences}}
     & \\
     & \txt{\bf Strongly reflexive combinatory algebras}
     \ar @/_1pc/ [dl]_(.7){\txt{Meyer--Scott axiom}}
     \ar[dr]^(.6){\txt{stability}}
     &\\
      \txt{\bf Combinatory models}
     \ar @/_1pc/ [ur]_(.6){\txt{retracts}}
     \ar [dr]_(.5){\txt{stability}}
     & &
     \txt{\bf Lambda algebras} 
     \ar @/^1pc/ [dl]^(.3){\txt{Meyer--Scott axiom}}
     \\
     & \txt{\bf Lambda models} 
     \ar @/^1pc/ [ur]^(.4){\txt{retracts}} &   
    }
 \]
 \caption{Relationship between various structures} \label{fig:IntroDiagram}
\end{figure}
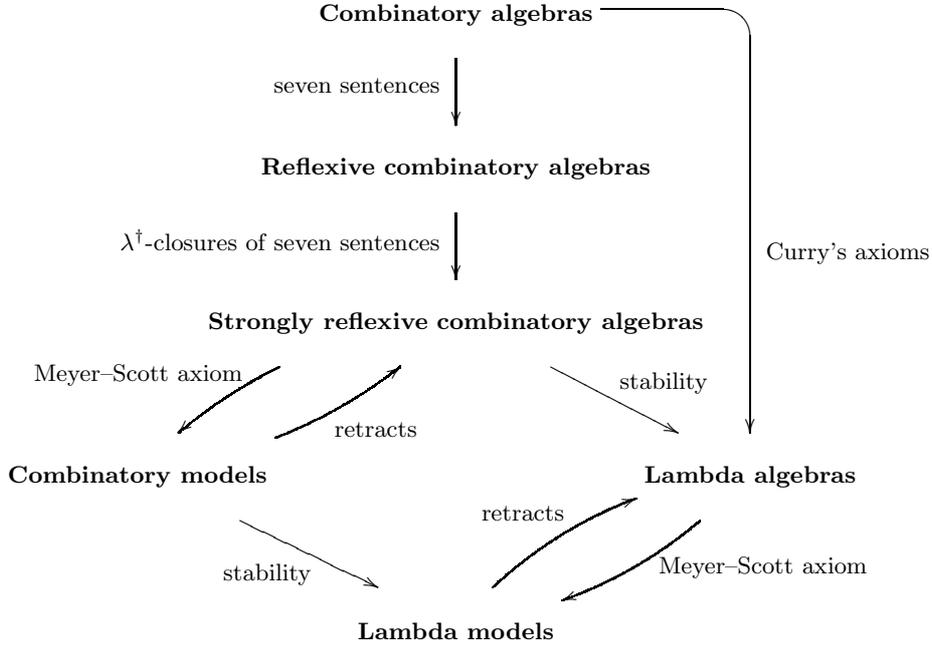

\paragraph{Organisation}
In Section~\ref{sec:CombPreModel}, we introduce the notion of
combinatory pre-models and establish some elementary properties of polynomial algebras. We
also introduce an alternative representation of a polynomial algebra
without using indeterminates.
In Section~\ref{sec:Reflexivity}, we introduce the notion of
reflexivity for combinatory pre-models, and show that this can
be characterised by seven simple universal sentences.  We also
introduce a new abstraction mechanism, denoted $\lambda^\dag$,  in
terms of which reflexivity can be rephrased.
In Section~\ref{sec:AlgCombModel}, we introduce the notion of strong
reflexivity for combinatory pre-models. We show that the class of strongly
reflexive combinatory pre-models is axiomatisable with a finite set of
equations which can be obtained by taking $\lambda^\dag$-closures of
seven axioms of reflexivity. We then show that strongly reflexive
combinatory pre-models are exactly the retracts of combinatory models.
In Section~\ref{sec:CCCMoid}, we generalise the construction of 
a cartesian closed monoid from a lambda algebra
to the setting of strongly reflexive combinatory pre-models.
In Section~\ref{sec:Stability}, 
we introduce the notion of stability for strongly reflexive
combinatory pre-models
and characterise lambda algebras as
stable strongly reflexive combinatory pre-models.
We then clarify how this characterisation of lambda algebras
corresponds to that of Selinger.
%

\section{Combinatory pre-models}\label{sec:CombPreModel}
We begin with a preliminary on quotients of applicative structures.
\begin{definition}\label{as}
An \emph{applicative structure} is a pair
\(
 A = (\underline{A},\mathbin{\cdot})
\)
where $\underline{A}$ is a set and $\mathbin{\cdot}$ is a binary
operation on
$\underline{A}$, called an \emph{application}.
The application $a \cdot b$ is often written as $(ab)$, and
parentheses are omitted following the convention of association to
the left.
 
A \emph{homomorphism} between applicative structures
\(
 A = (\underline{A},\mathbin{\cdot_A})
\)
and
\(
 B = (\underline{B},\mathbin{\cdot_B})
\)
is a function $f \colon \underline{A} \to \underline{B}$ such that
\[
  f(a \cdot_A b) =
  f(a) \cdot_B f(b)
\]
for each $a, b \in \underline{A}$.
\end{definition}

In what follows, we fix an applicative structure 
 $A = (\underline{A},\mathbin{\cdot})$.
\begin{definition}\label{def:Congruence}
  A \emph{congruence relation} (or simply a congruence) on $A$ is an equivalence relation $\sim$ on
  $\underline{A}$ such that 
  \[
    a \sim b \;\text{and}\; c \sim d \implies a \cdot c \sim b \cdot
    d
  \]
  for each $a,b,c,d \in \underline{A}$.
\end{definition}
\begin{notation}
  If $\sim$ is a congruence on $A$, we often write $a$ for the
  equivalence class $[a]_{\sim}$ of
  $\sim$ whenever it is clear from
  the context. In this case, we write $a \sim b$ for $[a]_{\sim} =
  [b]_{\sim}$. The  convention also applies to the other congruence
  relations in this paper.
\end{notation}

\begin{definition}
  Let $\sim$ be a congruence on $A$. The \emph{quotient} of $A$ by
  $\sim$ is an applicative structure
  \[
    A/{\sim} =
    (\underline{A}/{\sim}, \qa)
  \]
  where $a \qa b = a \cdot b$. There is a natural homomorphism
  $\pi_{\sim} \colon A \to A/{\sim}$ defined by $\pi_{\sim}(a) = a$.
\end{definition}

For any binary relation $R$ on $\underline{A}$, there is a smallest
congruence $\sim_{R}$ on $A$ containing $R$, which is inductively
generated by the following rules:
\begin{enumerate}
  \item if $a \mathrel{R} b$, then $a \sim_{R} b$,
\item $a \sim_{R} a$,
\item if $a \sim_{R} b$, then $b \sim_{R} a$,
\item if $a \sim_{R} b$ and $b \sim_{R} c$, then $a \sim_{R} c$,
\item if $a \sim_{R} b$ and $c \sim_{R} d$, then $a \cdot c
  \sim_{R} b \cdot d$,
\end{enumerate}
where $a, b, c, d \in \underline{A}$. 

\begin{proposition}
  \label{prop:AppUniExt}
  Let $R$ be a binary relation on $\underline{A}$, and let $f \colon A
  \to B$ be a homomorphism of applicative structures such that $f(a) = f(b)$  for each $a
  \mathrel{R} b$. Then, there exists a unique homomorphism
  $\widetilde{f} \colon A /{\sim_R} \to B$ such that 
  $\widetilde{f} \circ \pi_{\sim_{R}} = f$.
\end{proposition}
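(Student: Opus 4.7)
The plan is to define $\widetilde{f}$ in the only possible way and then verify well-definedness by induction on the generating rules for $\sim_R$. Specifically, I would set
\[
  \widetilde{f}([a]_{\sim_R}) = f(a)
\]
for each $a \in \underline{A}$, and the bulk of the proof is to check that this does not depend on the choice of representative.

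To show well-definedness, I would prove by induction the auxiliary statement ``$a \sim_R b$ implies $f(a) = f(b)$'' following the five clauses that inductively generate $\sim_R$. The base case uses the hypothesis that $f(a) = f(b)$ whenever $a \mathrel{R} b$; reflexivity, symmetry, and transitivity are immediate from the corresponding properties of equality in $\underline{B}$; and the application-closure case is exactly the place where the hypothesis that $f$ is a homomorphism of applicative structures is used: if $f(a) = f(b)$ and $f(c) = f(d)$, then
\[
  f(a \cdot c) = f(a) \cdot_B f(c) = f(b) \cdot_B f(d) = f(b \cdot d).
\]

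With well-definedness in hand, the homomorphism property of $\widetilde{f}$ is immediate from the definition of the application $\qa$ on $A/{\sim_R}$ and the homomorphism property of $f$, and the identity $\widetilde{f} \circ \pi_{\sim_R} = f$ holds by construction. Uniqueness follows because $\pi_{\sim_R}$ is surjective, so any homomorphism agreeing with $f$ after precomposition with $\pi_{\sim_R}$ is determined on every equivalence class.

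I expect no serious obstacle; the one place that deserves care is the induction establishing well-definedness, which is the standard universal property argument for a congruence generated by a relation and relies crucially on closing $R$ under the five rules listed before the proposition rather than attempting a more direct definition.
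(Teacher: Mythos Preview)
Your proposal is correct and matches the paper's own proof essentially line for line: define $\widetilde{f}([a]_{\sim_R}) = f(a)$, verify well-definedness by induction on the five generating rules of $\sim_R$, observe that the homomorphism property and the factorisation through $\pi_{\sim_R}$ are immediate, and deduce uniqueness from the surjectivity of $\pi_{\sim_R}$. The paper states these steps more tersely, but there is no difference in strategy or substance.
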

\begin{proof}
  Define $\widetilde{f}$ by $\widetilde{f}(a) = f(a)$. 
  By induction on $\sim_{R}$, one can show that ${a \sim_{R} b}$ implies
  $f(a) = f(b)$ for each $a,b \in \underline{A}$. The fact that
  $\widetilde{f}$ is a homomorphism is clear. 
  The uniqueness of $\widetilde{f}$ follows from the fact that  $\pi_{\sim_{R}}$ 
  is surjective.
\end{proof}

Throughout the paper, we work with the following notion of 
combinatory algebras where constants $\ci$ and $\ce$ are
given as primitives.
\begin{definition}\label{cpmod}
A \emph{combinatory pre-model} is a structure
\(
 A = (\underline{A},\mathbin{\cdot},\ck,\cs,\ci,\ce)
\)
where
\(
 (\underline{A},\mathbin{\cdot})
\)
is an applicative structure and $\ck, \cs$, $\ci, \ce$ are elements of
$\underline{A}$ such that
\begin{enumerate}
  \item\label{Ax:K} $ \ck a b = a,$ 

  \item \label{Ax:S} $ \cs a b c = a c (b c),$
    
  \item\label{Ax:I} $ \ci a = a,$ 
    
  \item\label{Ax:E} $ \ce a b = a b$ 
\end{enumerate}
for each $a, b, c \in \underline{A}$.
The reduct 
$(\underline{A},\mathbin{\cdot},\ck,\cs)$ is called a
\emph{combinatory algebra}.
 
A \emph{homomorphism} between combinatory pre-models
 $
   A = (\underline{A},\mathbin{\cdot_A},\ck_A,\cs_A,\ci_A,\ce_A)
 $
and
 $
 B = (\underline{B},\mathbin{\cdot_B},\ck_B,\cs_B,\ci_B,\ce_B)
 $
is a homomorphism between applicative structures
$
(\underline{A},\mathbin{\cdot_A})
$
and
$
(\underline{B},\mathbin{\cdot_B})
$
such that
 $
 f(\ck_A) = \ck_B,
 $
 $
 f(\cs_A) = \cs_B,
 $
 $
 f(\ci_A) = \ci_B,
 $
and
 $
 f(\ce_A) = \ce_B.
 $
Combinatory pre-models $A$ and $B$ are \emph{isomorphic} if there exists
a bijective homomorphism between $A$ and $B$. Homomorphisms and
isomorphisms between
combinatory algebras are defined similarly.

In what follows, homomorphisms mean homomorphisms between combinatory
pre-models unless otherwise noted. 
\end{definition}
\begin{notation}
  \label{not:CPM}
  We often use a combinatory pre-model denoted by the letter $A$.
  Unless otherwise noted, we assume that $A$ has the underlying structure
  \(
   A = (\underline{A},\mathbin{\cdot},\ck, \cs,\ci,\ce).
  \)
\end{notation}

We recall the construction of a polynomial algebra, and establish its
basic properties.
\begin{definition}\label{cr}
Let $S$ be a set. The set $\mathcal{T}(S)$ of \emph{terms over}
$S$ is inductively generated by the following rules:
\begin{enumerate}
\item $a \in \mathcal{T}(S)$ for each $a \in S$,
\item if $t, u \in \mathcal{T}(S)$, then $(t,u) \in \mathcal{T}(S)$.
\end{enumerate}
Note that $\mathcal{T}(S)$ is a free applicative structure
$( \mathcal{T}(S), \cdot )$
over $S$ where $t \cdot u = (t, u)$.
\end{definition}

In the rest of this section, we work over a fixed  combinatory pre-model
$A$.
We assume that a countably infinite set $\vX = \left\{ \vx_{i} \mid i \geq 1
\right\}$ of distinct indeterminates is given.
For each term $t \in \mathcal{T}(\vX + \underline{A})$, $\FV(t)$
denotes the set of indeterminates that occur in $t$.
As usual, $t$ is said to be \emph{closed} if $\FV(t) = \emptyset$.
\begin{definition}
\label{def:PolyAlg}
Let $\approx_\vX$ be
the congruence relation on $\mathcal{T}(\vX + \underline{A})$ 
generated from the following basic relation:
\begin{enumerate}
\item\label{eq:polyK} $((\ck,s),t) \approx_\vX s$,
\item\label{eq:polyS} $(((\cs,s),t),u) \approx_\vX ((s,u),(t,u))$,
\item\label{eq:polyI} $(\ci,s) \approx_\vX s$,
\item\label{eq:polyE} $((\ce,s),t) \approx_\vX (s,t)$,
\item\label{eq:polyInj} $(a,b) \approx_\vX ab$,
\end{enumerate}
where $a, b \in \underline{A}$ and $s,t,u \in
\mathcal{T}(\vX + \underline{A})$.
The \emph{polynomial algebra} $A[\vX]$ over $A$ is a combinatory pre-model
\begin{equation}
  \label{eq:PolyAlg}
  A[\vX] = (\underline{A[\vX]}, \mathbin{\pa},\ck, \cs, \ci,\ce)
\end{equation}
where
$(\underline{A[\vX]}, \pa)$ is the quotient of 
$\mathcal{T}(\{\vX\}+\underline{A})$ with respect to $\approx_\vX$.
There is a homomorphism $\sigma_{A} \colon A \to A[\vX]$ defined by
$\sigma_{A}(a) = a$. 

Similarly, for each $n \in \Nat$, the congruence relation
$\approx_{\vx_1 \dots \vx_n}$ 
on $\mathcal{T}(\left\{ \vx_1,\dots,\vx_n \right\} + \underline{A})$ is generated from the five basic equations
above. The polynomial algebra $A[\vx_1,\dots,\vx_n]$
over $A$ in indeterminates $\vx_1,\dots,\vx_n$ is then defined as in
\eqref{eq:PolyAlg}, whose underlying set is the quotient of 
$\mathcal{T}(\left\{ \vx_1,\dots,\vx_n \right\} + \underline{A})$
with respect to $\approx_{\vx_1\dots\vx_n}$.
Let $\eta_{n} \colon A \to A[\vx_1,\dots,\vx_n]$ be the
homomorphism defined by  $\eta_n(a) = a$.
\end{definition}
\begin{notation}
  \leavevmode
  \begin{enumerate}
    \item We sometimes write $\eta_{A} \colon A \to A[\vx_1]$ for
      $\eta_{1} \colon A \to A[\vx_1]$.

    \item We use $\vx,\vy,\vz$ for $\vx_1,\vx_2,\vx_3$, respectively;
      thus, $A[\vx] = A[\vx_1]$, $A[\vx,\vy] = A[\vx_1,\vx_2]$, and
      $A[\vx,\vy,\vz] = A[\vx_1,\vx_2,\vx_3]$.  However, we sometimes
      use $\vx$ for an arbitrary element of $\vX$ (cf.\ Definition
      \ref{def:LambdaAst} and Definition \ref{def:LambdaDag}).  The
      meaning of $\vx$ should be clear from the context.

    \item Terms are often written without parentheses and commas;
      e.g., $(t,s)$ will be written simply as $ts$. In most cases, the
      reader should be able to reconstruct the original terms
      following the convention of association to the left. However,
      there still remain some ambiguities; e.g., it is not clear
      whether $ab$ for $a,b \in \underline{A}$ denotes $(a,b)$ or $ab
      \in \underline{A}$.  In practice, this kind of distinction does
      not matter as terms are usually considered up to equality of
      polynomial algebras.
  \end{enumerate}
\end{notation}

We recall some standard properties of polynomial algebras.
\begin{lemma}
  \label{prop:UniversalPolyAlg}
  Let  $ B = (\underline{B},\mathbin{\cdot_B},\ck_B,\cs_B,\ci_B,\ce_B)$ be a
  combinatory pre-model, and let $n \geq 1$. 
  For each homomorphism $f \colon A \to B$ and elements
  $b_{1},\dots,b_{n} \in \underline{B}$, there exists a
  unique homomorphism $\overline{f} \colon A[\vx_1,\dots,\vx_n] \to B$ such that
  $\overline{f} \circ \eta_{n} = f$ and $\overline{f}(\vx_i) = b_i$
  for each $i \leq n$.
\end{lemma}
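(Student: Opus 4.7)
The plan is to invoke the free applicative structure on $\{\vx_1,\dots,\vx_n\} + \underline{A}$ and then factor through the congruence $\approx_{\vx_1\dots\vx_n}$ using Proposition~\ref{prop:AppUniExt}.

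First I would define a function $\widehat{f} \colon \mathcal{T}(\{\vx_1,\dots,\vx_n\} + \underline{A}) \to \underline{B}$ on the free applicative structure by recursion on terms: set $\widehat{f}(\vx_i) = b_i$ for each $i \leq n$, $\widehat{f}(a) = f(a)$ for each $a \in \underline{A}$, and $\widehat{f}(t,u) = \widehat{f}(t) \cdot_B \widehat{f}(u)$. By construction, $\widehat{f}$ is a homomorphism of applicative structures.

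Next I would check that $\widehat{f}$ identifies the five basic relations generating $\approx_{\vx_1\dots\vx_n}$. Relations \ref{eq:polyK}--\ref{eq:polyE} follow from the fact that $B$ is a combinatory pre-model and that $f$ preserves the constants $\ck,\cs,\ci,\ce$; for instance, $\widehat{f}(((\ck,s),t)) = \ck_B \cdot_B \widehat{f}(s) \cdot_B \widehat{f}(t) = \widehat{f}(s)$ by Definition~\ref{cpmod}\ref{Ax:K}. Relation \ref{eq:polyInj} translates to $f(a) \cdot_B f(b) = f(a \cdot b)$, which is precisely the homomorphism property of $f$. Applying Proposition~\ref{prop:AppUniExt} to the binary relation generating $\approx_{\vx_1\dots\vx_n}$ yields a unique homomorphism $\overline{f} \colon A[\vx_1,\dots,\vx_n] \to B$ of applicative structures with $\overline{f} \circ \pi = \widehat{f}$, where $\pi$ is the quotient map.

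It remains to observe that $\overline{f}$ is actually a homomorphism of combinatory pre-models (and not merely of applicative structures), and to verify the required identities. Since the constants $\ck,\cs,\ci,\ce$ in $A[\vx_1,\dots,\vx_n]$ are the equivalence classes of the corresponding constants of $A$ viewed as terms, and since $\widehat{f}$ sends these to $\ck_B,\cs_B,\ci_B,\ce_B$ respectively (using that $f$ is a homomorphism of combinatory pre-models), $\overline{f}$ preserves them. The equations $\overline{f} \circ \eta_n = f$ and $\overline{f}(\vx_i) = b_i$ are immediate from the definition of $\widehat{f}$ on generators. For uniqueness, any homomorphism $g \colon A[\vx_1,\dots,\vx_n] \to B$ satisfying these constraints agrees with $\overline{f}$ on the images of $\underline{A}$ and on each $\vx_i$, hence on the entire free applicative structure by homomorphism, and thus on $A[\vx_1,\dots,\vx_n]$ by surjectivity of $\pi$.

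No step poses a real obstacle; this is the standard free/quotient argument. The only point requiring attention is that the proposition at hand (Proposition~\ref{prop:AppUniExt}) produces a homomorphism of applicative structures, so one must explicitly observe preservation of the distinguished constants to upgrade it to a homomorphism of combinatory pre-models.
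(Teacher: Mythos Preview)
Your proposal is correct and follows essentially the same approach as the paper: extend $f$ to the free applicative structure $\mathcal{T}(\{\vx_1,\dots,\vx_n\}+\underline{A})$ using freeness, then factor through the congruence via Proposition~\ref{prop:AppUniExt}. You are in fact slightly more explicit than the paper in checking the five generating relations, in noting that preservation of the constants must be verified separately, and in spelling out the uniqueness argument.
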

\begin{proof}
  Let $f \colon A \to B$ be a homomorphism and $b_1,\dots,b_n \in
  \underline{B}$.
  By the freeness of $\mathcal{T}(\left\{ \vx_1,\dots,\vx_n
  \right\}+\underline{A})$, $f$
  uniquely extends to a homomorphism of applicative structures
  $\widetilde{f} \colon \mathcal{T}(\left\{ \vx_1,\dots,\vx_n
  \right\}+\underline{A}) \to B$
  such that $\widetilde{f}(\vx_i) = b_i$ for $i \leq n$,
  $\widetilde{f}(a) = f(a)$ for $a \in \underline{A}$, 
  and $\widetilde{f}((t,u)) = \widetilde{f}(t) \mathbin{\cdot_B} \widetilde{f}(u)$.
  Since $f$ is a homomorphism of combinatory pre-models, 
  $\widetilde{f}$ satisfies the assumption of Proposition
  \ref{prop:AppUniExt}; hence it uniquely extends to a homomorphism
  $\overline{f} \colon A[\vx_1,\dots,\vx_n] \to B$. 
  Then, we have $\bar{f}(\vx_i) =
  \widetilde{f}(\vx_i) = b_i$ and $\bar{f}(\eta_A(a)) = \widetilde{f}(a) =
  f(a)$ for each $a \in \underline{A}$.
\end{proof}
Similarly, we have the following for $A[\vX]$.
\begin{lemma}
  \label{prop:UnivPropPolynomialAlgX}
  Let  $ B =
  (\underline{B},\mathbin{\cdot_B},\ck_B,\cs_B,\ci_B,\ce_B)$ be a
  combinatory pre-model. For any homomorphism $f \colon A \to B$ and
  a sequence $(b_{n})_{n \geq 1}$ of elements of
  $\underline{B}$, there exists a unique homomorphism $\overline{f}
  \colon A[\vX] \to B$ such that $\overline{f} \circ \sigma_{A} = f$
  and $\overline{f}(\vx_n) = b_n$ for each $n \geq 1$. 
\end{lemma}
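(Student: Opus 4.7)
The plan is to mimic the proof of Lemma \ref{prop:UniversalPolyAlg} with the finite index set $\{\vx_1,\dots,\vx_n\}$ replaced by the countable set $\vX$; the argument goes through essentially unchanged because the generating relations for $\approx_\vX$ are of exactly the same form as those for $\approx_{\vx_1\dots\vx_n}$, and the quotient construction is taken over the same type of free applicative structure.

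First, I would invoke the freeness of $\mathcal{T}(\vX + \underline{A})$ as an applicative structure: the assignment sending $\vx_n \mapsto b_n$ for each $n \geq 1$ and $a \mapsto f(a)$ for each $a \in \underline{A}$ extends uniquely to a homomorphism of applicative structures
\[
  \widetilde{f} \colon \mathcal{T}(\vX + \underline{A}) \to B,
\]
determined by $\widetilde{f}((t,u)) = \widetilde{f}(t) \cdot_B \widetilde{f}(u)$. Next, I would check that $\widetilde{f}$ identifies the two sides of each of the five basic relations \eqref{eq:polyK}--\eqref{eq:polyInj} generating $\approx_\vX$; this is immediate from the facts that $f$ is a homomorphism of combinatory pre-models (so it sends $\ck,\cs,\ci,\ce$ in $A$ to the corresponding constants in $B$) together with the combinatory pre-model axioms satisfied by $B$.

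Having verified this, Proposition \ref{prop:AppUniExt} applied to the relation generating $\approx_\vX$ yields a unique homomorphism of applicative structures $\overline{f} \colon A[\vX] \to B$ with $\overline{f} \circ \pi_{\approx_\vX} = \widetilde{f}$. Since $\overline{f}$ sends each of $\ck,\cs,\ci,\ce$ (viewed as equivalence classes in $A[\vX]$) to the corresponding constants in $B$, it is in fact a homomorphism of combinatory pre-models. By construction $\overline{f}(\vx_n) = \widetilde{f}(\vx_n) = b_n$ and $\overline{f}(\sigma_A(a)) = \widetilde{f}(a) = f(a)$ for all $a \in \underline{A}$ and $n \geq 1$.

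For uniqueness, any homomorphism $g \colon A[\vX] \to B$ satisfying $g \circ \sigma_A = f$ and $g(\vx_n) = b_n$ pulls back along $\pi_{\approx_\vX}$ to a homomorphism of applicative structures from $\mathcal{T}(\vX + \underline{A})$ to $B$ that agrees with $\widetilde{f}$ on generators, hence equals $\widetilde{f}$ by freeness; surjectivity of $\pi_{\approx_\vX}$ then forces $g = \overline{f}$. I do not expect a genuine obstacle here, as the only difference from Lemma \ref{prop:UniversalPolyAlg} is the cardinality of the indeterminate set, which plays no role in either the freeness argument or the quotient construction.
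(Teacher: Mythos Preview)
Your proposal is correct and follows exactly the approach the paper intends: the paper does not give a separate proof for this lemma, simply noting ``Similarly, we have the following for $A[\vX]$'' after Lemma~\ref{prop:UniversalPolyAlg}, and your argument is precisely the evident adaptation of that proof to the countable indeterminate set.
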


\begin{remark}
  \label{rem:FunctorT}
   By Lemma \ref{prop:UniversalPolyAlg},  the construction $A \mapsto
   A[\vx]$ determines a functor $T$ on the category of combinatory
   pre-models. The functor $T$ sends each homomorphism $f \colon A \to
   B$ to the unique homomorphism $T(f) \colon A[\vx] \to B[\vx]$ such
   that $T(f)(\vx) = \vx$ and $T(f) \circ \eta_{A} = \eta_{B} \circ
   f$.  By induction on $n \in \Nat$, one can show that 
   %
   %
   $T^{n}A \;(= \left(\cdots(A[\vx])\cdots)[\vx]\right)$ 
   is
   isomorphic to $A[\vx_1,\dots,\vx_n]$ via the unique homomorphism
  \begin{equation}
    \label{eq:IsoPolynomialTA}
    h_{n} \colon A[\vx_1,\dots,\vx_n]\to T^{n}A 
  \end{equation}
  such that $h_{n} \circ \eta_{n} = \eta_{T^{n-1} A} \circ
  \dots \circ \eta_{A}$ and $h_{n}(\vx_i) = (\eta_{T^{n-1} A} \circ \dots \circ
  \eta_{T^{i}A})(\vx)$ for each $i \leq n$.%
  \footnote{Here $h_n(\vx_n) = \vx$. We also define  $T^0 A = A$.}
\end{remark}

\begin{notation}
  \label{rem:ClosedTerm}
  As usual, the \emph{interpretation} of a term $t \in
  \mathcal{T}(\vX+\underline{A})$ in
  $A$ under a valuation $\rho \colon \Nat \to
  \underline{A}$ is the unique homomorphism $\llbracket \cdot
  \rrbracket_{\rho} \colon \mathcal{T}(\vX+\underline{A}) \to A$
  that extends the identity
  function $\id_{A} \colon A \to A$ with
  respect to the sequence $( \rho(n))_{n \in \Nat}$.
  When $t$ is a closed term, the interpretation $\llbracket t
  \rrbracket_{\rho}$ does not depend on $\rho$, and thus can be written as
  $\llbracket t \rrbracket$. 
  Because of this, we often identify a closed term
  $t$ with its interpretation $\llbracket t \rrbracket \in
  \underline{A}$ and treat $t$ as if it is an element of
  $\underline{A}$.%
  \footnote{For example, term $(\ce, ((\cs, \ck),
  \ci))$ will be identified with $\ce(\cs \ck \ci)$.}
  It should be clear from the context whether closed terms are treated
  as terms (or elements of $A[\vX]$) or elements of
  $\underline{A}$ via the interpretation. When
  a closed term $t$ is treated as an element of $A[\vX]$, however,
  this distinction is irrelevant since we have $t \approx_\vX
  \llbracket t \rrbracket$.  The similar notational
  convention applies to closed terms of 
  $\mathcal{T}(\left\{ \vx_1,\dots,\vx_n \right\}+\underline{A})$.
\end{notation}

Next, recall that an object $X$ of a category is a \emph{retract} of another
object $Y$ if there exist morphisms $s \colon X \to Y$ and $r \colon Y
\to X$ such that $r \circ s = \id_{X}$.
In the context of the combinatory pre-model
$A$,
we have the following.
\begin{proposition}
  \label{prop:Retract}
  \leavevmode
   \begin{enumerate}
     \item\label{prop:Retract2} $A[\vx_1,\dots,\vx_n]$ is a retract of $A[\vX]$
       for each $n \in \Nat$.
     \item\label{prop:Retract3} $(A[\vX])[\vx]$ is isomorphic to $A[\vX]$.
   \end{enumerate}
\end{proposition}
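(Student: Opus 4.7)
The plan is to prove both parts by applying the universal properties of polynomial algebras (Lemmas \ref{prop:UniversalPolyAlg} and \ref{prop:UnivPropPolynomialAlgX}) to construct the required homomorphisms, and then to verify the equalities $r \circ s = \id$ and inverse relations by checking agreement on the distinguished generators (the $\eta$-images of $A$ together with the indeterminates) and invoking uniqueness.

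For part (\ref{prop:Retract2}), I would define the section $s \colon A[\vx_1,\dots,\vx_n] \to A[\vX]$ as the unique homomorphism supplied by Lemma \ref{prop:UniversalPolyAlg} which extends $\sigma_A \colon A \to A[\vX]$ and sends $\vx_i$ to $\vx_i$ for $i \leq n$.  Dually, let $r \colon A[\vX] \to A[\vx_1,\dots,\vx_n]$ be the unique homomorphism given by Lemma \ref{prop:UnivPropPolynomialAlgX} which extends $\eta_n$ and sends $\vx_i$ to $\vx_i$ for $i \leq n$ while sending each remaining $\vx_m$ ($m > n$) to an arbitrary element such as $\eta_n(\ck)$.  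Then $r \circ s$ restricts to $\eta_n$ on $A$ and fixes every $\vx_i$ with $i \leq n$, so it coincides with $\id_{A[\vx_1,\dots,\vx_n]}$ by the uniqueness clause of Lemma \ref{prop:UniversalPolyAlg}.

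For part (\ref{prop:Retract3}), I would build a ``shift'' isomorphism in two steps.  First, Lemma \ref{prop:UnivPropPolynomialAlgX} produces a homomorphism $g \colon A[\vX] \to A[\vX]$ satisfying $g \circ \sigma_A = \sigma_A$ and $g(\vx_i) = \vx_{i+1}$ for all $i \geq 1$.  Then Lemma \ref{prop:UniversalPolyAlg}, applied with the base combinatory pre-model $A[\vX]$, extends $g$ along $\eta_{A[\vX]}$ to a homomorphism $f \colon (A[\vX])[\vx] \to A[\vX]$ with $f(\vx) = \vx_1$.  Conversely, Lemma \ref{prop:UnivPropPolynomialAlgX} produces a homomorphism $h \colon A[\vX] \to (A[\vX])[\vx]$ extending $\eta_{A[\vX]} \circ \sigma_A$ along the sequence $b_1 = \vx$, $b_{i+1} = \eta_{A[\vX]}(\vx_i)$ for $i \geq 1$.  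One then checks that $f \circ h$ agrees with $\id_{A[\vX]}$ on the image of $\sigma_A$ and on each $\vx_i$, and that $h \circ f$ restricts to $\eta_{A[\vX]}$ on $A[\vX]$ and sends $\vx$ to $\vx$; uniqueness in the respective universal properties then yields the inverse relations.

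The whole argument is essentially a bookkeeping exercise once the data fed into the two universal properties are correctly chosen; the only point requiring care is the consistent shift-by-one matching between the outer indeterminate $\vx$ of $(A[\vX])[\vx]$ and the infinite family $\vx_1, \vx_2, \dots$ of $A[\vX]$ in part (\ref{prop:Retract3}).  No real obstacle is expected beyond getting this indexing right.
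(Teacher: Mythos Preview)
Your proposal is correct and follows essentially the same approach as the paper: both parts are proved by invoking the universal properties of Lemmas~\ref{prop:UniversalPolyAlg} and~\ref{prop:UnivPropPolynomialAlgX} to manufacture the section/retraction (resp.\ the shift maps), and then appealing to the uniqueness clauses to verify the required identities on generators. The only cosmetic differences are that the paper sends the surplus indeterminates $\vx_m$ ($m>n$) to $\vx_{\min\{n,m\}}$ (resp.\ $\ci$ when $n=0$) rather than to $\eta_n(\ck)$, and that your $f$ and $h$ in part~(\ref{prop:Retract3}) are the paper's $h$ and $f$, respectively.
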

\begin{proof}
  \noindent\ref{prop:Retract2}.
  Fix $n \in \Nat$.
  By Lemma \ref{prop:UnivPropPolynomialAlgX}, 
  there exists a unique homomorphism
  $f \colon A[\vX] \to A[\vx_1,\dots,\vx_n]$ such that $f \circ
  \sigma_{A} =
  \eta_{n}$ and $f(\vx_{i}) = \vx_{\min\{n,i\}}$ for each $i \geq 1$.%
  \footnote{When $n = 0$, we define $\vx_{\min\{n,i\}} = \ci$.}
  In the other direction, by Lemma 
\ref{prop:UniversalPolyAlg}, there is a unique
  homomorphism
  $\sigma_{n} \colon A[\vx_1,\dots,\vx_n]
  \to A[\vX]$ such that $\sigma_{n} \circ
  \eta_{n} = \sigma_{A}$ and $\sigma_{n}(\vx_i) = \vx_{i}$ for $i
  \leq n$.
  Then, $f \circ \sigma_{n} \circ \eta_{n} = f \circ \sigma_{A} =
  \eta_{n}$ and $(f \circ \sigma_{n})(\vx_{i}) = f(\vx_{i}) =
  \vx_{i}$ for each $i \leq n$. Hence $f \circ \sigma_{n} =
  \id_{A[\vx_1,\dots,\vx_n]}$ by Lemma 
  \ref{prop:UniversalPolyAlg}.
  \smallskip

  \noindent\ref{prop:Retract3}.
  By Lemma \ref{prop:UnivPropPolynomialAlgX},
  there exists a unique homomorphism
  $f \colon A[\vX] \to (A[\vX])[\vx]$ such that 
  $f \circ \sigma_{A} = \eta_{A[\vX]} \circ \sigma_{A}$,
  $f(\vx_{1}) = \vx$, and $f(\vx_{i}) = \eta_{A[\vX]}(\vx_{i-1})$ for each 
  $i \geq 2$. In the other direction, there exists a unique
  homomorphism $g \colon A[\vX] \to
  A[\vX]$ such that $g \circ
  \sigma_{A} = \sigma_{A}$ and $g(\vx_{i}) = \vx_{i+1}$
  for each $i \geq 1$. By Lemma \ref{prop:UniversalPolyAlg},
  $g$ extends uniquely to a homomorphism $h \colon (A[\vX])[\vx] \to
  A[\vX]$ such that $h(\vx) = \vx_{1}$ and $h \circ
  \eta_{A[\vX]} = g$. Then, it is straightforward to show that $f$ and
  $h$ are mutual inverse.
\end{proof}
As a corollary of Proposition~\ref{prop:Retract}\eqref{prop:Retract2},
we have $t \approx_{\vx_1\dots\vx_n} u \iff 
t \approx_{\vX} u$ for each $t, u \in \mathcal{T}(\left\{ \vx_1,\dots,
\vx_n\right\} + \underline{A})$.

Before proceeding further, we recall one of the standard abstraction mechanisms for combinatory
algebras (cf.\ Barendregt~\cite[7.3.4]{BarendregtLambda}).
   %
   %
   %
   %
\begin{definition}
  \label{def:LambdaAst}
  For each $t \in \mathcal{T}(\vX + \underline{A})$ and $\vx \in \vX$, define 
  $\lambda^\ast \vx.t \in  \mathcal{T}(\vX + \underline{A})$ inductively by
\begin{enumerate}
\item $\lambda^\ast \vx.\vx = \ci$,
\item $\lambda^\ast \vx.a = \ck a$,
\item $\lambda^\ast \vx.(t,u) = \cs (\lambda^\ast \vx.t) (\lambda^\ast
  \vx.u)$,
\end{enumerate}
where $a \in \vX + \underline{A}$ such that $a \neq \vx$.
For each $n \geq 1$, we write $\lambda^{\ast}\vx_1\dots\vx_n. t$ for
$\lambda^{\ast}\vx_1. \cdots\lambda^{\ast}\vx_n. t$.
\end{definition}
\begin{lemma}
  \label{lem:Beta}
  For each $t,u \in \mathcal{T}(\vX+\underline{A})$
   and $\vx \in \vX$, we have
  $(\lambda^\ast \vx.t) u \approx_\vX t[\vx/u]$, where 
  $t[\vx/u]$ denotes the substitution of $u$ for $\vx$ in $t$.
\end{lemma}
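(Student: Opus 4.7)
The plan is to prove this by structural induction on $t \in \mathcal{T}(\vX + \underline{A})$, mirroring the three-case definition of $\lambda^\ast \vx.t$. The congruence $\approx_\vX$ provides the equations needed in each case, and the congruence closure (rule (5) in the list preceding Proposition \ref{prop:AppUniExt}) lets us substitute inductive hypotheses inside applications.

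In the base case $t = \vx$, we have $\lambda^\ast \vx.\vx = \ci$, so $(\lambda^\ast \vx.\vx)u = (\ci, u) \approx_\vX u = \vx[\vx/u]$ directly by clause \eqref{eq:polyI} of Definition \ref{def:PolyAlg}. In the other base case, when $t = a$ with $a \in \vX + \underline{A}$ and $a \neq \vx$, we have $\lambda^\ast \vx.a = (\ck, a)$, so $(\lambda^\ast \vx.a)u = ((\ck, a), u) \approx_\vX a = a[\vx/u]$ by clause \eqref{eq:polyK}, since the substitution leaves $a$ unchanged.

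For the inductive step $t = (t_1, t_2)$, we compute
\begin{align*}
  (\lambda^\ast \vx.(t_1, t_2))u
   &= (((\cs, \lambda^\ast \vx.t_1), \lambda^\ast \vx.t_2), u) \\
   &\approx_\vX ((\lambda^\ast \vx.t_1, u), (\lambda^\ast \vx.t_2, u)) \\
   &\approx_\vX (t_1[\vx/u], t_2[\vx/u]) \\
   &= (t_1, t_2)[\vx/u],
\end{align*}
where the first $\approx_\vX$ uses clause \eqref{eq:polyS} and the second applies the induction hypothesis to each component, combined via the congruence property of $\approx_\vX$.

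No step is particularly delicate; the only mild bookkeeping concern is keeping straight the dual role of a closed term as both a member of $\mathcal{T}(\vX + \underline{A})$ and (via clause \eqref{eq:polyInj}) an element of $\underline{A}$, but this is precisely what the convention in Notation \ref{rem:ClosedTerm} is designed to absorb, and the proof goes through whether one reads constants like $\ck$ as elements of $\underline{A}$ or as terms paired with the other subterms.
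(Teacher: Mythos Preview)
Your proof is correct and follows exactly the approach indicated by the paper, which simply says ``By induction on the complexity of $t$.'' Your case analysis and use of clauses \eqref{eq:polyI}, \eqref{eq:polyK}, \eqref{eq:polyS}, and the congruence rule are precisely what that induction unfolds to.
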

\begin{proof}
  By induction on the complexity of $t$.
\end{proof}
   %
   %
   %
\begin{proposition}
  \label{prop:Retract1}
  $A[\vx_1,\dots,\vx_n]$ is a retract of $A[\vx]$ for each $n \in \Nat$.
\end{proposition}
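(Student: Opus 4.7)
The plan is to construct a section and retraction using the standard combinatorial encoding of $n$-tuples available in any combinatory pre-model. For each $n \geq 0$, let $P_n$ denote the closed term $\lambda^{\ast} y_1 \cdots y_n z.\, z y_1 \cdots y_n$ and, for $1 \leq i \leq n$, let $\pi_i^n$ denote the closed term $\lambda^{\ast} p.\, p\,(\lambda^{\ast} y_1 \cdots y_n.\, y_i)$; by the identification of Notation \ref{rem:ClosedTerm}, both are elements of $\underline{A}$. Iterated application of Lemma \ref{lem:Beta} yields the pairing--projection identity
\[
  \pi_i^n\, (P_n a_1 \cdots a_n) \approx_{\vX} a_i
\]
for all $a_1, \ldots, a_n \in \mathcal{T}(\vX + \underline{A})$ and all $1 \leq i \leq n$.

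Using Lemma \ref{prop:UniversalPolyAlg}, I define $s \colon A[\vx_1,\dots,\vx_n] \to A[\vx]$ as the unique homomorphism extending $\eta_A$ with $s(\vx_i) = \pi_i^n\, \vx$ for each $i$, and $r \colon A[\vx] \to A[\vx_1,\dots,\vx_n]$ as the unique homomorphism extending $\eta_n$ with $r(\vx) = P_n \vx_1 \cdots \vx_n$. The boundary case $n = 0$ is subsumed uniformly: $P_0 = \ci$, there are no $\pi_i^n$, and the construction collapses to $s = \eta_A$ together with any homomorphism $r$ satisfying $r \circ \eta_A = \id_A$ (obtained from Lemma \ref{prop:UniversalPolyAlg} with an arbitrary choice of $r(\vx)$, e.g.\ $\ci$).

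To verify $r \circ s = \id_{A[\vx_1,\dots,\vx_n]}$, I observe that $r \circ s$ is a homomorphism $A[\vx_1,\dots,\vx_n] \to A[\vx_1,\dots,\vx_n]$ that extends $\eta_n$ along the identity; moreover,
\[
  (r \circ s)(\vx_i) \;=\; r(\pi_i^n\, \vx) \;=\; \pi_i^n\, \bigl(P_n \vx_1 \cdots \vx_n\bigr) \;=\; \vx_i,
\]
where the last equality is the pairing--projection identity, transferred from $\approx_\vX$ to $\approx_{\vx_1 \cdots \vx_n}$ via the corollary to Proposition \ref{prop:Retract}. The uniqueness clause of Lemma \ref{prop:UniversalPolyAlg} then forces $r \circ s = \id$, establishing that $A[\vx_1,\dots,\vx_n]$ is a retract of $A[\vx]$.

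The main delicate point is verifying the pairing--projection identity itself: this is a routine but somewhat tedious unfolding of the nested $\lambda^{\ast}$-abstractions in $P_n$ and $\pi_i^n$ followed by repeated applications of Lemma \ref{lem:Beta}, with no essentially new idea required.
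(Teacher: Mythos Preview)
Your argument is correct. You build an $n$-ary tupling $P_n$ with projections $\pi_i^n$ directly and use the universal property of $A[\vx_1,\dots,\vx_n]$ to verify $r\circ s=\id$; the pairing--projection identity is indeed an immediate consequence of Lemma~\ref{lem:Beta}, and your handling of the degenerate case $n=0$ is fine.

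The paper's proof is organised differently. Rather than constructing $n$-ary tuples, it first shows the trivial case that $A$ is a retract of $A[\vx]$, and then argues that it suffices to establish the single case $n=2$: once $A[\vx,\vy]$ is a retract of $A[\vx]$ for \emph{arbitrary} $A$, functoriality of $T$ (Remark~\ref{rem:FunctorT}) and the identification $T^nA\cong A[\vx_1,\dots,\vx_n]$ give $T^{k+1}A$ a retract of $T^kA$ for every $k\geq 1$, and composing retractions yields the general statement. The $n=2$ case is then handled with the standard binary pairing $[\cdot,\cdot]$, $\ct$, $\cf$. So your approach trades the functoriality/iteration machinery for a direct uniform construction; the paper's approach trades the $n$-dependent combinatorics of $P_n$ and $\pi_i^n$ for a single fixed pairing computation plus a categorical reduction. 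Both are short, and the underlying idea---encode several indeterminates as components of one---is the same.
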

\begin{proof}
  By Lemma \ref{prop:UniversalPolyAlg}, the identity
  $\id_{A} \colon A \to A$ extends to a homomorphism $\overline{\id_{A}}
  \colon A[\vx] \to A$ such that $\overline{\id_{A}}(\vx) = \ci$ and
  $\overline{\id_{A}} \circ \eta_{1} = \id_{A}$.
  Thus, $A$ is a retract of $A[\vx]$. By the functoriality of $T$,
  it remains to show that $A[\vx,\vy] \;(\cong T^{2} A)$ is a retract of
  $A[\vx] \;(= T A)$.
  To see this, define the following terms
  (cf.\ Barendregt~\cite[Section 6.2]{BarendregtLambda}):
  \begin{align}
    \label{eq:Pairing}
    \ct &=  \ck, &
    \cf &= \lambda^\ast \vx\vy. \vy, &
    [\cdot,\cdot] &=\lambda^\ast \vx\vy\vz. \vz \vx \vy.
  \end{align}
  By Lemma \ref{prop:UniversalPolyAlg},
  there exists a unique homomorphism $f \colon A[\vx,\vy] \to A[\vx]$ such that
  $f \circ \eta_{2} = \eta_{1}$, $f(\vx) \approx_\vx \vx \ct$,
  and  $f(\vy) \approx_\vx \vx \cf$. 
  In the other direction, there exists a unique homomorphism $g \colon A[\vx]
  \to A[\vx,\vy]$ such that 
  $g \circ \eta_{1} = \eta_{2}$ and 
  $g(\vx) \approx_{\vx\vy} [\vx,\vy]$.
  Since $g(f(\vx)) \approx_{\vx\vy} g(\vx \ct) 
  \approx_{\vx\vy} [\vx,\vy] \ct  
  \approx_{\vx\vy} \vx$, $g(f(\vy)) \approx_{\vx\vy} g( \vx  \cf) 
  \approx_{\vx\vy} [\vx,\vy]  \cf  \approx_{\vx\vy} \vy$,
  and $g \circ f \circ \eta_{2} = \eta_{2}$,
  we must have $g \circ f = \id_{A[\vx,\vy]}$ by Lemma
  \ref{prop:UniversalPolyAlg}.
\end{proof}
The rest of the section concerns an alternative representation
of $A[\vx]$ without using indeterminates.
We begin with the following observation:
since $(\lambda^{\ast}\vx.t) \vx \approx_\vx t$ for each
$t \in \mathcal{T}(\left\{ \vx \right\} + \underline{A})$, a function $f \colon
\underline{A} \to \underline{A[\vx]}$ defined by 
\[
  f(a) = a\vx
\]
is surjective. 
Let $\sim_{A}$ be the equivalence relation on
$\underline{A}$ generated by the kernel 
\[
  \left\{ (a,b) \in \underline{A} \times \underline{A} \mid f(a)
= f(b) \right\}
\]
of $f$, and let $\pi_{\sim_A} \colon \underline{A} \to A/{\sim_{A}}$
be the natural map onto the (set theoretic) quotient of $A$ by $\sim_A$. Then, $f$
uniquely extends
to a bijection $\gamma \colon A/{\sim_{A}} \to \underline{A[\vx]}$
such that $\gamma \circ \pi_{\sim_A}  = f$ with an inverse $\lambda
\colon \underline{A[\vx]} \to A /{\sim_{A}}$ defined by $\lambda(t) =
\lambda^{\ast} \vx. t$.%
\footnote{More precisely, $\lambda(t) =
[\llbracket \lambda^{\ast} \vx. t \rrbracket]_{\sim_{A}}$.}
Thus, $A[\vx]$ induces the following combinatory pre-model structure 
on $A /{\sim_{A}}$:
\[
 \widetilde{A} =
 (\underline{A}/{\sim_{A}}, \qa, \ck\ck, \ck\cs, \ck\ci, \ck\ce)
\]
 where
\(
 a \qa b = \cs a b
\)
(note that $\cs a b \vx \approx_\vx a \vx(b \vx)$).
The point of the following is that 
the relation ${\sim_{A}}$ can be characterised
directly without passing through $A[\vx]$.

\begin{definition}\label{def:quotient1}
Define an applicative structure $A_1 = (\underline{A}, \cdot_1)$ on
$\underline{A}$ by 
\[
  a \cdot_{1} b = \cs ab.
\]
For each $a \in \underline{A}$, define $a_1 \in \underline{A}$ 
by $a_{1} = \ck a$.

Let $\sim_{1}$ be the congruence relation on $A_{1}$ generated by 
the following relation:
\begin{enumerate}
\item\label{eq:quoK} $\ck_{1} \cdot_{1} a \cdot_{1} b \sim_{1} a$,
\item\label{eq:quoS} $\cs_{1} \cdot_{1} a \cdot_{1} b \cdot_{1} c \sim_{1}
  a \cdot_{1} c \cdot_{1} (b \cdot_{1} c)$,
\item\label{eq:quoI} $\ci_{1} \cdot_{1} a \sim_{1} a$,
\item\label{eq:quoE} $\ce_{1} \cdot_{1} a \cdot_{1} b \sim_{1} a \cdot_{1}
  b$,
\item\label{eq:quoInj} $(\ck a) \cdot_{1} (\ck b) \sim_{1}
  \ck  (a b)$,
\item\label{eq:quoEta} $(\ck a) \cdot_{1} \ci \sim_{1} a$,
\end{enumerate}
where  $a, b, c \in \underline{A}$.
Then, the structure
\[
 \bar{A}_1 =
 (\underline{A}/{\sim_1}, \qna_1, \ck_1, \cs_1, \ci_1,
 \ce_1),
\]
where $(\underline{A}/{\sim_{1}}, \ast_1)$ is the quotient of $A_{1}$
by
$\sim_1$, is a combinatory pre-model.
\end{definition}
The crucial axiom
of $\sim_1$ is Definition \ref{def:quotient1}\eqref{eq:quoEta},
as can be seen from the proof of the following
theorem.
\begin{theorem}
  \label{thm:IsoPolylBar}
  Combinatory pre-models $\bar{A}_1$ and $A[\vx]$ are isomorphic.
\end{theorem}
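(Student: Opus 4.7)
The approach is to use the universal properties already established: Lemma~\ref{prop:UniversalPolyAlg} to produce a homomorphism out of $A[\vx]$, and Proposition~\ref{prop:AppUniExt} to produce one out of the quotient $\bar A_1$. I would exhibit maps in both directions and verify they are mutually inverse.

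First, I construct $\bar\phi \colon \bar A_1 \to A[\vx]$. Define $\phi \colon A_1 \to A[\vx]$ on the underlying applicative structures by $\phi(a) = a \pa \vx$, viewing $a \in \underline{A}$ as an element of $A[\vx]$ via $\eta_A$. This is a homomorphism of applicative structures because $\phi(a \cdot_1 b) = \cs a b\, \vx \approx_\vx a\vx(b\vx)$. To invoke Proposition~\ref{prop:AppUniExt} I check that $\phi$ identifies the two sides of each of the six generating relations of $\sim_1$: relations (1)--(4) translate to the corresponding basic relations of $A[\vx]$ after cancelling the trailing $\vx$; relation (5) reduces to $\ck a\vx(\ck b\vx) \approx_\vx ab \approx_\vx \ck(ab)\vx$; relation (6) reduces to $\ck a\vx(\ci \vx) \approx_\vx a\vx$. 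Hence $\phi$ descends to an applicative homomorphism $\bar\phi \colon \bar A_1 \to A[\vx]$, and direct evaluation (e.g.\ $\ck \ck \vx \approx_\vx \ck$) shows that the four distinguished constants of $\bar A_1$ are sent to the corresponding constants of $A[\vx]$, so $\bar\phi$ is a homomorphism of combinatory pre-models.

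In the other direction, I construct $\bar f \colon A[\vx] \to \bar A_1$ via Lemma~\ref{prop:UniversalPolyAlg}. Let $f \colon A \to \bar A_1$ be $f(a) = [\ck a]_{\sim_1}$; preservation of application by $f$ is exactly relation (5) of $\sim_1$, and $f$ sends $\ck,\cs,\ci,\ce$ to $\ck_1,\cs_1,\ci_1,\ce_1$ by definition. Now apply Lemma~\ref{prop:UniversalPolyAlg} with the choice $\bar f(\vx) = [\ci]_{\sim_1}$ to obtain $\bar f$.

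Finally I verify mutual inverseness. For $\bar f \circ \bar\phi$, compute on a representative: $\bar f(\bar\phi([a]_{\sim_1})) = \bar f(\eta_A(a)\pa\vx) = [\ck a]_{\sim_1} \ast_1 [\ci]_{\sim_1} = [\cs(\ck a)\ci]_{\sim_1} = [a]_{\sim_1}$, where the last step uses precisely relation (6) of $\sim_1$. For $\bar\phi \circ \bar f$, by the uniqueness clause of Lemma~\ref{prop:UniversalPolyAlg} it suffices to check $(\bar\phi \circ \bar f) \circ \eta_A = \eta_A$ and $(\bar\phi \circ \bar f)(\vx) = \vx$; both follow directly from $\ck a\vx \approx_\vx a$ and $\ci\vx \approx_\vx \vx$. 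The only conceptually nontrivial point is recognising the role of the two ``substantive'' axioms defining $\sim_1$: relation (5) is what makes $a \mapsto [\ck a]$ an applicative homomorphism, and relation (6) is the decisive ``$\eta$-like'' axiom that makes $\bar f$ a left inverse of $\bar\phi$. Once this is identified, all the verifications above are routine combinatory calculations.
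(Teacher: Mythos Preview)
Your proof is correct and follows essentially the same strategy as the paper: the map $\bar\phi$ you build coincides with the paper's $\overline{g}$, and your $\bar f$ is exactly the paper's $\overline{f}$ (which is defined there via $t \mapsto \lambda^{\ast}\vx.t$, agreeing with your map since $\lambda^{\ast}\vx.a = \ck a$ and $\lambda^{\ast}\vx.\vx = \ci$). The only cosmetic difference is that for the map $A[\vx]\to\bar A_1$ you invoke the universal property of the polynomial algebra (Lemma~\ref{prop:UniversalPolyAlg}) rather than Proposition~\ref{prop:AppUniExt} on the term algebra; both routes yield the same homomorphism, and you correctly identify relation~(6) as the decisive step in proving mutual inverseness.
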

\begin{proof}
  First, a function $f \colon \mathcal{T}(\{\vx\}+\underline{A}) \to
  \underline{A}/{\sim_1}$ defined by $f(t) = \lambda^{\ast} \vx. t$
  is a homomorphism of applicative structures $\mathcal{T}(\{\vx\}+\underline{A})$
  and $\bar{A}_1$. It is also straightforward to check that 
  $f$ preserves \eqref{eq:polyK}--\eqref{eq:polyInj} of
  Definition~\ref{def:PolyAlg}.
  Hence, by Proposition~\ref{prop:AppUniExt}, $f$ extends uniquely to
  a homomorphism $\overline{f} \colon A[\vx] \to \bar{A}_1$ of the
  underlying applicative structures.
  Then, it is easy to see that $\overline{f}$ is a homomorphism of
  combinatory pre-models.
  In the other direction, a function $g \colon \underline{A} \to
  \underline{A[\vx]}$ defined by $g(a) = a \vx$ is a homomorphism of
  applicative structures $A_{1}$ and $A[\vx]$, and it is easy to see
  that $g$ preserves \eqref{eq:quoK}--\eqref{eq:quoEta} of
  Definition~\ref{def:quotient1}. By a similar argument as above,  $g$ extends uniquely to a
  homomorphism $\overline{g} \colon \bar{A}_1 \to A[\vx]$.   
  Lastly,
  we have
  $(\overline{g} \circ f) (t) \approx_\vx  ( \lambda^{\ast} \vx. t) \vx \approx_\vx t$ for each $t \in \mathcal{T}(\{\vx\}+\underline{A})$
  and
  $(\overline{f} \circ g) (a)
  \sim_1 \lambda^{\ast} \vx. a  \vx
  \sim_1 \ck a \cdot_1 \ci
  \sim_1 a$ 
  for each $a \in \underline{A}$, where the last equality follows from
  the equation \eqref{eq:quoEta}.
  Thus $\overline{f}$ and $\overline{g}$ are
  inverse to each other. 
\end{proof}

From the proof of Theorem \ref{thm:IsoPolylBar}, we can derive the following correspondence.
\begin{proposition}\label{cor:CongrCorr}
  For any combinatory pre-model $A$, we have
  \[
    a \sim_1 b \iff a \vx \approx_{\vx} b \vx
  \]
  for each $a,b \in \underline{A}$.
\end{proposition}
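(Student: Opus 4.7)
The plan is to derive the proposition directly from the isomorphism $\overline{g} \colon \bar{A}_1 \to A[\vx]$ constructed in the proof of Theorem~\ref{thm:IsoPolylBar}. Recall that this isomorphism is the extension of the map $g \colon \underline{A} \to \underline{A[\vx]}$ given by $g(a) = a\vx$ (i.e.\ the class of the term $(a,\vx)$ in $A[\vx]$), so concretely $\overline{g}\bigl([a]_{\sim_1}\bigr) = [a\vx]_{\approx_\vx}$. Since $\overline{g}$ is a bijection, it is in particular injective on equivalence classes, and this injectivity is precisely the content of the backward implication we want.

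Step by step: first I would note that the forward direction ``$a \sim_1 b \implies a\vx \approx_\vx b\vx$'' is just the well-definedness of $\overline{g}$, which was already verified in the proof of Theorem~\ref{thm:IsoPolylBar} when showing that $g$ preserves the generating relations \eqref{eq:quoK}--\eqref{eq:quoEta}. Second, for the backward direction I would invoke the fact that $\overline{g}$ has an inverse $\overline{f}$ with $\overline{f}([t]_{\approx_\vx}) = [\lambda^\ast \vx.t]_{\sim_1}$; hence from $a\vx \approx_\vx b\vx$ one obtains
\[
  [a]_{\sim_1} \;=\; \overline{f}\bigl([a\vx]_{\approx_\vx}\bigr)
  \;=\; \overline{f}\bigl([b\vx]_{\approx_\vx}\bigr)
  \;=\; [b]_{\sim_1},
\]
where the outer equalities use that $\overline{f} \circ \overline{g} = \id_{\bar{A}_1}$, which was also established in the proof of Theorem~\ref{thm:IsoPolylBar} (and crucially relied on axiom \eqref{eq:quoEta}).

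There is no real obstacle here: the proposition is a straightforward corollary of the bijectivity half of Theorem~\ref{thm:IsoPolylBar}, and the slightly subtle point, namely that $\lambda^\ast \vx.(a\vx) \sim_1 a$, has already been discharged in that proof via the eta-style equation \eqref{eq:quoEta}. So the entire argument amounts to unwinding the definitions of $\overline{f}$ and $\overline{g}$ on representatives and citing the isomorphism.
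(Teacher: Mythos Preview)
Your argument is correct and follows the same route as the paper: the proposition is stated there as an immediate consequence of the proof of Theorem~\ref{thm:IsoPolylBar}, and you have simply unpacked this by observing that the isomorphism $\overline{g}$ sends $[a]_{\sim_1}$ to $[a\vx]_{\approx_\vx}$, so its bijectivity (with inverse $\overline{f}$, relying on axiom~\eqref{eq:quoEta}) yields both directions of the biconditional.
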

As an immediate consequence, we have the following.
\begin{corollary}\label{cor:EinSim}
  For any combinatory pre-model $A$, we have $\ce a \sim_1 a$ for each
  $a \in \underline{A}$.
\end{corollary}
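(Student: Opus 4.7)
The plan is to apply Proposition~\ref{cor:CongrCorr} directly, reducing the claim $\ce a \sim_1 a$ to an equation in the polynomial algebra $A[\vx]$, namely $\ce a \vx \approx_\vx a \vx$. This latter equation is immediate from the defining relation \eqref{eq:polyE} of $\approx_\vx$, which gives $((\ce, a), \vx) \approx_\vx (a, \vx)$ for any $a \in \underline{A}$. Putting the two steps together yields $\ce a \sim_1 a$.

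There is essentially no obstacle here: the content of the corollary is packaged entirely by Proposition~\ref{cor:CongrCorr}, and the hypothesis of that proposition is satisfied by the very axiom~\eqref{Ax:E} for $\ce$, transported into the polynomial algebra as~\eqref{eq:polyE}. The only pedantic point to keep in mind is the notational convention from Notation~\ref{rem:ClosedTerm} (identifying closed terms with their interpretations), so that one can freely read $\ce a \vx$ either as the term $((\ce, a), \vx)$ or as the element obtained by applying $\ce$ to $a$ and then to $\vx$ in $A[\vx]$; these agree up to $\approx_\vx$.

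Thus the proof will be a single short line invoking Proposition~\ref{cor:CongrCorr} and~\eqref{eq:polyE}, with no induction, case analysis, or calculation needed beyond that.
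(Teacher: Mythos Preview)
Your proposal is correct and is precisely the approach the paper intends: the corollary is stated as an immediate consequence of Proposition~\ref{cor:CongrCorr}, and the required equality $(\ce a)\vx \approx_\vx a\vx$ is exactly relation~\eqref{eq:polyE} (together with~\eqref{eq:polyInj} to identify the term $(\ce,a)$ with the element $\ce a$).
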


\section{Reflexivity}\label{sec:Reflexivity}
The notion of reflexivity introduced below can be understood as
an algebraic (rather than first-order) analogue of the Meyer--Scott
axiom for combinatory models (cf.\ Definition~\ref{def:CombMod}).
We still follow the convention of Notation~\ref{not:CPM}.
\begin{definition}
  \label{defreflexive}
  A combinatory pre-model $A$ is \emph{reflexive} if 
  \[
    a \sim_{1} b \implies \ce a = \ce b
  \]
  for each $a, b \in \underline{A}$.
\end{definition}
In terms of polynomial algebras, reflexivity can be stated as follows
(cf.\ Proposition \ref{cor:CongrCorr}). 
\begin{lemma}
  \label{lem:reflexive}
  A  combinatory pre-model $A$ is reflexive
  if and only if
  $a \vx \approx_\vx b \vx$ implies $\ce a = \ce b$ for each $a, b \in
  \underline{A}$.
\end{lemma}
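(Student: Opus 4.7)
The plan is to read off the lemma directly from the two ingredients already in hand: the definition of reflexivity, which quantifies $\ce a = \ce b$ over the congruence $\sim_{1}$, and Proposition \ref{cor:CongrCorr}, which identifies $\sim_{1}$ with the relation $a \vx \approx_{\vx} b \vx$ coming from the polynomial algebra. Since these two relations are the \emph{same} relation on $\underline{A}$, the universal implication over one is literally the universal implication over the other, and the equivalence is immediate.

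More concretely, I would unfold Definition \ref{defreflexive}, stating that reflexivity of $A$ means: for all $a,b \in \underline{A}$, $a \sim_{1} b$ implies $\ce a = \ce b$. I would then rewrite the hypothesis $a \sim_{1} b$ using the equivalence $a \sim_{1} b \iff a \vx \approx_{\vx} b \vx$ given by Proposition \ref{cor:CongrCorr}. This rewriting is legal pointwise, so the universally quantified statement $(\forall a,b)\,(a \sim_{1} b \Rightarrow \ce a = \ce b)$ is equivalent to $(\forall a,b)\,(a \vx \approx_{\vx} b \vx \Rightarrow \ce a = \ce b)$, which is exactly the stated condition.

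There is no real obstacle here: both directions follow from the same biconditional, so no auxiliary construction or induction is needed. The only thing worth noting in the write-up is that Proposition \ref{cor:CongrCorr} does the real work, while this lemma is just its translation into the language of polynomial algebras; I would keep the proof to one or two sentences.
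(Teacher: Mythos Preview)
Your proposal is correct and matches the paper's approach exactly: the paper does not even spell out a proof, merely prefacing the lemma with ``cf.\ Proposition~\ref{cor:CongrCorr},'' which is precisely the reduction you describe. There is nothing to add.
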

In the rest of the paper, we sometimes use Lemma \ref{lem:reflexive}
implicitly.

Since the relation $\sim_{1}$ is generated from the equations on the
elements on $\underline{A}$ (cf.\ Definition \ref{def:quotient1}),
reflexivity can be characterised by a set of simple universal
sentences on $A$.
\begin{proposition}
  \label{prop:CharRef}
  A combinatory pre-model $A$ is reflexive if and only if
  it satisfies the following equations:
  \begin{enumerate}
  \item\label{prop:CharRef1}
    $\ce(\ck_{1} \cdot_{1} a \cdot_{1} b)
    =
    \ce a$,
  \item\label{prop:CharRef2}
    $\ce(\cs_{1} \cdot_{1} a \cdot_{1} b \cdot_{1} c)
    = 
    \ce(a \cdot_{1} c \cdot_{1} (b \cdot_{1} c))$,
  \item\label{prop:CharRef3}
    $\ce(\ci_{1} \cdot_{1} a)
    =
    \ce a$,
  \item\label{prop:CharRef4}
    $\ce(\ce_{1} \cdot_{1} a \cdot_{1} b)
    =
    \ce(a \cdot_{1} b)$,
  \item\label{prop:CharRef5}
    $\ce((\ck a) \cdot_{1} (\ck b))
    =
    \ce(\ck (a b))$,
  \item\label{prop:CharRef6}
    $\ce((\ck a) \cdot_{1} \ci)
    =
    \ce a$,
  \item\label{prop:CharRef7}
    $\ce((\ce a) \cdot_{1} (\ce b))
    =
    \ce (a \cdot_{1} b)$
  \end{enumerate}
  for each $a, b, c \in \underline{A}$.
\end{proposition}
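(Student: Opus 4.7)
The plan is to prove both implications by comparing the seven listed equations with the six clauses generating $\sim_1$ in Definition~\ref{def:quotient1}, together with the observation from Corollary~\ref{cor:EinSim} that $\ce a \sim_1 a$.

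For the forward direction, suppose $A$ is reflexive. Each of equations \ref{prop:CharRef1}--\ref{prop:CharRef6} comes from applying the implication $a \sim_1 b \metaimplies \ce a = \ce b$ to one of the six generating clauses of $\sim_1$ from Definition~\ref{def:quotient1}\eqref{eq:quoK}--\eqref{eq:quoEta}, respectively. For equation \ref{prop:CharRef7}, I would use Corollary~\ref{cor:EinSim} together with the fact that $\sim_1$ is a congruence on $A_1$: from $\ce a \sim_1 a$ and $\ce b \sim_1 b$ we get $(\ce a) \cdot_1 (\ce b) \sim_1 a \cdot_1 b$, and reflexivity then yields $\ce((\ce a) \cdot_1 (\ce b)) = \ce(a \cdot_1 b)$.

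For the backward direction, assume the seven equations hold. I would show $a \sim_1 b \metaimplies \ce a = \ce b$ by induction on the inductive generation of $\sim_1$ (using the five closure rules listed between Notation~\ref{not:CPM} and Proposition~\ref{prop:AppUniExt} applied to the relation of Definition~\ref{def:quotient1}). The six base cases are precisely equations \ref{prop:CharRef1}--\ref{prop:CharRef6}. Reflexivity, symmetry, and transitivity of $\ce a = \ce b$ are immediate from those properties of ordinary equality. The crucial step is the congruence clause: given $a \sim_1 b$ and $c \sim_1 d$ with $\ce a = \ce b$ and $\ce c = \ce d$, I must establish $\ce(a \cdot_1 c) = \ce(b \cdot_1 d)$. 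Here equation \ref{prop:CharRef7} is applied twice and supplies exactly what is needed, since
\[
  \ce(a \cdot_1 c) = \ce((\ce a) \cdot_1 (\ce c)) = \ce((\ce b) \cdot_1 (\ce d)) = \ce(b \cdot_1 d),
\]
where the middle equality uses the inductive hypotheses under the congruence of $=$.

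The main obstacle, and the reason a seventh equation beyond the six obvious ones is needed, is precisely this congruence step in the backward direction: without \ref{prop:CharRef7}, one has no way to propagate the pointwise equalities $\ce a = \ce b$ and $\ce c = \ce d$ through application in $A_1$. Everything else in the argument is routine unwinding of the inductive definitions of $\sim_1$ and the seven equations.
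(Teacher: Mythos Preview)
Your proposal is correct and follows essentially the same approach as the paper. The only cosmetic difference is in the backward direction: where you carry out the induction on the derivation of $\sim_1$ by hand, the paper packages the same argument by defining an auxiliary applicative structure $A_{\ce}$ on $\{\ce a \mid a \in \underline{A}\}$ with $\ce a \cdot_{\ce} \ce b = \ce((\ce a)\cdot_1(\ce b))$, observing that $a \mapsto \ce a$ is a homomorphism $A_1 \to A_{\ce}$ by \eqref{prop:CharRef7} preserving the generators by \eqref{prop:CharRef1}--\eqref{prop:CharRef6}, and then invoking Proposition~\ref{prop:AppUniExt}; since that proposition is itself proved by the very induction you wrote out, the two arguments are the same.
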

\begin{proof}
  \noindent ($\Rightarrow$) 
  Suppose that $A$ is reflexive. Then
  \eqref{prop:CharRef1}--\eqref{prop:CharRef6}
  hold by Definition \ref{def:quotient1}. Moreover, we have
  $\ce a \cdot_{1} \ce b \sim_{1} a \cdot_{1} b$ by 
  Corollary~\ref{cor:EinSim}, from which \eqref{prop:CharRef7} follows.

 \noindent($\Leftarrow$)
  Suppose that $A$ satisfies
  \eqref{prop:CharRef1}--\eqref{prop:CharRef7} above. Let $A_{1} =
  (\underline{A}, \cdot_{1})$ be the applicative structure defined
  as in Definition \ref{def:quotient1}, and define another applicative structure
  $A_{\ce} = (A^{\ast}, \cdot_{\ce})$ on $A^{\ast} = \left\{ \ce a
    \mid a \in \underline{A} \right\}$ by
    $
    \ce a \cdot_{\ce} \ce b = \ce( (\ce a) \cdot_{1} (\ce b)).
    $
  Then, a function $f \colon \underline{A} \to
  \underline{A}$ given by  $f(a) = \ce a$ is a homomorphism from
  $A_{1}$ to $A_{\ce}$ by \eqref{prop:CharRef7}, which moreover
  preserves relations \eqref{eq:quoK}--\eqref{eq:quoEta} of Definition
  \ref{def:quotient1} by 
  \eqref{prop:CharRef1}--\eqref{prop:CharRef6} above. Thus,  
  $f$ extends uniquely to a homomorphism 
  from $\bar{A}_1$ to $A_{\ce}$. In particular, $a \sim_{1} b$
  implies $\ce a = \ce b$. Hence $A$ is reflexive.
\end{proof}
\begin{remark}
  The condition \eqref{prop:CharRef7} in Proposition \ref{prop:CharRef}
  does not have a counterpart in Definition \ref{def:quotient1}, but
  it seems to be needed here. This is because $\ce$ is required to preserve 
  the congruence relation generated by \eqref{eq:quoK}--\eqref{eq:quoEta}
  of Definition \ref{def:quotient1}, and not the (weaker) equivalence
  relation.
\end{remark}

Since the seven equations of Proposition \ref{prop:CharRef} are 
simple universal sentences, we have the following corollary.  Here, a
\emph{substructure} of a combinatory pre-model $A$ is a subset $B
\subseteq \underline{A}$ which contains $\ck$, $\cs$, $\ci$, $\ce$,
and is closed under the application of $A$.
\begin{corollary}
  \label{prop:ReflexiveRetract}
  Reflexivity is closed under substructures and homomorphic images.
  In particular, it is closed under retracts.
\end{corollary}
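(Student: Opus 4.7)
The plan is to exploit the equational characterisation from Proposition \ref{prop:CharRef}: reflexivity is equivalent to satisfying seven equations $t_i(a,b,c) = u_i(a,b,c)$ in the signature $(\cdot, \ck, \cs, \ci, \ce)$, where each $t_i, u_i$ is a term built from application, the four primitive constants, and the free variables $a, b, c$ (note that the derived symbols $\cdot_1$, $\ck_1$, $\cs_1$, $\ci_1$, $\ce_1$ expand to such terms by Definition \ref{def:quotient1}). Thus reflexivity is axiomatised by purely universal (in fact, equational) sentences, and the corollary becomes an instance of the general fact that universally axiomatised classes are closed under substructures while equational classes are closed under homomorphic images.

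For substructures, let $B$ be a substructure of a reflexive $A$. Given $a, b, c \in B$, the equation $t_i(a,b,c) = u_i(a,b,c)$ holds in $\underline{A}$ because $A$ is reflexive; since $B$ contains $\ck, \cs, \ci, \ce$ and is closed under $\cdot$, both sides are computed identically in $B$, so the equation holds in $B$. Hence $B$ satisfies the seven equations and is reflexive. For homomorphic images, let $f \colon A \to B$ be a surjective homomorphism with $A$ reflexive. Given $a', b', c' \in \underline{B}$, pick preimages $a, b, c \in \underline{A}$ under $f$; reflexivity of $A$ gives $t_i(a,b,c) = u_i(a,b,c)$, and applying $f$ on both sides, together with the fact that $f$ commutes with application and preserves $\ck, \cs, \ci, \ce$, yields $t_i(a',b',c') = u_i(a',b',c')$ in $B$. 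Hence $B$ is reflexive.

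Finally, if $B$ is a retract of $A$ via $s \colon B \to A$ and $r \colon A \to B$ with $r \circ s = \id_B$, then $r$ is surjective (for every $b \in \underline{B}$, $r(s(b)) = b$), so $B$ is a homomorphic image of $A$ and the previous case applies. There is no real obstacle here — the only thing to check carefully is that the expansions of $\cdot_1$ and the subscript-$1$ constants are indeed terms in the base signature so that the seven sentences are genuinely universal in $(\cdot, \ck, \cs, \ci, \ce)$; once this is observed, the three parts are essentially one-line arguments.
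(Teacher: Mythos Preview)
Your proof is correct and follows exactly the paper's approach: the paper simply observes that the seven equations of Proposition~\ref{prop:CharRef} are universal (indeed equational) sentences, from which closure under substructures and homomorphic images is immediate. Your write-up just spells out the standard preservation arguments that the paper leaves implicit.
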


As a consequence, reflexivity is preserved under the addition of further
indeterminates to a  polynomial algebra.
\begin{proposition}
  \label{lem:N1reflexNreflex}
    For any combinatory pre-model $A$, 
    if $A[\vx]$ is reflexive, then $A[\vx_1,\dots,\vx_n]$ is reflexive for each $n \in
   \Nat$.
\end{proposition}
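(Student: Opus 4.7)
The plan is to combine two results that have already been established in the excerpt. Proposition \ref{prop:Retract1} tells us that $A[\vx_1,\dots,\vx_n]$ is a retract of $A[\vx]$ for every $n \in \Nat$, and Corollary \ref{prop:ReflexiveRetract} tells us that reflexivity is closed under retracts (since it is defined by a finite set of universal sentences, by Proposition \ref{prop:CharRef}).

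Thus, assuming $A[\vx]$ is reflexive, I would simply invoke Proposition \ref{prop:Retract1} to obtain a section-retraction pair between $A[\vx_1,\dots,\vx_n]$ and $A[\vx]$, and then apply Corollary \ref{prop:ReflexiveRetract} to conclude that $A[\vx_1,\dots,\vx_n]$ is reflexive. There is no essential computation and no obstacle: the work has been done in the preceding two statements. One could equivalently argue directly by chasing one of the seven equations of Proposition \ref{prop:CharRef} through the retraction, but that would just be unfolding the proof of Corollary \ref{prop:ReflexiveRetract} in this particular case.

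The only minor point worth noting is the degenerate case $n = 0$, for which $A[\vx_1,\dots,\vx_n] = A$; here one uses the fact that $A$ itself is a retract of $A[\vx]$, which is already part of the proof of Proposition \ref{prop:Retract1} (via the homomorphism sending $\vx$ to $\ci$).
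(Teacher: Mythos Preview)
Your proposal is correct and matches the paper's own proof exactly: the paper simply writes ``Immediate from Corollary~\ref{prop:ReflexiveRetract} and Proposition~\ref{prop:Retract1}.'' Your additional remark about the degenerate case $n=0$ is a fair clarification but not strictly needed, as it is already covered by Proposition~\ref{prop:Retract1}.
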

\begin{proof}
  Immediate from Corollary \ref{prop:ReflexiveRetract} and
  Proposition \ref{prop:Retract1}.
\end{proof}
Note that we do not necessarily have that $A[\vx]$ is reflexive when
$A$ is. 

Next, we introduce an alternative abstraction mechanism for
combinatory pre-models. This abstraction mechanism allows us to see reflexivity
as a requirement that the equality of $A[\vx]$ be preserved by the
abstraction.
\begin{definition}
  \label{def:LambdaDag}
  Let $A$ be a combinatory pre-model.
  For each $t \in \mathcal{T}(\vX+\underline{A})$ and $\vx \in \vX$,
  define $\lambda^\dag \vx.t \in \mathcal{T}(\vX+\underline{A})$ inductively by
\begin{enumerate}
\item\label{def:LambdaDag1}
 $\lambda^\dag \vx.\vx = \ce\ci$,
\item\label{def:LambdaDag2}
 $\lambda^\dag \vx.a = \ce (\ck a)$,
\item\label{def:LambdaDag3}
 $\lambda^\dag \vx.(a, \vx) = \ce a$,
\item\label{def:LambdaDag4}
 $\lambda^\dag \vx.(t,u) = \ce (\cs (\lambda^\dag \vx.t)
 (\lambda^\dag \vx.u))$ otherwise,
\end{enumerate}
where $a \in \vX + \underline{A}$ such that $a \neq \vx$.
As in Lemma~\ref{lem:Beta}, we have $(\lambda^\dag \vx. t) u
\approx_{\vX} t[\vx/u]$ for each 
$t,u \in \mathcal{T}(\vX+\underline{A})$.
\end{definition}
First, we note that the choice of
abstraction mechanisms does
not affect the isomorphism between $A[\vx]$ and $\bar{A}_1$ in Theorem
\ref{thm:IsoPolylBar}.
\begin{proposition}
  \label{prop:LambdaDagLambdaAst}
  For any combinatory pre-model $A$, 
  we have $\lambda^\ast \vx. t \sim_{1} \lambda^\dag \vx. t$
  for each $t \in \mathcal{T}(\{\vx\}+\underline{A})$.
\end{proposition}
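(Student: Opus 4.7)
The plan is to proceed by structural induction on $t \in \mathcal{T}(\{\vx\}+\underline{A})$. The unifying observation is that, by Corollary~\ref{cor:EinSim}, every leading $\ce$ can be absorbed modulo $\sim_1$; since $\lambda^\dag$ differs from $\lambda^\ast$ essentially by the insertion of such a leading $\ce$ (plus the single $\eta$-style shortcut of clause~(3) of Definition~\ref{def:LambdaDag}), this corollary together with the fact that $\sim_1$ is a congruence on $A_1$ should do all the work.

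The atomic cases are immediate: if $t = \vx$ then $\lambda^\ast \vx.t = \ci$ and $\lambda^\dag \vx.t = \ce\ci \sim_1 \ci$ by Corollary~\ref{cor:EinSim}; if $t = a \in \underline{A}$ (so $a \neq \vx$) then $\lambda^\ast \vx.t = \ck a$ and $\lambda^\dag \vx.t = \ce(\ck a) \sim_1 \ck a$ similarly. For the inductive step $t = (u,v)$ falling under clause~(4) of Definition~\ref{def:LambdaDag}, we have $\lambda^\dag \vx.(u,v) = \ce(\cs(\lambda^\dag\vx.u)(\lambda^\dag\vx.v))$. Corollary~\ref{cor:EinSim} first strips the outer $\ce$, turning this into $\cs(\lambda^\dag\vx.u)(\lambda^\dag\vx.v) = (\lambda^\dag\vx.u) \cdot_1 (\lambda^\dag\vx.v)$, and then the inductive hypothesis together with the fact that $\sim_1$ is a congruence on $A_1$ replaces each $\lambda^\dag$ by $\lambda^\ast$, landing on $\cs(\lambda^\ast\vx.u)(\lambda^\ast\vx.v) = \lambda^\ast \vx.(u,v)$.

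The remaining subcase is the shortcut clause~(3) of Definition~\ref{def:LambdaDag}, namely $t = (a, \vx)$ with $a \in \underline{A}$. Here $\lambda^\dag \vx.t = \ce a$ while $\lambda^\ast \vx.t = \cs(\ck a)\ci = (\ck a) \cdot_1 \ci$, so one applies equation~\eqref{eq:quoEta} of Definition~\ref{def:quotient1} to get $(\ck a) \cdot_1 \ci \sim_1 a$ and then Corollary~\ref{cor:EinSim} to conclude $a \sim_1 \ce a$. I expect this subcase to be the main (and only genuine) obstacle, since it is the only place where a defining relation of $\sim_1$ other than $\ce a \sim_1 a$ is invoked; indeed, it is precisely the $\eta$-style equation~\eqref{eq:quoEta} that justifies the shortcut built into $\lambda^\dag$.
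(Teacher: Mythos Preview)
Your proof is correct, but it takes a different route from the paper's. The paper dispatches the proposition in one line by invoking Proposition~\ref{cor:CongrCorr}: since both abstraction mechanisms satisfy $\beta$, one has $(\lambda^\ast\vx.t)\vx \approx_\vx t \approx_\vx (\lambda^\dag\vx.t)\vx$, and the characterisation $a \sim_1 b \iff a\vx \approx_\vx b\vx$ immediately gives the result with no induction. Your argument instead proceeds by structural induction on $t$, peeling off the leading $\ce$ via Corollary~\ref{cor:EinSim}, using the congruence property of $\sim_1$ on $A_1$ in the compound case, and invoking the defining relation~\eqref{eq:quoEta} for the shortcut clause. This is longer but more explicit: it shows precisely which generating relations of $\sim_1$ account for each clause of $\lambda^\dag$, and in particular isolates~\eqref{eq:quoEta} as the relation that justifies the $\eta$-style shortcut. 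The paper's approach is slicker because Proposition~\ref{cor:CongrCorr} has already packaged all of this into a single equivalence; your approach would still go through even if one had only Corollary~\ref{cor:EinSim} and the raw definition of $\sim_1$ available.
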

\begin{proof}
  Immediate from Proposition~\ref{cor:CongrCorr}.
\end{proof}

Nevertheless, $\lambda^\dag$-abstraction enjoys some properties
which $\lambda^\ast$-abstraction need not. The following is crucial
for our development.
\begin{lemma}
  \label{lem:ReflE}
  If $A$ is a reflexive combinatory pre-model, then $\ce (\ce a) = \ce a$ for each $a \in
  \underline{A}$. In particular, we have $\ce (\lambda^{\dag}\vx.t) =
  \lambda^{\dag}\vx.t$ for each $t \in \mathcal{T}(\{\vx\}+\underline{A})$.
\end{lemma}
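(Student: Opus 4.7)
The plan is to derive both statements directly from reflexivity and the already-established Corollary \ref{cor:EinSim} ($\ce a \sim_1 a$), without invoking any new machinery.

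For the first claim, the idea is simply to feed $\ce a \sim_1 a$ into the definition of reflexivity. By Corollary \ref{cor:EinSim} we have $a \sim_{1} \ce a$, and Definition \ref{defreflexive} then yields $\ce a = \ce(\ce a)$, which is what we want. So the main inequality is a one-line consequence of the previous corollary once reflexivity is assumed.

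For the second claim, I would argue by inspection of the four clauses of Definition \ref{def:LambdaDag}. Every clause produces a term of the form $\ce(\cdot)$: namely $\ce\ci$, $\ce(\ck a)$, $\ce a$, and $\ce(\cs(\lambda^\dag \vx.t')(\lambda^\dag \vx.u'))$ respectively. Since $t \in \mathcal{T}(\{\vx\}+\underline{A})$ means that the only indeterminate occurring in $t$ is $\vx$, the term $\lambda^\dag \vx.t$ is closed and may be identified (via the interpretation of Notation \ref{rem:ClosedTerm}) with some element of the form $\ce b \in \underline{A}$. Applying the first claim gives $\ce(\lambda^\dag \vx.t) = \ce(\ce b) = \ce b = \lambda^\dag \vx.t$.

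I do not foresee any real obstacle: both parts should be short. The only thing one should be careful about is the implicit passage between the syntactic term $\lambda^\dag \vx.t$ and its interpretation as an element of $\underline{A}$, but the conventions laid out in Notation \ref{rem:ClosedTerm} already handle this identification, so one can legitimately write $\lambda^\dag \vx.t = \ce b$ for some $b \in \underline{A}$ and apply the first half of the lemma.
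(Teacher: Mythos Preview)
Your proposal is correct and matches the paper's approach exactly: the paper's proof is the one-liner ``By Corollary~\ref{cor:EinSim} and the definition of reflexivity,'' which is precisely your argument for the first claim, and the ``in particular'' clause is handled just as you describe, by observing that every defining clause of $\lambda^\dag$ outputs a term of the form $\ce(\cdot)$.
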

\begin{proof}
  By Corollary \ref{cor:EinSim} and the definition of reflexivity. 
\end{proof}
We can now characterise reflexivity in terms of $\lambda^\dag$-abstraction.
The proposition below says that reflexivity amounts to the
requirement that the mapping $t \mapsto \lambda^\dag \vx. t$ be
well-defined on the polynomials with one indeterminate.
\begin{proposition}
  \label{lem:ReflexLambda}
  A combinatory pre-model $A$ is reflexive if and only if 
  \[
    t \approx_\vx u \implies \lambda^\dag \vx. t = \lambda^\dag \vx. u
  \]
  for each $t, u \in \mathcal{T}(\{\vx\}+\underline{A})$.
\end{proposition}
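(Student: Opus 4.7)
The plan is to prove the two directions separately, using the isomorphism $A[\vx] \cong \bar{A}_1$ from Theorem \ref{thm:IsoPolylBar} together with Lemma \ref{lem:ReflE} and the fact (via Definition \ref{def:LambdaDag}\eqref{def:LambdaDag3}) that $\lambda^\dag \vx.(a\vx) = \ce a$ for each $a \in \underline{A}$.

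For the forward direction, assume $A$ is reflexive and let $t \approx_\vx u$. The key observation is that the isomorphism $\overline{f} \colon A[\vx] \to \bar{A}_1$ constructed in the proof of Theorem \ref{thm:IsoPolylBar} is given by $\overline{f}(t) = \lambda^\ast \vx. t$, so $t \approx_\vx u$ immediately yields $\lambda^\ast \vx. t \sim_1 \lambda^\ast \vx. u$. Applying Proposition \ref{prop:LambdaDagLambdaAst} on both sides, we obtain $\lambda^\dag \vx. t \sim_1 \lambda^\dag \vx. u$. Reflexivity then gives $\ce(\lambda^\dag \vx.t) = \ce(\lambda^\dag \vx. u)$, and the second part of Lemma \ref{lem:ReflE} lets us strip the outer $\ce$ from both sides, delivering the desired equality $\lambda^\dag \vx. t = \lambda^\dag \vx. u$.

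For the backward direction, assume that $\lambda^\dag \vx.$ respects $\approx_\vx$. To verify reflexivity, take $a, b \in \underline{A}$ with $a \sim_1 b$. By Proposition \ref{cor:CongrCorr} this means $a\vx \approx_\vx b\vx$. Now I would simply compute $\lambda^\dag$ on both sides: since $a, b \in \underline{A}$ are distinct from $\vx \in \vX$, clause \eqref{def:LambdaDag3} of Definition \ref{def:LambdaDag} applies and gives $\lambda^\dag \vx.(a\vx) = \ce a$ and $\lambda^\dag \vx.(b\vx) = \ce b$. By hypothesis these are equal, so $\ce a = \ce b$ as required.

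I do not anticipate a real obstacle here; both directions are essentially one-line arguments once one has the right dictionary between $A[\vx]$, $\bar{A}_1$, and the two abstraction schemes. The only small subtlety is making sure that when applying clause \eqref{def:LambdaDag3} in the converse direction one reads the syntactic term $(a,\vx)$ rather than the evaluated application, but this is precisely the notational convention noted after Definition \ref{def:PolyAlg}, and the equality $\ce a = \ce b$ that one obtains is, of course, an equality of elements of $\underline{A}$.
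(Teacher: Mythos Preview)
Your proposal is correct and follows essentially the same route as the paper's proof: for the forward direction the paper also passes from $t \approx_\vx u$ to $\lambda^\dag \vx. t \sim_1 \lambda^\dag \vx. u$ (citing Proposition~\ref{prop:LambdaDagLambdaAst}, which you unpack via the isomorphism of Theorem~\ref{thm:IsoPolylBar}), then applies reflexivity and Lemma~\ref{lem:ReflE}; for the converse the paper simply says ``immediate from Definition~\ref{def:LambdaDag}\eqref{def:LambdaDag3}'', which is exactly your argument spelled out.
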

\begin{proof}
  Suppose that $A$ is reflexive, and let $t \approx_{\vx} u$.
  Then 
  $\lambda^\dag \vx. t \sim_1 \lambda^\dag \vx. u$ by Proposition
  \ref{prop:LambdaDagLambdaAst},
  and so $\ce (\lambda^\dag \vx. t) = \ce (\lambda^\dag \vx. u)$
  by reflexivity.
  By  Lemma \ref{lem:ReflE}, we obtain
  $\lambda^\dag \vx. t = \lambda^\dag \vx. u$.
  The converse is immediate from Definition
  \ref{def:LambdaDag}\eqref{def:LambdaDag3}.
\end{proof}

The reflexivity of polynomial algebras admits similar
characterisations.
\begin{corollary}
  \label{cor:ReflexPolynomial}
  For each $n \in \Nat$, the following are equivalent:
  \begin{enumerate}
    \item $A[\vx_1,\dots,\vx_n]$ is reflexive.

    \item \label{lem:ReflexPolynomial2}
    $t \approx_{\vx_1\dots\vx_{n+1}} u 
    \implies
    \lambda^\dag \vx_{n+1}. t \approx_{\vx_1\dots\vx_n} \lambda^\dag
    \vx_{n+1}. u$
  for each $t, u \in \mathcal{T}(\left\{ \vx_1,\dots, \vx_{n+1} \right\} + \underline{A})$.

    \item \label{lem:ReflexPolynomial3}
      $t \vx_{n+1} \approx_{\vx_1\dots\vx_{n+1}} u  \vx_{n+1}
    \implies
    \ce t \approx_{\vx_1\dots\vx_n} \ce u$
  for each $t, u \in \mathcal{T}(\left\{ \vx_1,\dots, \vx_{n} \right\} + \underline{A})$.
  \end{enumerate}
\end{corollary}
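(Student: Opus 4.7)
The plan is to transfer Proposition~\ref{lem:ReflexLambda} and Lemma~\ref{lem:reflexive} to the combinatory pre-model $B := A[\vx_1,\dots,\vx_n]$, exploiting the isomorphism $A[\vx_1,\dots,\vx_{n+1}] \cong B[\vx]$ of Remark~\ref{rem:FunctorT} (under which $\vx_{n+1}$ is identified with $\vx$).

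As a bookkeeping device I would introduce the tree-preserving translation
$\phi \colon \mathcal{T}(\{\vx_1,\dots,\vx_{n+1}\} + \underline{A}) \to \mathcal{T}(\{\vx\} + \underline{B})$
sending $\vx_{n+1} \mapsto \vx$, each $\vx_i$ ($i \leq n$) to $[\vx_i]_B \in \underline{B}$, and each $a \in \underline{A}$ to $[a]_B \in \underline{B}$, extended homomorphically on applications. Straightforward inductions yield: (a) $t \approx_{\vx_1\dots\vx_{n+1}} u$ if and only if $\phi(t) \approx_\vx \phi(u)$ in $B[\vx]$; and (b) since $\lambda^\dag$ is defined by the same syntactic clauses on any combinatory pre-model and $\phi$ preserves both tree structure and the distinguished constants, $\phi$ intertwines the two abstractions:
$\phi(\lambda^\dag \vx_{n+1}. t) = \lambda^\dag \vx. \phi(t)$ as terms over $\underline{B}$.

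Using (a), the equivalence (1)~$\Leftrightarrow$~(3) is then immediate from Lemma~\ref{lem:reflexive} applied to $B$: reflexivity of $B$ amounts to $a \vx \approx_\vx b \vx$ implying $\ce_B a = \ce_B b$ for $a, b \in \underline{B}$, and writing $a = [t]_B, b = [u]_B$ with $t, u \in \mathcal{T}(\{\vx_1,\dots,\vx_n\} + \underline{A})$ and translating via $\phi$, this becomes exactly condition~(3). For (1)~$\Rightarrow$~(2), Proposition~\ref{lem:ReflexLambda} applied to $B$ combined with (a) and (b) does the work directly.

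The delicate step is (2)~$\Rightarrow$~(1). Specializing (2) with $T = (t, \vx_{n+1})$ and $U = (u, \vx_{n+1})$ for $t, u \in \mathcal{T}(\{\vx_1,\dots,\vx_n\} + \underline{A})$ satisfying $t \vx_{n+1} \approx_{\vx_1\dots\vx_{n+1}} u \vx_{n+1}$: when $t$ and $u$ are atoms of $\{\vx_1,\dots,\vx_n\} + \underline{A}$, clause~\eqref{def:LambdaDag3} of Definition~\ref{def:LambdaDag} gives $\lambda^\dag \vx_{n+1}. T = \ce t$ and similarly for $U$, so (2) delivers $\ce t \approx_{\vx_1\dots\vx_n} \ce u$ at once. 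For compound $t$ or $u$, however, clause~\eqref{def:LambdaDag4} applies instead and produces a structurally more complex expression in place of $\ce t$, so the reduction to $\ce t \approx \ce u$ is not direct. This is the main technical obstacle; closing it will likely require combining the beta property $(\lambda^\dag \vx_{n+1}. T)\, \vx_{n+1} \approx T$ with Proposition~\ref{prop:LambdaDagLambdaAst} (relating $\lambda^\dag$ to $\lambda^\ast$) to unwind the compound case and thereby recover the full reflexivity of $B$ from the atomic case.
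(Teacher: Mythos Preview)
Your overall strategy is the same as the paper's: both arguments work through the isomorphism $A[\vx_1,\dots,\vx_{n+1}]\cong B[\vx]$ with $B=A[\vx_1,\dots,\vx_n]$, establish the commutation $\phi(\lambda^\dag\vx_{n+1}.t)=\lambda^\dag\vx.\phi(t)$ (your (b), the paper's equation for $\psi$), and then transport Lemma~\ref{lem:reflexive} and Proposition~\ref{lem:ReflexLambda}. Your handling of $(1)\Leftrightarrow(3)$ and $(1)\Rightarrow(2)$ is correct and matches the paper.

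You have put your finger on a genuine subtlety in $(2)\Rightarrow(1)$. The paper dispatches this direction by writing ``since $\psi$ is surjective'' and invoking Proposition~\ref{lem:ReflexLambda}; but $\psi$ (like your $\phi$) is a homomorphism of free applicative structures, and as such it is \emph{not} surjective onto $\mathcal{T}(\{\vx\}+\underline{B})$: an element $b\in\underline{B}$ represented only by compound $A$-terms is an atom of the target that is never hit. What \emph{is} surjective is the composite with the quotient to $B[\vx]$, but that alone does not let you conclude $\lambda^\dag\vx.s=\lambda^\dag\vx.s'$ for arbitrary $s\approx_\vx s'$, because $\lambda^\dag\vx$ is sensitive to the syntactic form of its argument (clause~\eqref{def:LambdaDag3} versus clause~\eqref{def:LambdaDag4}, exactly as you diagnose). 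So your worry is legitimate, and the paper's own proof glosses over the same point.

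Your proposed fix via the $\beta$-property and Proposition~\ref{prop:LambdaDagLambdaAst} is too vague as stated; in particular, Proposition~\ref{prop:LambdaDagLambdaAst} only gives $\lambda^\ast\!\sim_1\!\lambda^\dag$, which is of no help before reflexivity of $B$ is established. A cleaner way to close the cycle is to avoid $(2)\Rightarrow(1)$ altogether and instead prove $(3)\Rightarrow(2)$ directly: given $T\approx_{n+1}U$, set $t=\lambda^\dag\vx_{n+1}.T$ and $u=\lambda^\dag\vx_{n+1}.U$; then $t\vx_{n+1}\approx_{n+1}T\approx_{n+1}U\approx_{n+1}u\vx_{n+1}$ by $\beta$, so $(3)$ yields $\ce t\approx_n\ce u$. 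Since every $\lambda^\dag$-output is literally of the form $(\ce,r)$, applying $(3)$ once more to $\ce r$ versus $r$ gives $\ce(\ce r)\approx_n\ce r$, hence $\ce t\approx_n t$ and $\ce u\approx_n u$, and therefore $t\approx_n u$. Combined with your clean $(1)\Leftrightarrow(3)$ and the easy $(1)\Rightarrow(2)$, this closes the equivalence without needing surjectivity of $\phi$.
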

\begin{proof}
  By Remark \ref{rem:FunctorT}, it suffices to show that items
  \ref{lem:ReflexPolynomial2}
  and 
  \ref{lem:ReflexPolynomial3}
    are equivalent to the reflexivity of $T^n A$.
    To this end, consider the following commutative diagram:
  \[
    \xymatrix@C-2ex@M+2pt{
      T^{n+1}A
      & A[\vx_1,\dots,\vx_{n+1}] \ar [l]^-{\cong}_-{h_{n+1}} 
      & \\
      \mathcal{T}(\left\{ \vx \right\} + \underline{T^n A})
      \ar@{>>} [u]^(.4){\pi}
      & 
      &  \mathcal{T}(\left\{ \vx_1,\dots,\vx_{n+1} \right\} +
      \underline{A}) \ar@{>>} [ll]_(.4){\psi} 
      \ar@{>>} @(u,l) [ul]_-{\pi} 
      \\
      T^n A \ar@{^{(}->} [u]
      \ar  @/^40pt/  [uu]^{\eta_{T^{n}A}}
      & A[\vx_1,\dots,\vx_{n}] \ar [l]^{\cong}_{h_{n}} \ar@{^{(}->} [uu] |\hole
      & \\
      A \ar@{^{(}->} [r] \ar [u]^{\eta_{T^{n-1}A} \circ  \dots \circ \eta_{A}}
      & 
      \mathcal{T}(\left\{ \vx_1,\dots,\vx_n \right\} + \underline{A})
      \ar@{>>} [u]_(.4){\pi} 
      \ar@{>>} [ul]^-{\varphi}
      \ar@{^{(}->} @(dr,d) [uur]
      & 
    }
  \]
  Here 
  \begin{itemize}
    \item
    $h_{k} \colon A[\vx_1,\dots,\vx_k] \to T^{k}
    A$ is the isomorphism of \eqref{eq:IsoPolynomialTA} for each $k
    \in \Nat$;

    \item 
      $\varphi \colon \mathcal{T}(\left\{ \vx_1,\dots,\vx_n \right\}
      + \underline{A}) \to T^{n}A$ is the unique extension of
      $\eta_{T^{n-1}A} \circ \dots \circ \eta_{A}$ such that $\varphi(\vx_i) = 
      ( \eta_{T^{n-1}A} \circ \dots \circ \eta_{T^iA} )(\vx)$ for each $i \leq n$;

    \item
      $\psi \colon \mathcal{T}(\left\{ \vx_1,\dots,\vx_{n+1} \right\}
      + \underline{A}) \to \mathcal{T}(\left\{ \vx \right\} +
      \underline{T^{n}A})$ is the unique extension of $\varphi$ such
      that $\psi(\vx_{n+1}) = \vx$;

    \item each $\pi$ is the quotient map with respect to the
      congruence relation of Definition \ref{def:PolyAlg};

    \item the other unnamed maps are the natural inclusions.
  \end{itemize}
  By induction on the complexity of terms,
  one can show that 
  \begin{equation}
    \label{eq:lem:ReflexPolynomial2}
    \varphi(\lambda^\dag \vx_{n+1}. t) = \lambda^\dag \vx. \psi(t)
  \end{equation}
  for each $t \in \mathcal{T}(\left\{ \vx_1,\dots, \vx_{n+1} \right\}
  + \underline{A})$. 
  Then, since $\psi$ is surjective,
    $T^n A$ is reflexive if and only if
   %
   %
   %
  \begin{equation}
    \label{eq:lem:ReflexPolynomial3}
    \psi(t) \approx_{\vx} \psi(u) \implies 
    \lambda^{\dag} \vx. \psi(t) = \lambda^{\dag} \vx. \psi(u)
  \end{equation}
  for each $t,u \in \mathcal{T}(\left\{ \vx_1,\dots, \vx_{n+1} \right\}
  + \underline{A})$ by Proposition \ref{lem:ReflexLambda}.
  By equation \eqref{eq:lem:ReflexPolynomial2} and
  the commutativity of 
  the above diagram, \eqref{eq:lem:ReflexPolynomial3} is equivalent to 
  item \eqref{lem:ReflexPolynomial2}.

  By the similar argument using Lemma \ref{lem:reflexive}, $T^n A$ is reflexive if and only if
  \begin{equation*}
    \varphi(t)\vx \approx_{\vx} \varphi(u)\vx \implies 
    \ce \varphi(t) = \ce \varphi(u)
  \end{equation*}
  for each $t,u \in \mathcal{T}(\left\{ \vx_1,\dots, \vx_{n} \right\}
  + \underline{A})$, which is equivalent to 
  \begin{equation}
    \label{eq:lem:ReflexPolynomial4}
    \psi(t \vx_{n+1}) \approx_{\vx} \psi(u\vx_{n+1}) \implies 
    \varphi(\ce t) = \varphi(\ce u)
  \end{equation}
  for each $t,u \in \mathcal{T}(\left\{ \vx_1,\dots, \vx_{n} \right\}
  + \underline{A})$. By the commutativity of 
  the above diagram, \eqref{eq:lem:ReflexPolynomial4} is equivalent to item \eqref{lem:ReflexPolynomial3}.
\end{proof}

To close this section, we show that the reflexivity of $A$ allows us
to represent $A[\vx]$ by a structure on the fixed-points of $\ce$. 
The reader should recall some notations from Definition
\ref{def:quotient1}.
   %
   %
   %
   %
   %
   %
\begin{definition}
\label{def:B}
  For a reflexive combinatory pre-model $A$, define 
  \[
  A^\ast = \left\{ a \in \underline{A} \mid \ce a = a \right\},
  \]
  or equivalently $A^\ast = \left\{ \ce a \mid a \in \underline{A} \right\}$ (cf.\ Lemma \ref{lem:ReflE}).
\end{definition}

  Now, define a combinatory pre-model structure on $A^\ast$ by
  \[
    A^\ast = (A^\ast, \diamond, \ce\ck_1, \ce\cs_1, \ce\ci_1, \ce\ce_1),
  \]
  where $a \diamond b =  \ce(a\cdot_1 b)$ for each $a,b  \in A^{\ast}$.
\begin{proposition}
  \label{prop:BEquivAx}
  If $A$ is a reflexive combinatory pre-model, then 
  $A^\ast \simeq \bar{A}_1$, and hence $A^\ast \simeq A[\vx]$.
\end{proposition}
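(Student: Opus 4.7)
The plan is to exhibit an explicit isomorphism between $\bar{A}_1$ and $A^{\ast}$ using the map $a \mapsto \ce a$; the second conclusion $A^\ast \simeq A[\vx]$ then follows by composing with the isomorphism of Theorem~\ref{thm:IsoPolylBar}.

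More precisely, I would define $\Phi \colon \bar{A}_1 \to A^\ast$ by $\Phi([a]_{\sim_1}) = \ce a$, and $\Psi \colon A^\ast \to \bar{A}_1$ by $\Psi(a) = [a]_{\sim_1}$. The map $\Phi$ is well defined by reflexivity (Definition~\ref{defreflexive}), and lands in $A^\ast$ because $\ce(\ce a) = \ce a$ by Lemma~\ref{lem:ReflE}. The map $\Psi$ requires no additional conditions.

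The verifications that both maps are homomorphisms all hinge on Corollary~\ref{cor:EinSim} ($\ce c \sim_1 c$) together with the fact that $\sim_1$ is a congruence on $A_1$. For $\Phi$: given $a,b \in \underline{A}$, the congruence property yields $a \cdot_1 b \sim_1 \ce a \cdot_1 \ce b$, so reflexivity gives
\[
  \Phi([a]_{\sim_1}) \diamond \Phi([b]_{\sim_1})
  = \ce\bigl( \ce a \cdot_1 \ce b\bigr)
  = \ce(a \cdot_1 b)
  = \Phi\bigl([a \cdot_1 b]_{\sim_1}\bigr),
\]
and $\Phi(\ck_1) = \ce \ck_1$ by definition, similarly for $\cs_1,\ci_1,\ce_1$. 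For $\Psi$: given $a,b \in A^\ast$, $\Psi(a \diamond b) = [\ce(a \cdot_1 b)]_{\sim_1} = [a \cdot_1 b]_{\sim_1} = \Psi(a) \qna_1 \Psi(b)$ since $\ce(a \cdot_1 b) \sim_1 a \cdot_1 b$ by Corollary~\ref{cor:EinSim}; preservation of each constant of $A^\ast$ is analogous, e.g.\ $\Psi(\ce \ck_1) = [\ce \ck_1]_{\sim_1} = [\ck_1]_{\sim_1}$.

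Finally, for $a \in A^\ast$ we have $\Phi(\Psi(a)) = \ce a = a$ by the defining property of $A^\ast$, and for $[a]_{\sim_1} \in \bar{A}_1$ we have $\Psi(\Phi([a]_{\sim_1})) = [\ce a]_{\sim_1} = [a]_{\sim_1}$ again by Corollary~\ref{cor:EinSim}. So $\Phi$ and $\Psi$ are mutually inverse homomorphisms. I do not expect a serious obstacle: the only subtle point is making sure that the image of $\Phi$ actually lands in $A^\ast$ and that $\Phi$ is well defined on equivalence classes, both of which rely on reflexivity via Lemma~\ref{lem:ReflE} and Corollary~\ref{cor:EinSim}.
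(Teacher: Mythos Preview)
Your proposal is correct and follows essentially the same approach as the paper: both use the map $[a]_{\sim_1}\mapsto \ce a$ from $\bar A_1$ to $A^\ast$, with well-definedness coming from reflexivity and the homomorphism check reducing to $\ce(\ce a\cdot_1\ce b)=\ce(a\cdot_1 b)$. The paper simply cites the construction already carried out in the proof of Proposition~\ref{prop:CharRef} and asserts the resulting map is an isomorphism, whereas you spell out the inverse $\Psi$ and the mutual-inverse verification explicitly; your version is a bit more self-contained but not substantively different.
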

\begin{proof}
  Let $f \colon A \to A^\ast$ be a function defined by $f(a) = \ce a$.
  As the proof of Proposition \ref{prop:CharRef} shows, 
  $f$ extends uniquely to an isomorphism $\bar{f} \colon \bar{A}_1 \to A^\ast$
  of applicative structures. By the very definition of
  $\bar{A}_1$ and $A^\ast$, $\bar{f}$ is a homomorphism of
  combinatory pre-models. 
\end{proof}

\begin{remark}
  \label{rem:Krivine}
Constructions similar to $A^\ast$ have appeared in the
literature~\cite{KrivineLambda,FreydCombinator,Selinger02}.  For
example, Krivine \cite[Section 6.3]{KrivineLambda} defined an applicative
structure $B = (A^*, \qa, \ck\ck, \ck\cs)$ with $a \qa b = \cs a b$
for a combinatory algebra $A = (\underline{A}, \cdot, \ck,\cs)$
under the slightly stronger condition than reflexivity.
His condition consists of \eqref{prop:CharRef1},
\eqref{prop:CharRef2}, and 
\eqref{prop:CharRef5} of Proposition \ref{prop:CharRef} without $\ce$
in front of both sides of the equations,
together with the following weak form of stability:%
\footnote{Krivine defined $\ce$ by $\ce = \lambda^\ast \vx\vy.\vx\vy$,
which makes \eqref{prop:CharRef6} of Proposition \ref{prop:CharRef}
(without $\ce$ in front of the left side) superfluous. Moreover,
under Krivine's condition, the equation \eqref{prop:CharRef7}
is derivable from the other equations. In short, Krivine's condition
consists of instantiations of three of Curry's axioms for lambda
algebras (cf.\ Definition \ref{def:LambdaAlg}) together with \eqref{eq:WStability}.
}
\begin{equation}
  \label{eq:WStability}
  \forall a,b \in \underline{A} \left[ \ce(\ck a) = \ck a \mathrel{\&} \ce(\cs a b) =  \cs a
  b \right].
\end{equation}
It is clear that the structures $A^\ast$ and $B$ coincide in Krivine's
context when one ignores $\ci$ and $\ce$; in this case, $B$ is
isomorphic to $A[\vx]$. The same representation of $A[\vx]$ as $B
$ can be found in Selinger~\cite[Proposition~4]{Selinger02} in the
case where $A$ is a lambda algebra.
In view of this, Proposition~\ref{prop:BEquivAx} generalises
the construction  of the previous works.
\end{remark}

\section{Strong reflexivity}\label{sec:AlgCombModel}
In this section, we introduce a stronger notion of reflexivity, which
can be seen as an algebraic analogue of the Meyer--Scott axiom
for combinatory models.
In the following, the conventions of Notation \ref{not:CPM} and
Notation~\ref{rem:ClosedTerm} still apply.
\begin{definition}
  \label{def:AlgCombMod}
  A combinatory pre-model 
  $A = (\underline{A},\mathbin{\cdot},\ck,\cs,\ci,\ce)$ 
  is \emph{strongly reflexive} if $A[\vx]$ is reflexive.
\end{definition}
We first note the following.
\begin{lemma}
  \label{lem:AlgCombMod}
  A combinatory pre-model $A$ is strongly reflexive if and only if
  $A[\vx_1,\dots,\vx_n]$ is reflexive for each $n \in \Nat$.
\end{lemma}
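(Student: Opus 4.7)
The plan is to observe that this lemma is essentially a reformulation of Proposition~\ref{lem:N1reflexNreflex} combined with the definition of strong reflexivity, so both directions should be very short.

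For the forward direction, I would assume $A$ is strongly reflexive, which by Definition~\ref{def:AlgCombMod} means $A[\vx]$ is reflexive. I then invoke Proposition~\ref{lem:N1reflexNreflex} directly: since $A[\vx]$ is reflexive, $A[\vx_1,\dots,\vx_n]$ is reflexive for every $n \in \Nat$. No additional work is needed here.

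For the backward direction, I would simply instantiate the hypothesis at $n = 1$. Under the naming convention $\vx = \vx_1$, we have $A[\vx_1] = A[\vx]$, so the hypothesis gives that $A[\vx]$ is reflexive, which is precisely the definition of strong reflexivity.

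There is no substantive obstacle: all the real work was done in Proposition~\ref{lem:N1reflexNreflex}, whose proof in turn relied on the retract result Proposition~\ref{prop:Retract1} and the closure of reflexivity under retracts (Corollary~\ref{prop:ReflexiveRetract}). The only thing to be mildly careful about is the case $n = 0$, where $A[\vx_1,\dots,\vx_n]$ should be read as $A$ itself; this case is subsumed by Proposition~\ref{lem:N1reflexNreflex} as stated, so nothing extra is required.
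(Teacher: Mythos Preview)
Your proposal is correct and matches the paper's own proof, which simply cites Proposition~\ref{lem:N1reflexNreflex}. You have merely made explicit the trivial backward direction (instantiating at $n=1$) that the paper leaves implicit.
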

\begin{proof}
  By Proposition \ref{lem:N1reflexNreflex}.
\end{proof}
In particular, strong reflexivity implies reflexivity.
From the above lemma, we obtain the following characterisation. 
\begin{proposition}
  \label{cor:AlgCombModelLamdaDag}
  The following are equivalent for a combinatory pre-model $A$:
   \begin{enumerate}
     \item \label{eq:AlgCombModelLamdaDag1} $A$ is strongly reflexive.

     \item \label{eq:AlgCombModelLamdaDag2}
     $
     t \vx \approx_{\vX} u \vx
     \implies
     \ce t \approx_{\vX} \ce u
     $
   for each $\vx \notin \FV(tu)$ and $t,u \in \mathcal{T}(\vX + \underline{A})$.

     \item \label{eq:AlgCombModelLamdaDag3}
     $
     t \approx_{\vX} u
     \implies
     \lambda^\dag \vx. t \approx_{\vX} \lambda^\dag \vx. u
     $
   for each $\vx \in \vX$ and $t,u \in \mathcal{T}(\vX +
   \underline{A})$.
   \end{enumerate}
\end{proposition}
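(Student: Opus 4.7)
The plan is to prove the equivalence in two pieces, (1) $\Leftrightarrow$ (3) and (1) $\Leftrightarrow$ (2), both obtained by routing through Corollary \ref{cor:ReflexPolynomial}. By Lemma \ref{lem:AlgCombMod}, (1) is equivalent to the assertion that $A[\vx_1, \dots, \vx_n]$ is reflexive for every $n \in \Nat$, and items \eqref{lem:ReflexPolynomial2} and \eqref{lem:ReflexPolynomial3} of Corollary \ref{cor:ReflexPolynomial} already give level-by-level reformulations of this in shapes matching (3) and (2)—except that they are phrased using the congruences $\approx_{\vx_1 \dots \vx_k}$ rather than $\approx_\vX$. The only real work is exchanging these two congruences via the corollary to Proposition \ref{prop:Retract}, which says precisely that the two coincide on terms drawn from $\mathcal{T}(\{\vx_1,\dots,\vx_k\}+\underline{A})$.

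For (1) $\Rightarrow$ (3), given $t \approx_\vX u$ and $\vx \in \vX$, I would choose $n \in \Nat$ so large that the indeterminates appearing in $t$, $u$, together with $\vx$, lie in a set of size $n+1$; after permuting indeterminates we may assume this set is $\{\vx_1, \dots, \vx_{n+1}\}$ with $\vx = \vx_{n+1}$. The corollary to Proposition \ref{prop:Retract} then converts $t \approx_\vX u$ into $t \approx_{\vx_1 \dots \vx_{n+1}} u$; Corollary \ref{cor:ReflexPolynomial}\eqref{lem:ReflexPolynomial2} supplies $\lambda^\dag \vx_{n+1}. t \approx_{\vx_1 \dots \vx_n} \lambda^\dag \vx_{n+1}. u$, and converting back yields $\lambda^\dag \vx. t \approx_\vX \lambda^\dag \vx. u$. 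The reverse (3) $\Rightarrow$ (1) runs the same argument backwards: to verify item \eqref{lem:ReflexPolynomial2} of Corollary \ref{cor:ReflexPolynomial} for every $n$, lift a relation $t \approx_{\vx_1 \dots \vx_{n+1}} u$ to $\approx_\vX$, apply (3), and descend.

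The equivalence (1) $\Leftrightarrow$ (2) proceeds identically, using item \eqref{lem:ReflexPolynomial3} of Corollary \ref{cor:ReflexPolynomial} in place of item \eqref{lem:ReflexPolynomial2}. Here the side condition $\vx \notin \FV(tu)$ in (2) corresponds exactly to the stipulation in \eqref{lem:ReflexPolynomial3} that $t, u \in \mathcal{T}(\{\vx_1,\dots,\vx_n\} + \underline{A})$ while the abstracted variable is $\vx_{n+1}$: in other words, $\vx_{n+1}$ is genuinely fresh relative to $t$ and $u$.

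The argument is essentially bookkeeping on congruences, so there is no deep obstacle. The subtlest point is the renaming step: we must know that $\approx_\vX$ is invariant under permutations of the indeterminates, so that an arbitrary $\vx \in \vX$ can be positioned as the ``last'' variable $\vx_{n+1}$ required by the local statements in Corollary \ref{cor:ReflexPolynomial}. This invariance is immediate from the symmetry of the defining clauses in Definition \ref{def:PolyAlg}, and is precisely what enables the local characterisations in Corollary \ref{cor:ReflexPolynomial} to capture the global conditions expressed in (2) and (3).
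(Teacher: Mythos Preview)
Your proposal is correct and follows essentially the same approach as the paper: both route through Lemma~\ref{lem:AlgCombMod} and Corollary~\ref{cor:ReflexPolynomial}, reducing the global conditions in $A[\vX]$ to the level-by-level ones in $A[\vx_1,\dots,\vx_n]$ and then handling the bookkeeping of indeterminates (the paper cites Lemma~\ref{prop:UniversalPolyAlg} for the rearrangement, you cite the corollary to Proposition~\ref{prop:Retract} and symmetry of the defining clauses, which amounts to the same thing).
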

\begin{proof}
  \noindent ($\ref{eq:AlgCombModelLamdaDag1} \leftrightarrow
  \ref{eq:AlgCombModelLamdaDag2}$)
  By Lemma \ref{lem:AlgCombMod},
  $A$ is strongly reflexive if and only if
  Corollary \ref{cor:ReflexPolynomial}\eqref{lem:ReflexPolynomial3}
  holds for each $n \in \Nat$.
  This is equivalent to \eqref{eq:AlgCombModelLamdaDag2} 
  by a suitable rearrangement of indeterminates $\FV(tu) \cup \left\{ \vx \right\}$ using Lemma \ref{prop:UniversalPolyAlg}.

  \noindent ($\ref{eq:AlgCombModelLamdaDag1} \leftrightarrow
  \ref{eq:AlgCombModelLamdaDag3}$) Similar.
\end{proof}

In particular, the polynomial algebra of a strongly reflexive combinatory
pre-model is closed under the $\xi$-rule with respect to the
abstraction mechanism $\lambda^{\dag} \vx$. In this way, each strongly
reflexive combinatory pre-model gives rise to a model of the lambda
calculus.

Next, we show that the class of strongly reflexive combinatory
pre-models is axiomatisable
with a finite set of closed equations which can be obtained by 
taking $\lambda^\dag$-closures of both sides of the equations
\eqref{prop:CharRef1}--\eqref{prop:CharRef7} of Proposition
\ref{prop:CharRef}.
\begin{theorem}
  \label{Thm:AlgCombMod}
  A combinatory pre-model $A$ is strongly reflexive 
  if and only if it satisfies the following equations:
  \begin{enumerate}
    \item\label{def:AlgCombMod1}
      $\lambda^{\dag}\vx\vy. \ce(\cs(\cs (\ck\ck) \vx) \vy)
      =
      \lambda^\dag \vx\vy. \ce \vx$,
    \item\label{def:AlgCombMod2}
      $\lambda^{\dag}\vx\vy\vz. \ce(\cs(\cs(\cs(\ck \cs)\vx)\vy)\vz)
      =
      \lambda^\dag \vx\vy\vz. \ce (\cs(\cs \vx\vz)(\cs\vy\vz))$,
    \item\label{def:AlgCombMod3}
      $\lambda^{\dag}\vx. \ce(\cs(\ck\ci)\vx)
      =
      \lambda^\dag \vx. \ce \vx$,
    \item\label{def:AlgCombMod4}
      $\lambda^{\dag}\vx\vy. \ce(\cs(\cs(\ck\ce) \vx)\vy)
      =
      \lambda^\dag \vx\vy. \ce (\cs\vx\vy)$,
    \item\label{def:AlgCombMod5}
      $\lambda^{\dag}\vx\vy. \ce(\cs (\ck\vx) (\ck\vy))
      =
      \lambda^\dag \vx\vy. \ce (\ck(\vx\vy))$,
    \item\label{def:AlgCombMod6}
      $\lambda^{\dag}\vx. \ce(\cs(\ck\vx)\ci)
      =
      \lambda^\dag \vx. \ce \vx$,
    \item\label{def:AlgCombMod7}
      $\lambda^{\dag}\vx\vy. \ce(\cs(\ce\vx) (\ce\vy))
      =
      \lambda^\dag \vx\vy. \ce (\cs\vx\vy)$.
  \end{enumerate}
\end{theorem}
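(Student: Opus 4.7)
The overall strategy is to recognise that each of the seven closed equations in the theorem is obtained by $\lambda^\dag$-closing the corresponding universal sentence of Proposition~\ref{prop:CharRef}. Concretely, after unfolding the definitions $\ck_1 = \ck\ck$, $\cs_1 = \ck\cs$, $\ci_1 = \ck\ci$, $\ce_1 = \ck\ce$, and $a \cdot_1 b = \cs a b$, item~$k$ of Theorem~\ref{Thm:AlgCombMod} is literally obtained by applying $\lambda^\dag$ over as many variables as item~$k$ of Proposition~\ref{prop:CharRef} universally quantifies, with the $a,b,c$ of the sentence replaced by $\vx,\vy,\vz$. The proof then reduces to passing between the two forms via Proposition~\ref{cor:AlgCombModelLamdaDag}.

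For the forward direction, assume $A$ is strongly reflexive. By Lemma~\ref{lem:AlgCombMod}, every $A[\vx_1,\dots,\vx_n]$ is reflexive, so Proposition~\ref{prop:CharRef} applied in $A[\vx,\vy,\vz]$ yields, after specialising $a,b,c$ to $\vx,\vy,\vz$, equations such as $\ce(\cs(\cs(\ck\ck)\vx)\vy) \approx_\vX \ce\vx$ for item~1, and analogously for the other six. Applying Proposition~\ref{cor:AlgCombModelLamdaDag}\eqref{eq:AlgCombModelLamdaDag3} (the $\xi$-rule for $\lambda^\dag$, available precisely because $A$ is strongly reflexive) once for each free indeterminate produces closed terms whose equivalence $\approx_\vX$ is just equality in $A$ via interpretation (Notation~\ref{rem:ClosedTerm}). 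This yields the seven closed equations.

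For the converse, assume $A$ satisfies the seven closed equations. By Lemma~\ref{lem:AlgCombMod} and Proposition~\ref{prop:CharRef}, it suffices to show that for every $n$ and every $s,t,c \in \underline{A[\vx_1,\dots,\vx_n]}$ the seven universal sentences hold in $A[\vx_1,\dots,\vx_n]$. Each closed equation transports from $A$ to $A[\vx_1,\dots,\vx_n]$ via $\eta_n$. Applying both sides of, say, the first closed equation to $s$ and then $t$ and using the $\beta$-property of $\lambda^\dag$ from Definition~\ref{def:LambdaDag} ($(\lambda^\dag\vx.t)u \approx_\vX t[\vx/u]$) immediately gives $\ce(\cs(\cs(\ck\ck)s)t) = \ce s$, which is exactly item~1 of Proposition~\ref{prop:CharRef} for $A[\vx_1,\dots,\vx_n]$. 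The other six items follow the same pattern.

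The main obstacle is essentially bookkeeping: one must check, one item at a time, that the stated closed equation really is the $\lambda^\dag$-closure of the corresponding universal sentence after unfolding $\cdot_1, \ck_1, \cs_1, \ci_1, \ce_1$; and one must avoid confusing $\vx,\vy,\vz$ in their two roles (free indeterminates of $A[\vx,\vy,\vz]$ in the forward direction, bound variables under $\lambda^\dag$ in the backward direction). Once the correspondence between the two presentations is pinned down for each of the seven items, the forward direction is just iterated application of the $\xi$-rule and the backward direction is just iterated application of the $\beta$-property.
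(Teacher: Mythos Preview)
Your proposal is correct and follows essentially the same approach as the paper: forward via reflexivity of the polynomial algebras plus the $\xi$-rule for $\lambda^\dag$, backward by transporting the closed equations and $\beta$-unfolding to recover the universal sentences of Proposition~\ref{prop:CharRef}. The only minor deviation is that for the converse the paper works directly with $A[\vx]$ (the definition of strong reflexivity) rather than all $A[\vx_1,\dots,\vx_n]$, so your invocation of Lemma~\ref{lem:AlgCombMod} there is harmless but unnecessary.
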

\begin{proof}
  \noindent ($\Rightarrow$) 
  Suppose that $A$ is strongly reflexive. By
  Lemma~\ref{lem:AlgCombMod},
  $A[\vx_1,\dots,\vx_n]$ is reflexive for each $n \in \Nat$.
  Then, \eqref{def:AlgCombMod1}--\eqref{def:AlgCombMod7} follow from the reflexivity of
  $A[\vx_1,\dots,\vx_n]$ ($n \leq 3$) together with
  Proposition~\ref{prop:CharRef} and
  Corollary~\ref{cor:ReflexPolynomial}\eqref{lem:ReflexPolynomial2}.
  
 \noindent($\Leftarrow$)
  Suppose that $A$ satisfies
  \eqref{def:AlgCombMod1}--\eqref{def:AlgCombMod7}.
  Since these equations consist of constants $\ck,\cs,\ci,\ce$ only,
  $A[\vx]$ satisfies these equations as well. Then,
  $A[\vx]$ is reflexive by Proposition \ref{prop:CharRef}, i.e., 
  $A$ is strongly reflexive.
\end{proof}
Note that the equations
\eqref{def:AlgCombMod1}--\eqref{def:AlgCombMod7} in Theorem~\ref{Thm:AlgCombMod}
are \emph{closed} in
the sense that they correspond to terms built up from $\ck,\cs,\ci,\ce$ only.
Thus, we have the following.
\begin{proposition}
  \label{prop:ImageAlgCombModel}
If $A$ is strongly reflexive and $f \colon A \to B$ is a homomorphism
of combinatory pre-models, then $B$ is strongly reflexive.
\end{proposition}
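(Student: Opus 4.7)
The plan is to prove the proposition by direct appeal to the finite axiomatisation of strong reflexivity given in Theorem~\ref{Thm:AlgCombMod}, together with the trivial observation that closed equations are preserved by arbitrary homomorphisms of combinatory pre-models.

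First, I would apply Theorem~\ref{Thm:AlgCombMod} to $A$ to conclude that the seven equations \eqref{def:AlgCombMod1}--\eqref{def:AlgCombMod7} hold in $A$. The crucial feature of these equations is that, as emphasised in the remark following the theorem, each of them is \emph{closed}: inspecting Definition~\ref{def:LambdaDag} of $\lambda^{\dag}$, one sees that successive $\lambda^{\dag}$-abstractions over the indeterminates $\vx,\vy,\vz$ produce terms built entirely from the constants $\ck,\cs,\ci,\ce$ and application, with all indeterminates bound away. Hence each of \eqref{def:AlgCombMod1}--\eqref{def:AlgCombMod7} asserts the equality of the interpretations of two closed terms of $\mathcal{T}(\underline{A})$.

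Next, since $f \colon A \to B$ is a homomorphism of combinatory pre-models, it sends $\ck_A,\cs_A,\ci_A,\ce_A$ to $\ck_B,\cs_B,\ci_B,\ce_B$ respectively and preserves application. By a straightforward induction on the construction of closed terms (cf.\ Notation~\ref{rem:ClosedTerm}), we have $f(\llbracket s \rrbracket_{A}) = \llbracket s \rrbracket_{B}$ for every closed term $s$. Therefore, if a closed equation $\llbracket s \rrbracket_{A} = \llbracket t \rrbracket_{A}$ holds in $A$, then
\[
\llbracket s \rrbracket_{B} = f(\llbracket s \rrbracket_{A}) = f(\llbracket t \rrbracket_{A}) = \llbracket t \rrbracket_{B},
\]
so the same equation holds in $B$. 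Applying this observation to each of \eqref{def:AlgCombMod1}--\eqref{def:AlgCombMod7} yields the seven closed equations in $B$.

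Finally, I would invoke the converse direction of Theorem~\ref{Thm:AlgCombMod} to conclude that $B$ is strongly reflexive. There is essentially no obstacle here: the entire content of the argument is that strong reflexivity has been recast as a closed equational theory, and closed equations are automatically transferred along homomorphisms. Note in particular that surjectivity of $f$ is not required; this is why the statement is about arbitrary homomorphisms, not just homomorphic images.
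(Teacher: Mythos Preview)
Your proof is correct and takes essentially the same approach as the paper: the paper simply observes that the seven equations of Theorem~\ref{Thm:AlgCombMod} are closed (built from $\ck,\cs,\ci,\ce$ alone) and hence preserved by any homomorphism. Your write-up is a faithful, more detailed unfolding of that one-line argument.
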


Next, we relate strongly reflexive combinatory pre-models and combinatory models.
\begin{definition}[Meyer~\cite{MeyerLambda}]
  \label{def:CombMod}
  A \emph{combinatory model}
  is a combinatory pre-model 
  \(
   A
  \)
  satisfying the Meyer--Scott axiom:
  \begin{equation*}
    \left[ \forall c \in \underline{A} \left( ac = bc \right) \right] \implies {\ce a = \ce b}
  \end{equation*}
  for each $a, b \in \underline{A}$.
\end{definition}
Note that every combinatory model is reflexive by Lemma~\ref{lem:reflexive}.  
\begin{lemma}
  \label{lem:CombModAlg}
  Every combinatory model is strongly reflexive.
\end{lemma}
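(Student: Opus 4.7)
The plan is to reduce strong reflexivity to the syntactic characterisation of Corollary~\ref{cor:ReflexPolynomial}\eqref{lem:ReflexPolynomial3}: it suffices to show that whenever $t\vy \approx_{\vx\vy} u\vy$ for $t, u \in \mathcal{T}(\{\vx\}+\underline{A})$, one has $\ce t \approx_\vx \ce u$. Since every combinatory model is reflexive (as remarked after Definition~\ref{def:CombMod}), we can freely combine the Meyer--Scott axiom in $A$ with Lemma~\ref{lem:ReflE}, which states that $\ce$ acts as the identity on any $\lambda^\dag$-abstracted term.

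Starting from $t\vy \approx_{\vx\vy} u\vy$, the first step is to evaluate under the homomorphism $A[\vx,\vy] \to A$ sending $\vx \mapsto c$ and $\vy \mapsto d$ for arbitrary $c,d \in \underline{A}$, obtaining $t[\vx/c] \cdot d = u[\vx/c] \cdot d$ in $A$. Applying the Meyer--Scott axiom in $A$ for each fixed $c$ then yields $(\ce t)[\vx/c] = (\ce u)[\vx/c]$ in $A$ for every $c \in \underline{A}$.

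The central move is a second application of the Meyer--Scott axiom, this time to the closed terms $\lambda^\dag \vx. \ce t$ and $\lambda^\dag \vx. \ce u$, which lie in $\mathcal{T}(\underline{A})$ (since $\lambda^\dag$ eliminates $\vx$) and so may be identified with their interpretations in $A$ via Notation~\ref{rem:ClosedTerm}. The $\beta$-rule for $\lambda^\dag$ stated in Definition~\ref{def:LambdaDag}, transferred to $A$ via the same convention, gives $(\lambda^\dag\vx.\ce t)\cdot c = (\ce t)[\vx/c] = (\ce u)[\vx/c] = (\lambda^\dag\vx.\ce u)\cdot c$ for every $c \in \underline{A}$, and Meyer--Scott then produces $\ce(\lambda^\dag\vx.\ce t) = \ce(\lambda^\dag\vx.\ce u)$ in $A$. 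Lemma~\ref{lem:ReflE} strips the outer $\ce$'s, giving the honest equality $\lambda^\dag\vx.\ce t = \lambda^\dag\vx.\ce u$ in $A$ itself.

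To conclude, this equality pulls back to $A[\vx]$: the $\beta$-rule gives $\ce t \approx_\vx (\lambda^\dag\vx.\ce t)\vx$ and $(\lambda^\dag\vx.\ce u)\vx \approx_\vx \ce u$, while the middle step $(\lambda^\dag\vx.\ce t)\vx = (\lambda^\dag\vx.\ce u)\vx$ is immediate from the equality just established in $A$. The main obstacle is the familiar gap between pointwise equations valid under every valuation and honest syntactic equality in a polynomial algebra; the trick to bridge it is the double use of Meyer--Scott together with the idempotency of $\ce$ on $\lambda^\dag$-terms granted by Lemma~\ref{lem:ReflE}, which upgrades the semantic equality to an equality already in $A$ and hence in $A[\vx]$.
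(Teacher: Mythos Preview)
Your proof is correct and follows essentially the same route as the paper's: both verify Corollary~\ref{cor:ReflexPolynomial}\eqref{lem:ReflexPolynomial3} for $n=1$ by first obtaining $\ce(t[\vx/c]) = \ce(u[\vx/c])$ in $A$ for every $c$, then wrapping with $\lambda^\dag\vx$, applying Meyer--Scott a second time, and stripping the outer $\ce$ via Lemma~\ref{lem:ReflE}. The only cosmetic differences are that the paper uses reflexivity (Lemma~\ref{lem:reflexive}) rather than Meyer--Scott for the first step, and it packages the abstracted term as $\lambda^\dag\vx.\,\ce((\lambda^\dag\vx.t)\vx)$ instead of your equivalent $\lambda^\dag\vx.\,\ce t$.
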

\begin{proof}
  Let $A$ be a combinatory model.
  It suffices to show that Corollary
  \ref{cor:ReflexPolynomial}\eqref{lem:ReflexPolynomial3} holds for $n
  =1$.
  Let $t,u \in \mathcal{T}(\left\{ \vx \right\} + \underline{A})$
  be such that $t \vy \approx_{\vx\vy} u \vy$.
  Fix $c
  \in \underline{A}$. Since $(\lambda^\dag \vx. t)\vx \vy
  \approx_{\vx\vy} (\lambda^\dag \vx. u) \vx \vy$, we have $(\lambda^\dag \vx. t)c \vx
  \approx_{\vx} (\lambda^\dag \vx. u) c \vx$.  This implies
  $\ce ((\lambda^\dag \vx. t) c) = \ce((\lambda^\dag \vx. u) c)$
  by the reflexivity of $A$, which is equivalent to
  $(\lambda^\dag \vx.\ce ((\lambda^\dag \vx. t) \vx))c = 
  (\lambda^\dag \vx.\ce ((\lambda^\dag \vx. u) \vx))c$.
  Since $c$ was arbitrary and $A$ is a combinatory model, we have
  $\ce(\lambda^\dag \vx.\ce ((\lambda^\dag \vx. t) \vx)) =
  \ce(\lambda^\dag \vx.\ce ((\lambda^\dag \vx. u) \vx))$,
  and hence $\ce t \approx_{\vx} \ce u$.
\end{proof}

\begin{theorem}
  \label{thm:CombModel}
  The following are equivalent for a combinatory pre-model $A$:
  \begin{enumerate}
    \item\label{thm:CombModel1} $A$ is strongly reflexive.
    \item\label{thm:CombModel2} $A[\vX]$ is strongly reflexive.
    \item\label{thm:CombModel3} $A[\vX]$ is reflexive.
    \item\label{thm:CombModel4} $A[\vX]$ is a combinatory model.
  \end{enumerate}
\end{theorem}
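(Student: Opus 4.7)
The plan is to establish the equivalences by a cycle $(1) \Rightarrow (2) \Rightarrow (3) \Rightarrow (1)$, together with the two implications between $(3)$ and $(4)$. The easy implications follow at once from earlier results. For $(1) \Rightarrow (2)$, the characterisation in Theorem~\ref{Thm:AlgCombMod} presents strong reflexivity by seven \emph{closed} equations involving only $\ck,\cs,\ci,\ce$, which are transported along the constant-preserving homomorphism $\sigma_A \colon A \to A[\vX]$ (equivalently, apply Proposition~\ref{prop:ImageAlgCombModel}). The implication $(2) \Rightarrow (3)$ is the observation after Lemma~\ref{lem:AlgCombMod} that strong reflexivity always implies reflexivity. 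For $(3) \Rightarrow (1)$, Proposition~\ref{prop:Retract}\eqref{prop:Retract2} exhibits $A[\vx]$ as a retract of $A[\vX]$, so Corollary~\ref{prop:ReflexiveRetract} transfers reflexivity to $A[\vx]$, which is by definition the strong reflexivity of $A$. Finally, $(4) \Rightarrow (3)$ is the remark after Definition~\ref{def:CombMod} that every combinatory model is reflexive.

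The substantive step is $(3) \Rightarrow (4)$. Assume $A[\vX]$ is reflexive, and let $a, b \in \underline{A[\vX]}$ satisfy $ac = bc$ for every $c \in \underline{A[\vX]}$. Fix representatives of $a$ and $b$ in $\mathcal{T}(\vX + \underline{A})$; these use only finitely many indeterminates, so some $\vx_m \in \vX$ appears in neither. Instantiating $c = \vx_m$ gives $a \vx_m = b \vx_m$ in $A[\vX]$. By Lemma~\ref{prop:UnivPropPolynomialAlgX}, build a homomorphism $\varphi \colon A[\vX] \to (A[\vX])[\vx]$ with $\varphi \circ \sigma_A = \eta_{A[\vX]} \circ \sigma_A$, $\varphi(\vx_m) = \vx$, and $\varphi(\vx_i) = \eta_{A[\vX]}(\vx_i)$ for $i \neq m$. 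Since the chosen representatives do not mention $\vx_m$, the homomorphisms $\varphi$ and $\eta_{A[\vX]}$ agree on them, so $\varphi(a) = \eta_{A[\vX]}(a)$ and $\varphi(b) = \eta_{A[\vX]}(b)$. Applying $\varphi$ to $a \vx_m = b \vx_m$ yields $\eta_{A[\vX]}(a)\,\vx = \eta_{A[\vX]}(b)\,\vx$ in $(A[\vX])[\vx]$, and reflexivity of $A[\vX]$ via Lemma~\ref{lem:reflexive} then gives $\ce a = \ce b$, verifying the Meyer--Scott axiom.

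The main obstacle is precisely this step $(3) \Rightarrow (4)$: one must translate an equation $a \vx_m = b \vx_m$ holding \emph{internally} in $A[\vX]$ (with $\vx_m$ a fresh indeterminate inside $A[\vX]$ itself) into an equation of the shape $\alpha\vx \approx_{\vx} \beta\vx$ in $(A[\vX])[\vx]$, where $\vx$ is the \emph{genuine} new indeterminate, as required by the definition of reflexivity applied to the combinatory pre-model $A[\vX]$. The substitution homomorphism $\varphi$ accomplishes this translation, and the freshness of $\vx_m$ for $a, b$ ensures that $\varphi$ coincides with the canonical embedding $\eta_{A[\vX]}$ on these particular elements.
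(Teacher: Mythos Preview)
Your proof is correct, and the overall architecture is close to the paper's, but your route through condition~(4) is genuinely different. The paper links~(4) to the cycle via $(1)\Rightarrow(4)$ and $(4)\Rightarrow(2)$: for the former it uses that strong reflexivity of $A$ makes each $A[\vx_1,\dots,\vx_n]$ reflexive (Lemma~\ref{lem:AlgCombMod}), then picks $t,u$ in some finite $A[\vx_1,\dots,\vx_n]$, applies the fresh $\vx_{n+1}$, and invokes Corollary~\ref{cor:ReflexPolynomial}\eqref{lem:ReflexPolynomial3}; the latter is Lemma~\ref{lem:CombModAlg}. You instead prove $(3)\Rightarrow(4)$ directly by building the substitution homomorphism $\varphi\colon A[\vX]\to (A[\vX])[\vx]$ that swaps a fresh internal indeterminate $\vx_m$ for the external one, and then apply reflexivity of $A[\vX]$ via Lemma~\ref{lem:reflexive}. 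This is the same ``fresh variable plus reflexivity'' idea, but it avoids the detour through the finite polynomial algebras and Corollary~\ref{cor:ReflexPolynomial}, at the cost of constructing $\varphi$ explicitly; conversely, the paper's argument reuses machinery already in place. Your implication $(2)\Rightarrow(3)$ is also slightly more elementary than the paper's, which uses the isomorphism $(A[\vX])[\vx]\cong A[\vX]$ of Proposition~\ref{prop:Retract}\eqref{prop:Retract3} to handle both directions $(2)\leftrightarrow(3)$ at once.
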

\begin{proof}
  \noindent ($\ref{thm:CombModel1} \leftrightarrow \ref{thm:CombModel2}$)
    By Proposition \ref{prop:ImageAlgCombModel}
    and Proposition~\ref{prop:Retract}~\eqref{prop:Retract2}.
  \smallskip

  \noindent ($\ref{thm:CombModel2} \leftrightarrow
  \ref{thm:CombModel3}$)
  By Corollary \ref{prop:ReflexiveRetract}
  and Proposition \ref{prop:Retract}~\eqref{prop:Retract3}.
  \smallskip

  \noindent($\ref{thm:CombModel4} \rightarrow \ref{thm:CombModel2}$)
  By Lemma \ref{lem:CombModAlg}.
  
  \noindent($\ref{thm:CombModel1} \rightarrow \ref{thm:CombModel4}$)
  Suppose that $A$ is strongly reflexive.
  Let $t,u \in
  \mathcal{T}(\vX + \underline{A})$, and 
  suppose that $ts \approx_{\vX} us$
  for all $s \in \mathcal{T}(\vX + \underline{A})$.
  Choose $n \in \Nat$ such that $t,u \in \mathcal{T}(\left\{ \vx_1,
    \dots,\vx_n
  \right\} + \underline{A})$. 
  Then $t\vx_{n+1} \approx_{\vx_1\dots\vx_{n+1}} u\vx_{n+1}$, so
  by Lemma \ref{lem:CombModAlg} and Corollary \ref{cor:ReflexPolynomial}\eqref{lem:ReflexPolynomial3},
  we have $\ce t \approx_{\vx_1\dots\vx_n} \ce u$.
  Then $\ce t \approx_{\vX} \ce u$.
\end{proof}

It is known that lambda algebras are exactly the retracts of lambda
models. The following is
its analogue for combinatory models.
\begin{theorem}
  \label{thm:CatAlgCombModels}
  A combinatory pre-model is strongly reflexive if and only if it is a
  retract of a combinatory model.
\end{theorem}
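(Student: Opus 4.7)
The plan is to prove both directions using the machinery already developed, with $A[\vX]$ serving as the canonical witnessing combinatory model in the forward direction.

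For the easy direction ($\Leftarrow$), suppose $A$ is a retract of a combinatory model $B$, with $s \colon A \to B$ and $r \colon B \to A$ satisfying $r \circ s = \id_A$. Then $B$ is strongly reflexive by Lemma~\ref{lem:CombModAlg}. Since $r$ is a surjective homomorphism (it has a section), $A$ is a homomorphic image of $B$, and Proposition~\ref{prop:ImageAlgCombModel} immediately gives that $A$ is strongly reflexive.

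For the main direction ($\Rightarrow$), suppose $A$ is strongly reflexive. The natural candidate for a combinatory model containing $A$ as a retract is the polynomial algebra $A[\vX]$ itself: by Theorem~\ref{thm:CombModel} (the implication $\ref{thm:CombModel1} \rightarrow \ref{thm:CombModel4}$), $A[\vX]$ is already a combinatory model. It remains to exhibit a section and a retraction. The section is $\sigma_A \colon A \to A[\vX]$ from Definition~\ref{def:PolyAlg}. For the retraction, I apply Lemma~\ref{prop:UnivPropPolynomialAlgX} to the identity homomorphism $\id_A \colon A \to A$ together with the constant sequence $(b_n)_{n \geq 1}$ defined by $b_n = \ci$ (any choice of elements of $\underline{A}$ works). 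This yields a unique homomorphism $r \colon A[\vX] \to A$ satisfying $r \circ \sigma_A = \id_A$, which is exactly the required retraction.

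There is essentially no obstacle here: the structural groundwork done in Sections~\ref{sec:CombPreModel} and \ref{sec:AlgCombModel} has already established (i) that $A[\vX]$ is a combinatory model whenever $A$ is strongly reflexive, and (ii) the universal property of $A[\vX]$ which delivers the retraction for free. The only minor point worth checking is that homomorphic images of strongly reflexive combinatory pre-models are strongly reflexive, but this is exactly Proposition~\ref{prop:ImageAlgCombModel}, itself a consequence of the fact from Theorem~\ref{Thm:AlgCombMod} that strong reflexivity is axiomatised by closed equations. Thus the proof reduces to assembling these three earlier results with the correct choice of section and retraction.
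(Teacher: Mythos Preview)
Your proof is correct and follows essentially the same approach as the paper. The only cosmetic difference is that, for the forward direction, the paper cites Proposition~\ref{prop:Retract}\eqref{prop:Retract2} (with $n=0$) to assert that $A$ is a retract of $A[\vX]$, whereas you spell out the retraction directly via Lemma~\ref{prop:UnivPropPolynomialAlgX}; both amount to the same construction.
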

\begin{proof}
  \noindent($\Rightarrow$) By Proposition \ref{prop:Retract}~\eqref{prop:Retract2}
  and Theorem~\ref{thm:CombModel}.

  \noindent($\Leftarrow$) By Lemma~\ref{lem:CombModAlg} and
  Proposition \ref{prop:ImageAlgCombModel}.
\end{proof}

\section{Cartesian closed monoids}\label{sec:CCCMoid}
%
In this section, we generalise the construction of a cartesian closed
category with a reflexive object from a lambda algebra due to
Scott~\cite{ScottLambda} (see also
Koymans~\cite{KoymansModelOfLambda}) to
the setting of strongly reflexive combinatory pre-models.  To this
end, we construct a cartesian closed monoid from a strongly reflexive
combinatory pre-model. The connection between cartesian closed
monoids and cartesian closed categories with reflexive objects will be
reviewed toward the end of this section.  The reader is referred to
Koymans \cite[Chapter 2]{KoymansThesis}, Lambek and Scott~\cite[Part
I, 15--17]{LambekScott}, and Hyland~\cite{hyland_2017,HYLAND201459}
for a detailed account of cartesian closed monoids and their relation
to untyped lambda calculus.

Throughout this section, we work over a fixed \emph{reflexive} combinatory pre-model
  \(
   A = (\underline{A},\mathbin{\cdot},\ck,\cs,\ci,\ce).
  \)

First, we construct a monoid out of $A$.
For each $a,b \in \underline{A}$, define
\[
  a \circ b = \lambda^\dag \vx.a (b \vx).
\]
\begin{lemma}\label{monoid}
  For each $a , b, c \in \underline{A}$, we have
  \begin{enumerate}
  \item
    \(
    a \circ (b \circ c) = (a \circ b) \circ c,
    \)\label{monoid1}
  \item
    \(
    \ci \circ a = a \circ \ci = \ce a.
    \)\label{monoid2}
  \end{enumerate}
\end{lemma}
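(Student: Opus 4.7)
The plan is to reduce both parts to routine applications of $\lambda^\dag$-well-definedness under reflexivity, as given by Proposition~\ref{lem:ReflexLambda}, combined with the beta rule $(\lambda^\dag \vy.\, t)u \approx_\vX t[\vy/u]$ recorded in Definition~\ref{def:LambdaDag}. Throughout, because $\vx$ is bound in any $\lambda^\dag \vx.(\cdots)$, the resulting terms are closed and their interpretations are elements of $\underline{A}$; the identities in the lemma are genuine equalities in $\underline{A}$.

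For~\ref{monoid1}, the strategy is to show that both $a \circ (b \circ c)$ and $(a \circ b) \circ c$ equal $\lambda^\dag \vx.\, a(b(c\vx))$. Unfolding the definition, $a \circ (b \circ c) = \lambda^\dag \vx.\, a((b \circ c)\vx)$. The beta rule applied to $(b \circ c)\vx = (\lambda^\dag \vy.\, b(c\vy))\vx$ yields $(b \circ c)\vx \approx_\vx b(c\vx)$, so congruence of $\approx_\vx$ gives $a((b \circ c)\vx) \approx_\vx a(b(c\vx))$, and Proposition~\ref{lem:ReflexLambda} converts this into the desired equality. A symmetric computation, unfolding $(a \circ b)(c\vx) = (\lambda^\dag \vy.\, a(b\vy))(c\vx) \approx_\vx a(b(c\vx))$, handles the right-hand composite.

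For~\ref{monoid2}, the same recipe applies. Starting from $\ci \circ a = \lambda^\dag \vx.\, \ci(a\vx)$, the combinator axiom $\ci(a\vx) \approx_\vx a\vx$ together with Proposition~\ref{lem:ReflexLambda} gives $\ci \circ a = \lambda^\dag \vx.\, a\vx$; and clause~\eqref{def:LambdaDag3} of Definition~\ref{def:LambdaDag} evaluates $\lambda^\dag \vx.\, a\vx$ directly to $\ce a$. The identity $a \circ \ci = \ce a$ follows by the same computation with $a(\ci\vx) \approx_\vx a\vx$.

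I do not expect any real obstacle: the argument is a bookkeeping exercise on top of Proposition~\ref{lem:ReflexLambda} and the beta rule. The only point worth flagging is that the key step in each case — replacing a term by an $\approx_\vx$-equivalent one underneath $\lambda^\dag \vx$ — is available precisely because $A$ is reflexive, which is why the hypothesis is imposed.
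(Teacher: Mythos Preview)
Your proof is correct and follows essentially the same approach as the paper's. The paper argues by showing $(a \circ (b \circ c))\vx \approx_\vx ((a \circ b) \circ c)\vx$ and then applying Lemma~\ref{lem:reflexive} together with Lemma~\ref{lem:ReflE}, whereas you work with the bodies directly and invoke Proposition~\ref{lem:ReflexLambda}; but since Proposition~\ref{lem:ReflexLambda} is precisely the packaging of those two lemmas, the two arguments are the same in substance.
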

\begin{proof}
  We use Lemma \ref{lem:reflexive} and Lemma \ref{lem:ReflE}.
  \smallskip

  \noindent\ref{monoid1}.
  Since
  \(
  (a \circ (b \circ c))\vx
  \approx_\vx
  a(b(c\vx))
  \approx_\vx
  ((a \circ b) \circ c)\vx ,
  \)
  we have
  \[
  a \circ (b \circ c)
  =
  \ce(a \circ (b \circ c))
  =
  \ce((a \circ b) \circ c)
  =
  (a \circ b) \circ c .
  \]
  
  \noindent\ref{monoid2}.
  Since
  \(
  (\ci \circ a)\vx
  \approx_\vx
  a\vx
  \approx_\vx
  (a \circ \ci)\vx ,
  \)
  we have
  \[
    \ci \circ a
    =
    \ce(\ci \circ a)
    =
    \ce a
    =
    \ce(a \circ \ci)
    =
    a \circ \ci .
    \qedhere
  \]
\end{proof}
We recall the following construction from Definition~\ref{def:B}:
  \[
      A^\ast = \left\{ a \in \underline{A} \mid \ce a = a \right\} =
      \left\{ \ce a \in \underline{A} \mid a \in \underline{A}
    \right\}.
   \]
Define $\Unit \in A^\ast$ by $ \Unit = \ce \ci$.
\begin{proposition}
  \label{prop:FixEMonoid}
  The structure $(A^\ast, \circ, \Unit)$ is a monoid with unit
  $\Unit$.
\end{proposition}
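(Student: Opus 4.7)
The plan is to verify the three monoid axioms: closure of $A^\ast$ under $\circ$, associativity, and the two unit laws. Associativity is immediate from Lemma \ref{monoid}\ref{monoid1}, which holds on all of $\underline{A}$ and so restricts to $A^\ast$, so the only real work is closure and the unit laws.

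For closure, I would observe that $a \circ b = \lambda^\dag \vx. a(b\vx)$ is by definition in the image of $\lambda^\dag \vx.(-)$; Lemma \ref{lem:ReflE} then gives $\ce(a \circ b) = a \circ b$, so $a \circ b \in A^\ast$. In particular, both $\Unit \circ a$ and $a \circ \Unit$ land in $A^\ast$ for any $a \in \underline{A}$.

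For the unit laws I would compute directly. Fix $a \in A^\ast$. By axiom \ref{Ax:E} and then \ref{Ax:I} of Definition \ref{cpmod} we have $\ce\ci(a\vx) \approx_\vx \ci(a\vx) \approx_\vx a\vx$ in $A[\vx]$; so by reflexivity (in the form of Proposition \ref{lem:ReflexLambda}),
\[
\Unit \circ a \;=\; \lambda^\dag \vx.\, \ce\ci (a\vx) \;=\; \lambda^\dag \vx.\, a\vx \;=\; \ce a \;=\; a,
\]
where the penultimate equality uses Definition \ref{def:LambdaDag}\ref{def:LambdaDag3} and the last uses $a \in A^\ast$. Symmetrically, $a(\Unit\vx) \approx_\vx a\vx$, so the same argument yields $a \circ \Unit = a$. (Alternatively, one can observe via reflexivity that $\Unit \circ a = \ci \circ a$ and $a \circ \Unit = a \circ \ci$ as elements of $A^\ast$, and then invoke Lemma \ref{monoid}\ref{monoid2} together with $\ce a = a$.)

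There is no real obstacle: Lemma \ref{monoid} already encapsulates the associative law and the $\ci$-neutrality up to $\ce$, and the only subtle point is to pass from $\ce\ci$ to $\ci$ inside the body of a $\lambda^\dag$-abstraction, which is exactly what reflexivity delivers via Proposition \ref{lem:ReflexLambda}.
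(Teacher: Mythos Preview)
Your proof is correct and essentially matches the paper's. The paper's argument is precisely your parenthetical alternative: it passes from $\Unit \circ a$ to $\ci \circ a$ (implicitly using reflexivity as you note), then invokes Lemma~\ref{monoid}\ref{monoid2} and $\ce a = a$; your direct computation simply inlines that lemma. You also make closure under $\circ$ explicit via Lemma~\ref{lem:ReflE}, which the paper leaves tacit.
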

\begin{proof}
  By Lemma \ref{monoid}\eqref{monoid1}, it suffices to show that
  $\Unit$ is a unit
  of $\circ$. For any $a \in A^\ast$, we
  have
    $
    \Unit \circ a 
    = \ci \circ a
    = \ce a
    = a
    $
    by Lemma \ref{monoid}\eqref{monoid2}.
    Similarly, 
    we have
    $
    a \circ \Unit = a.
    $
\end{proof}

\begin{definition}
  [{Koymans~\cite[Definition 6.3]{KoymansModelOfLambda}}]
  \label{def:CCM}
  Let $(M, \circ, I)$ be a monoid with unit $I$.
  \begin{enumerate}
    \item 
  $M$ \emph{has a paring} if it
  is equipped with elements $p,q \in M$
  and an operation
    $\pair{\cdot,\cdot}  \colon M \times M \to M$
  satisfying 
  \begin{enumerate}
   \item\label{def:CCMI}
     $p \circ \pair{a,b} = a$ and $q \circ \pair{a,b} = b$,
   \item\label{def:CCMII}
     $\pair{a,b} \circ c = \pair{a \circ c, b \circ c}$
  \end{enumerate}
  for each $a, b, c \in M$.

  \item
  $M$ is \emph{cartesian closed} if 
  it has a paring together with an element $\varepsilon \in M$
   and an operation
    $
    \LambdaAbst{\cdot}  \colon M \to M 
    $
  satisfying
  \begin{enumerate}[resume]
   \item\label{def:CCMIII}
     $\varepsilon \circ \pair{p,q} = \varepsilon$,
   \item\label{def:CCMIV}
     $\varepsilon \circ \pair{\LambdaAbst{a} \circ p, q} = a \circ
     \pair{p,q}$,
   \item\label{def:CCMV}
     $\LambdaAbst{\varepsilon} \circ \LambdaAbst{a} = \LambdaAbst{a}$,
   \item\label{def:CCMVI}
     $\LambdaAbst{\varepsilon \circ \pair{a \circ p, q}}
     =
     \LambdaAbst{\varepsilon} \circ a$
  \end{enumerate}
  for each $a \in M$.
  \end{enumerate}
\end{definition}
Coming back to the context of the monoid $(A^\ast, \circ, \Unit)$, 
define elements $p,q,\varepsilon \in A^\ast$
and operations
$\pair{\cdot,\cdot} \colon A^\ast\times A^\ast \to A^\ast$
and 
$\LambdaAbst{\cdot} \colon A^\ast \to A^\ast$ by
  \begin{align*}
    p &= \lambda^\dag \vx.\vx \ct,
    &
    q &= \lambda^\dag \vx. \vx \cf,
    &
    \varepsilon  &= \lambda^\dag \vx. \vx \ct (\vx \cf), \\
    \pair{a,b} &= \lambda^\dag \vx. [a\vx,b\vx],
    &
    \LambdaAbst{a}  &= \lambda^\dag \vx\vy. a[\vx,\vy].
  \end{align*}
Here, $\ct$, $\cf$, and $[\cdot,\cdot]$ are defined as in
\eqref{eq:Pairing} using $\lambda^{\dag}$ instead of $\lambda^\ast$.
\begin{theorem}[{cf.\ Koymans~\cite[Lemma 7.2]{KoymansModelOfLambda}}]
  \label{thm:CCM} 
  Let $A$ be a combinatory pre-model. 
  \begin{enumerate}
    \item \label{thm:CCM1}
  If $A$ is reflexive, then 
  the structure $(A^\ast, \circ, I, p,q, \pair{\cdot,\cdot})$ is a
  monoid with paring.
      \item \label{thm:CCM2}
  If $A$ is strongly reflexive, then the structure $(A^\ast, \circ,
  I, p,q,\varepsilon, \pair{\cdot,\cdot}, \LambdaAbst{\cdot})$ is a
  cartesian closed monoid.
  \end{enumerate}
\end{theorem}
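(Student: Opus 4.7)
The plan is to verify each equation of Definition~\ref{def:CCM} by reducing both sides, viewed as elements of $A^\ast$, to polynomials in $A[\vx]$ or $A[\vx,\vy]$ via the beta property of $\lambda^\dag$, and then lifting equality back to $A^\ast$ using reflexivity (for (1)) or strong reflexivity (for (2)). Since each of $p$, $q$, $\varepsilon$, $\pair{a,b}$, $\LambdaAbst{a}$ is of the form $\ce(\ldots)$, being built from $\lambda^\dag$-abstractions, each belongs to $A^\ast$ by Lemma~\ref{lem:ReflE}; so throughout we shall be comparing elements of $A^\ast$.

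For part (1), the monoid structure is already supplied by Proposition~\ref{prop:FixEMonoid}, so only the two pairing equations of Definition~\ref{def:CCM} remain. For each, I would apply both sides to a single indeterminate $\vx$ in $A[\vx]$ and compute using (a) the beta property $(\lambda^\dag \vx. t)\,u \approx_{\vX} t[\vx/u]$ recorded in Definition~\ref{def:LambdaDag}, and (b) the pairing identities $[a,b]\,\ct \approx_{\vX} a$ and $[a,b]\,\cf \approx_{\vX} b$, which follow at once from~\eqref{eq:Pairing}. For instance, $(p \circ \pair{a,b})\vx \approx_\vx p\,[a\vx,b\vx] \approx_\vx [a\vx,b\vx]\,\ct \approx_\vx a\vx$; since both sides of $p \circ \pair{a,b} = a$ lie in $A^\ast$, Lemma~\ref{lem:reflexive} yields the desired equality in $A$. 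The equation $\pair{a,b}\circ c = \pair{a\circ c, b\circ c}$ is analogous. Only a single indeterminate is ever needed, so reflexivity of $A$ alone suffices.

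For part (2), the same strategy applies to the cartesian-closed equations, but the double $\lambda^\dag$-abstraction in $\LambdaAbst{a} = \lambda^\dag \vx\vy.\,a[\vx,\vy]$ means the reductions must be carried out after applying both sides to two indeterminates $\vx,\vy$ in $A[\vx,\vy]$. The intermediate identities $\LambdaAbst{a}\,\vx \approx_\vx \lambda^\dag \vy.\,a[\vx,\vy]$, $\LambdaAbst{a}\,\vx\,\vy \approx_{\vx\vy} a[\vx,\vy]$, and $\varepsilon\,[s,t] \approx s\cdot t$ (from the definition of $\varepsilon$) reduce each of the four remaining axioms to a straightforward syntactic identity between polynomials. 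To lift such an identity back to $A^\ast$, I would apply Corollary~\ref{cor:ReflexPolynomial}\eqref{lem:ReflexPolynomial3} once or twice, using that $A[\vx]$ is reflexive -- which is precisely what strong reflexivity of $A$ provides -- to obtain the first ascent.

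The main obstacle is the careful bookkeeping of indeterminates: $\LambdaAbst{a}$ re-binds fresh bound variables, so one must rename consistently when instantiating at polynomial indeterminates, and keep track of which polynomial algebra each intermediate term lives in. The substantive role of strong reflexivity, beyond plain reflexivity, is exactly to validate the $\xi$-rule of Proposition~\ref{cor:AlgCombModelLamdaDag}\eqref{eq:AlgCombModelLamdaDag3}, which is what permits equalities to pass through the two nested $\lambda^\dag$-abstractions implicit in $\LambdaAbst{\cdot}$.
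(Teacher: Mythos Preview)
Your proposal is correct and follows essentially the same route as the paper: verify each axiom by applying both sides to indeterminates, compute in $A[\vx]$ or $A[\vx,\vy]$ using $\beta$-reduction and the pairing identities, and then lift back to $A^\ast$ via Lemma~\ref{lem:reflexive} and Lemma~\ref{lem:ReflE} (for part~\ref{thm:CCM1}) or via strong reflexivity (for part~\ref{thm:CCM2}). One small refinement: axioms~\ref{def:CCMIII} and~\ref{def:CCMIV} in fact need only a single indeterminate (the paper handles $\LambdaAbst{a}$ in~\ref{def:CCMIV} by evaluating $\LambdaAbst{a}(\vx\ct)(\vx\cf)$ directly in $A[\vx]$), so the two-variable lift is required only for~\ref{def:CCMV} and~\ref{def:CCMVI}.
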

\begin{proof}
\ref{thm:CCM1}.
We verify \eqref{def:CCMI}--\eqref{def:CCMII} of Definition \ref{def:CCM}.
Fix $a, b, c\in A^\ast$:
\begin{align*}
      (p \circ \pair{a,b})\vx
      &\approx_\vx
      p(\pair{a,b}\vx) 
      \approx_\vx
      p[a\vx,b\vx] 
      \approx_\vx
      [a\vx,b\vx]\ct
      \approx_\vx
      a\vx. 
      \\[.5em]
    (q \circ \pair{a,b})\vx &\approx_\vx b\vx.
    \quad \text{(similar to the above)}\\[.5em]
  %
      (\pair{a,b} \circ c)\vx
      &\approx_\vx
      \pair{a,b}(c\vx)
      \approx_\vx
      [a(c\vx),b(c\vx)]  \\
      &\approx_\vx
      [(a \circ c)\vx,(b \circ c)\vx]
      \approx_\vx
      \pair{a \circ c, b \circ c}\vx.
\end{align*}
Thus, the required equations follow from  Lemma~\ref{lem:reflexive}
and Lemma~\ref{lem:ReflE}.
\smallskip

\noindent\ref{thm:CCM2}.
We verify \eqref{def:CCMIII}--\eqref{def:CCMVI} of Definition \ref{def:CCM}.  Fix $a \in A^\ast$:
\begin{align*}
      (\varepsilon \circ \pair{p,q})\vx
      &\approx_\vx
      \varepsilon (\pair{p,q}\vx)
      \approx_\vx
      \varepsilon[p\vx,q\vx]
      \approx_\vx
      [p\vx,q\vx]\ct([p\vx,q\vx]\cf)\\
      &\approx_\vx
      p\vx(q\vx)
      \approx_\vx
      \vx\ct(\vx\cf)
      \approx_\vx
      \varepsilon \vx.\\[.5em]
      (\varepsilon\circ \pair{\LambdaAbst{a} \circ p, q})\vx
      &\approx_\vx
      \varepsilon(\pair{\LambdaAbst{a} \circ p, q}\vx)
      \approx_\vx
      \varepsilon[\LambdaAbst{a}(p\vx), q\vx]\\
      &\approx_\vx
      \varepsilon[\LambdaAbst{a}(\vx\ct), \vx\cf]
      \approx_\vx
      [\LambdaAbst{a}(\vx\ct), \vx\cf]\ct([\LambdaAbst{a}(\vx\ct),
      \vx\cf]\cf) \\
      &\approx_\vx
      \LambdaAbst{a}(\vx\ct)(\vx\cf) 
      \approx_\vx
      a[\vx\ct, \vx\cf]
      \approx_\vx
      a[p\vx, q\vx]\\
      &\approx_\vx
      a (\pair{p,q}\vx)
      \approx_\vx
      (a  \circ \pair{p,q})\vx. \\[.5em]
      (\LambdaAbst{\varepsilon} \circ \LambdaAbst{a} )\vx\vy
      &\approx_{\vx\vy}
      \LambdaAbst{\varepsilon}(\LambdaAbst{a}\vx)\vy
       \approx_{\vx\vy}
      \varepsilon[\LambdaAbst{a}\vx,\vy]\\
      &\approx_{\vx\vy}
      [\LambdaAbst{a}\vx,\vy]\ct([\LambdaAbst{a}\vx,\vy]\cf)
      \approx_{\vx\vy}
      \LambdaAbst{a}\vx\vy.\\[.5em]
      \LambdaAbst{\varepsilon \circ \pair{a \circ p, q}}\vx\vy
       &\approx_{\vx\vy}
      (\varepsilon \circ \pair{a \circ p, q})[\vx,\vy]
       \approx_{\vx\vy}
      \varepsilon [( a \circ p )[\vx,\vy], q[\vx,\vy]]\\
      & \approx_{\vx\vy}
      \varepsilon [( a ([\vx,\vy]\ct), [\vx,\vy]\cf]
      \approx_{\vx\vy}
      \varepsilon [a\vx, \vy]\\
      &\approx_{\vx\vy}
      \LambdaAbst{\varepsilon}(a \vx) \vy
      \approx_{\vx\vy}
      (\LambdaAbst{\varepsilon} \circ a)\vx\vy. 
\end{align*}
Since $A$ is strongly reflexive (i.e., $A[\vx]$ is reflexive),
we have the required equations as in the first case.
\end{proof}

Now, let $\Cat{C}_{A}$ be the Karoubi envelope of the monoid $(A^\ast,
\circ, I)$ seen as a single-object category (see
Koymans~\cite[Definition 4.1]{KoymansModelOfLambda};
Barendregt~\cite[5.5.11]{BarendregtLambda}). If $A$ is strongly
reflexive, then $\Cat{C}_{A}$ is a cartesian
closed category with a reflexive object $U = \Unit$ by
Theorem~\ref{thm:CCM} (see Koymans~\cite[Section
7]{KoymansModelOfLambda}).  Then, the homset $\Cat{C}_{A}(\terminal,
U)$, where $\terminal$ is a terminal object of $\Cat{C}_{A}$, has a
structure of a lambda algebra, and when $A$ is a lambda algebra,
$\Cat{C}_{A}(\terminal, U)$ is isomorphic to $A$ \cite[Sections 3 and
4]{KoymansModelOfLambda}.  For a strongly reflexive combinatory
pre-model $A$, in general, we would not have an isomorphism between
$A$ and $\Cat{C}_{A}(\terminal, U)$.  Instead, we obtain a certain
lambda algebra structure on $(\underline{A}, \cdot)$ induced by
$\Cat{C}_{A}(\terminal, U)$, which is analogous to the construction of
a lambda model from a combinatory model \cite[Section 6]{MeyerLambda}.
This is the subject of the next section.

\section{Stability}\label{sec:Stability}
It is known that lambda models are combinatory models which are stable
(Barendregt~\cite[5.6,3, 5.6.6]{BarendregtLambda}, Koymans~\cite[Section 1.4]{KoymansThesis}). Having seen that
strongly reflexive combinatory pre-models are the retracts of combinatory models in
Section~\ref{sec:AlgCombModel}, it is now straightforward to establish
an algebraic analogue of this fact for lambda algebras.

We begin with the following construction, which extends $\ce$ to
finitely many arguments. 
Note that the conventions of Notation \ref{not:CPM} and
Notation~\ref{rem:ClosedTerm} continue to apply.
\begin{definition}[Scott~\cite{ScottLambda}]
  \label{def:EpsilonN}
  For a combinatory pre-model $A$,  define $\varepsilon_n \in
  \underline{A}$ for each $n \geq 1$ inductively by
  \begin{align*}
    \varepsilon_1 &= \ce, &
    \varepsilon_{n+1} &= \cs (\ck \ce)(\cs(\ck
    \varepsilon_{n})).
  \end{align*}
\end{definition}
One can observe the following by straightforward calculation.
\begin{lemma}
  \label{lem:EpsilonN}
    Let $A$ be a combinatory pre-model.
    For each $n \geq 1$, and for each $m \in \Nat$  and $s,t \in
    \mathcal{T}(\left\{ \vx_1, \dots,\vx_m \right\} + \underline{A})$, 
    we have
    \[
      \varepsilon_{n+1}st  \approx_{\vx_1\dots\vx_{m}}
      \varepsilon_{n}(st). 
    \]
   In particular,  
  \begin{enumerate}
    \item\label{lem:EpsilonN1} $\varepsilon_n a \vx_1 \cdots \vx_n
  \approx_{\vx_1\dots\vx_n} a \vx_1 \cdots \vx_n$,
    \item\label{lem:EpsilonN2} $\varepsilon_n a \vx_1 \cdots \vx_{n-1}
      \approx_{\vx_1\dots\vx_{n-1}} \ce (a \vx_1 \cdots \vx_{n-1})$
  \end{enumerate}
  for each $n \geq 1$ and $a \in \underline{A}$.
\end{lemma}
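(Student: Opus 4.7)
The plan is to prove the main equation first, and then to obtain the two listed specialisations by straightforward iteration.

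For the main equation $\varepsilon_{n+1} s t \approx_{\vx_1 \dots \vx_m} \varepsilon_n (s t)$, observe that the right-hand side of the definition of $\varepsilon_{n+1}$ is a closed term built only from $\ck, \cs, \ce$, so the equation will hold in the polynomial algebra $A[\vx_1, \dots, \vx_m]$ provided it already holds as a consequence of the four combinatory pre-model axioms. I would compute directly:
\[
  \varepsilon_{n+1} s t
  = \cs(\ck \ce)(\cs(\ck \varepsilon_n)) s t
  \approx \ck \ce s \bigl(\cs(\ck \varepsilon_n) s\bigr) t
  \approx \ce\bigl(\cs(\ck\varepsilon_n) s\bigr) t,
\]
and then apply axiom \ref{Ax:E} followed by axiom \ref{Ax:S} and axiom \ref{Ax:K} to reduce the right-hand side to $\varepsilon_n (s t)$. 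Since each step uses only the defining axioms of combinatory pre-models (and these remain valid in any polynomial algebra), the equation holds in $A[\vx_1, \dots, \vx_m]$ for arbitrary terms $s, t$.

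For part \ref{lem:EpsilonN1}, I would iterate the main equation (taking $s$ to be the obvious ``partial application'' and $t$ the next indeterminate):
\[
  \varepsilon_n a \vx_1 \vx_2 \cdots \vx_n
  \approx \varepsilon_{n-1}(a\vx_1) \vx_2 \cdots \vx_n
  \approx \varepsilon_{n-2}(a\vx_1\vx_2)\vx_3 \cdots \vx_n
  \approx \cdots
  \approx \varepsilon_1(a \vx_1 \cdots \vx_{n-1}) \vx_n,
\]
so $\varepsilon_n a \vx_1 \cdots \vx_n \approx_{\vx_1 \dots \vx_n} \ce(a\vx_1\cdots \vx_{n-1}) \vx_n \approx_{\vx_1\dots\vx_n} a \vx_1 \cdots \vx_n$ by axiom \ref{Ax:E}. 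Formally this is an induction on $n$, with the base case $n=1$ being just $\ce a \vx_1 \approx_\vx a \vx_1$.

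For part \ref{lem:EpsilonN2}, the same iteration stopped one step earlier gives
\[
  \varepsilon_n a \vx_1 \cdots \vx_{n-1}
  \approx \varepsilon_{n-1}(a \vx_1)\vx_2 \cdots \vx_{n-1}
  \approx \cdots
  \approx \varepsilon_1(a \vx_1 \cdots \vx_{n-1})
  = \ce(a \vx_1 \cdots \vx_{n-1}),
\]
again proved by an easy induction on $n$, with base case $\varepsilon_1 a = \ce a$. There is no real obstacle here: once the main equation is in place (it is just a short chain of applications of the four axioms), both specialisations fall out by mechanical iteration.
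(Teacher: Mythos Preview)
Your proposal is correct and matches the paper's approach: the paper simply says ``one can observe the following by straightforward calculation,'' and what you have written is precisely that calculation spelled out, together with the obvious inductions for the two corollaries.
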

If $A$ is a strongly reflexive combinatory pre-model, then the element $\varepsilon_n
a$ for each $a \in \underline{A}$ admits a succinct characterisation.
\begin{lemma}
  \label{lem:EspsilonNACM}
   If $A$ is strongly reflexive, then
   \[
     \varepsilon_n a 
     = \lambda^\dag \vx_1\dots \vx_n. a \vx_1\cdots \vx_n
   \]
   for each $a \in \underline{A}$ and $n \geq 1$.%
   \footnote{
  If $\varepsilon_{n}$ were defined by 
  $\varepsilon_1 = \ce$
  and $\varepsilon_{n+1} = \ce( \cs (\ck \ce)(\cs(\ck \varepsilon_{n})))$,
  we could even show 
     $
     \varepsilon_n 
     = \lambda^\dag \vy\vx_1\dots \vx_n. \vy \vx_1\cdots \vx_n.
     $
   }
\end{lemma}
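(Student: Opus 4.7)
The plan is to argue by induction on $n \geq 1$. The base case $n = 1$ is immediate from Definition~\ref{def:LambdaDag}\eqref{def:LambdaDag3}, which gives $\lambda^\dag \vx_1.\, a\vx_1 = \ce a = \varepsilon_1 a$.

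For the inductive step, fix $n \geq 2$ and assume the claim at index $n-1$ in every strongly reflexive combinatory pre-model. The first move is to record that $A[\vx_1]$ is itself strongly reflexive: by Theorem~\ref{thm:CombModel} the algebra $A[\vX]$ is reflexive, and then by Proposition~\ref{prop:Retract}\eqref{prop:Retract2}, Remark~\ref{rem:FunctorT}, and Corollary~\ref{prop:ReflexiveRetract}, the retract $(A[\vx_1])[\vx] \cong A[\vx_1,\vx]$ of $A[\vX]$ is reflexive, which by definition means that $A[\vx_1]$ is strongly reflexive. Applying the induction hypothesis to $A[\vx_1]$ at the element $a\vx_1$ and identifying the fresh indeterminates of $(A[\vx_1])[\vy_1,\dots,\vy_{n-1}]$ with $\vx_2,\dots,\vx_n$ via the canonical isomorphism of Remark~\ref{rem:FunctorT}, I obtain
\[
  \varepsilon_{n-1}(a\vx_1)
  =
  \lambda^\dag \vx_2 \dots \vx_n.\, a\vx_1\vx_2\cdots\vx_n
\]
as an equality in $A[\vx_1,\dots,\vx_n]$, and therefore in $A[\vX]$.

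Combining this with Lemma~\ref{lem:EpsilonN} and the $\beta$-property of $\lambda^\dag$ stated in Definition~\ref{def:LambdaDag} yields
\[
  \varepsilon_n a \cdot \vx_1
  \approx_\vX
  \varepsilon_{n-1}(a\vx_1)
  \approx_\vX
  \lambda^\dag \vx_2\dots\vx_n.\, a\vx_1\cdots\vx_n
  \approx_\vX
  \bigl(\lambda^\dag \vx_1\dots\vx_n.\, a\vx_1\cdots\vx_n\bigr)\vx_1
\]
in $A[\vX]$. Since $\vx_1$ is free in neither $\varepsilon_n a$ nor the abstraction $\lambda^\dag \vx_1\dots\vx_n.\, a\vx_1\cdots\vx_n$, Proposition~\ref{cor:AlgCombModelLamdaDag}\eqref{eq:AlgCombModelLamdaDag2} then gives $\ce(\varepsilon_n a) \approx_\vX \ce(\lambda^\dag \vx_1\dots\vx_n.\, a\vx_1\cdots\vx_n)$. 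Both sides are already fixed under $\ce$: for $\varepsilon_n a$, a one-step unfolding of the recursion produces $\varepsilon_n a = \ce(\cs(\ck\varepsilon_{n-1})a)$ when $n \geq 2$, and $\varepsilon_1 a = \ce a$; for the abstraction, inspection of Definition~\ref{def:LambdaDag} shows that every $\lambda^\dag$-term has outer shape $\ce(\cdots)$. Hence Lemma~\ref{lem:ReflE} applied in the reflexive pre-model $A[\vX]$ collapses the displayed equation to $\varepsilon_n a \approx_\vX \lambda^\dag \vx_1\dots\vx_n.\, a\vx_1\cdots\vx_n$, and the injectivity of $\sigma_A \colon A \to A[\vX]$ (it admits the retraction from Lemma~\ref{prop:UnivPropPolynomialAlgX} that sends every $\vx_i$ to $\ci$) transports the equality back to $A$.

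The main obstacle is the bookkeeping required to invoke the induction hypothesis on $A[\vx_1]$: one must first verify that $A[\vx_1]$ inherits strong reflexivity from $A$, and then keep track of the renaming of indeterminates between the internal polynomial algebra of $A[\vx_1]$ and the ambient $A[\vx_1,\dots,\vx_n]$. Once this is set up the rest is essentially mechanical: a single application of strong reflexivity on $\vx_1$, combined with the $\ce$-fixedness of both sides, reduces the desired identity to the pointwise equality supplied by Lemma~\ref{lem:EpsilonN} and the induction hypothesis.
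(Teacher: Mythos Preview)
Your proof is correct and follows the same approach the paper sketches: induction on $n$, with the inductive step using Lemma~\ref{lem:EpsilonN} to reduce to the inner case and then strong reflexivity to cancel the outermost variable. One small simplification: to see that $A[\vx_1]$ is strongly reflexive you can appeal directly to Proposition~\ref{prop:ImageAlgCombModel} (or simply note that the axioms of Theorem~\ref{Thm:AlgCombMod} are closed), rather than routing through $A[\vX]$ and a retract argument.
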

\begin{proof}
  By straightforward induction $n \in \Nat$.
\end{proof}

The construction $\varepsilon_n$ provides yet another characterisation
of strong reflexivity.
\begin{proposition}
  A combinatory pre-model $A$ is strongly reflexive if and
  only if 
   \[
    a \vx_1 \cdots \vx_n
    \approx_{\vx_1\dots\vx_n} b \vx_1 \cdots \vx_n
     \implies
     \varepsilon_n a = \varepsilon_n b
   \]
   for each $n \geq 1$ and $a,b \in \underline{A}$.
\end{proposition}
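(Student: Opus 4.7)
The plan is to prove both directions by relating $\varepsilon_n a$ to the $\lambda^\dag$-abstraction $\lambda^{\dag}\vx_1\cdots\vx_n.\,a\vx_1\cdots\vx_n$, and then exchanging the problem for one about compatibility of $\lambda^\dag$ with $\approx_\vX$, which is exactly what strong reflexivity provides (via Proposition \ref{cor:AlgCombModelLamdaDag}). Throughout I would work inside $A[\vX]$, using Proposition \ref{prop:Retract}\eqref{prop:Retract2} to freely pass between $\approx_{\vx_1\dots\vx_n}$ and $\approx_\vX$, and the convention that a closed term $t$ equals its interpretation $\llbracket t\rrbracket \in \underline{A}$ (Notation \ref{rem:ClosedTerm}), so equalities between closed terms under $\approx_\vX$ become equalities in $A$.

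For the $(\Rightarrow)$ direction, assume $A$ is strongly reflexive and that $a\vx_1\cdots\vx_n \approx_{\vx_1\dots\vx_n} b\vx_1\cdots\vx_n$. By Lemma \ref{lem:EspsilonNACM}, $\varepsilon_n a = \lambda^\dag \vx_1\cdots\vx_n.\,a\vx_1\cdots\vx_n$ and similarly for $b$. Lifting the hypothesis to $\approx_\vX$ and applying Proposition \ref{cor:AlgCombModelLamdaDag}\eqref{eq:AlgCombModelLamdaDag3} iteratively $n$ times (once for each $\vx_n, \vx_{n-1},\dots,\vx_1$) yields the equality $\lambda^\dag\vx_1\cdots\vx_n.\,a\vx_1\cdots\vx_n \approx_\vX \lambda^\dag\vx_1\cdots\vx_n.\,b\vx_1\cdots\vx_n$ between two closed terms; this forces $\varepsilon_n a = \varepsilon_n b$ in $A$.

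For the $(\Leftarrow)$ direction, I would invoke Lemma \ref{lem:AlgCombMod} and reduce the goal to showing that Corollary \ref{cor:ReflexPolynomial}\eqref{lem:ReflexPolynomial3} holds for every $n \in \Nat$. So let $t, u \in \mathcal{T}(\{\vx_1,\dots,\vx_n\}+\underline{A})$ with $t\vx_{n+1} \approx_{\vx_1\dots\vx_{n+1}} u\vx_{n+1}$. Set $a = \llbracket \lambda^\dag \vx_1\cdots\vx_n.\,t\rrbracket$ and $b = \llbracket \lambda^\dag \vx_1\cdots\vx_n.\,u\rrbracket$; these are elements of $\underline{A}$ because the abstractions are closed. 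Iterated application of the $\beta$-like property of $\lambda^\dag$ (the analogue of Lemma \ref{lem:Beta} noted in Definition \ref{def:LambdaDag}) gives $a\vx_1\cdots\vx_n \approx_\vX t$ and similarly for $b$, whence
\[
a\vx_1\cdots\vx_n\vx_{n+1} \approx_\vX t\vx_{n+1} \approx_\vX u\vx_{n+1} \approx_\vX b\vx_1\cdots\vx_n\vx_{n+1}.
\]
Applying the assumed implication at level $n+1$ gives $\varepsilon_{n+1}a = \varepsilon_{n+1}b$ in $A$. Applying both sides to $\vx_1,\dots,\vx_n$ and simplifying via Lemma \ref{lem:EpsilonN}\eqref{lem:EpsilonN2} yields $\ce(a\vx_1\cdots\vx_n) \approx_{\vx_1\dots\vx_n} \ce(b\vx_1\cdots\vx_n)$, and substituting back the $\approx_\vX$-equalities $a\vx_1\cdots\vx_n \approx t$, $b\vx_1\cdots\vx_n \approx u$ gives $\ce t \approx_{\vx_1\dots\vx_n} \ce u$, as required.

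The main obstacle is purely bookkeeping: carefully distinguishing between closed terms, their interpretations in $A$, and their classes in various polynomial algebras, and matching the iterated application of the one-variable Proposition \ref{cor:AlgCombModelLamdaDag} / Corollary \ref{cor:ReflexPolynomial} to the $n$ abstractions hidden in $\varepsilon_n$. No new combinator manipulation is needed beyond Lemma \ref{lem:EpsilonN} and Lemma \ref{lem:EspsilonNACM}; the content is the bridge between $\varepsilon_n$ and iterated $\lambda^\dag$.
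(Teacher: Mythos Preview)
Your proposal is correct and follows essentially the same approach as the paper. The forward direction is identical. For the converse, the paper works directly from the definition of strong reflexivity (namely, reflexivity of $A[\vx]$): given $t,u \in \mathcal{T}(\{\vx\}+\underline{A})$ with $t\vy \approx_{\vx\vy} u\vy$, it sets $a=\lambda^\dag\vx.t$, $b=\lambda^\dag\vx.u$, applies the hypothesis at level $n=2$ to get $\varepsilon_2 a = \varepsilon_2 b$, and unwinds via Lemma~\ref{lem:EpsilonN}. Your argument is the same manoeuvre carried out at every level $n$ via Lemma~\ref{lem:AlgCombMod} and Corollary~\ref{cor:ReflexPolynomial}; this is harmless but unnecessary, since the single case $n=1$ of your argument already recovers the paper's proof verbatim and suffices by definition.
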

\begin{proof}
  \noindent ($\Rightarrow$) 
  By the $n$-time applications of Proposition \ref{cor:AlgCombModelLamdaDag}
  and Lemma \ref{lem:EspsilonNACM}.
  \smallskip

  \noindent($\Leftarrow$) 
  We must show that $A[\vx]$ is reflexive. Let $t,u \in \mathcal{T}(\left\{
  \vx \right\} + \underline{A})$
  be such that $t \vy \approx_{\vx\vy} u \vy$. Then,
  $(\lambda^\dag \vx. t)\vx \vy
  \approx_{\vx\vy} (\lambda^\dag \vx. u) \vx \vy$, and so 
  $\varepsilon_2(\lambda^\dag \vx. t) 
  = \varepsilon_2(\lambda^\dag \vx. u)$.
  By Lemma \ref{lem:EpsilonN}, we have 
  $\ce t 
  \approx_\vx
  \ce((\lambda^\dag \vx. t)\vx) 
  \approx_\vx
  \ce((\lambda^\dag \vx. u)\vx) 
  \approx_\vx
  \ce u$.
\end{proof}

The notion of stability for combinatory models
also makes sense for strongly reflexive combinatory pre-models
(cf.\ Barendregt~\cite[5.6.4]{BarendregtLambda}; Meyer~\cite[Section
6]{MeyerLambda}; Scott~\cite{ScottLambda}).\footnote{The notion of
stability
also makes sense for combinatory pre-models in general. However, we
have not found any significant consequence of the notion in that
general setting.}
\begin{definition}
  \label{def:Stability}
  Let  $A = (\underline{A},\mathbin{\cdot},\ck,\cs,\ci,\ce)$ 
  be a strongly reflexive combinatory pre-model. Then, $A$
  is said to be \emph{stable} if 
  \begin{align}
    \label{eq:Stability}
    \ck &= \varepsilon_2 \ck, 
    &
    \cs &= \varepsilon_3 \cs, 
    &
    \ci &= \varepsilon_1 \ci, 
    &
    \ce &= \varepsilon_2 \ce.
  \end{align}
  A combinatory model is \emph{stable} if it is stable as a
  strongly reflexive combinatory pre-model (cf. Lemma \ref{lem:CombModAlg}).
\end{definition}

In a stable strongly reflexive combinatory pre-model, the constants $\ce$ and $\ci$
coincide with the usual construction of these constants from $\ck$ and $\cs$.
\begin{lemma}
  \label{lem:FixEI}
    If $A$
    is strongly reflexive and stable, then 
    $\ci = \cs \ck \ck$ and $\ce = \cs (\ck \ci)$.
\end{lemma}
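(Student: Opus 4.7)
The plan is to handle both equalities by the same template: rewrite the right-hand side using stability of $\cs$ (which, via Lemma~\ref{lem:EspsilonNACM}, reads $\cs = \lambda^\dag\vx\vy\vz.\cs\vx\vy\vz$ in $A$), apply $\beta$-reduction for $\lambda^\dag$ inside the polynomial algebra, collapse the resulting abstraction using the $\xi$-rule of Proposition~\ref{cor:AlgCombModelLamdaDag}\eqref{eq:AlgCombModelLamdaDag3}, and finish by invoking stability of $\ci$ or of $\ce$.

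For $\ci = \cs\ck\ck$: applying the stability equation for $\cs$ to $\ck,\ck$ and $\beta$-reducing twice (both sides are closed, so the identities in $A[\vX]$ descend to $A$), one obtains $\cs\ck\ck = \lambda^\dag\vz.\cs\ck\ck\vz$. The $\cs$- and $\ck$-axioms give $\cs\ck\ck\vz \approx_\vz \vz$, so the $\xi$-rule yields $\lambda^\dag\vz.\cs\ck\ck\vz = \lambda^\dag\vz.\vz$, which by Definition~\ref{def:LambdaDag}\eqref{def:LambdaDag1} equals $\ce\ci$. Finally, stability of $\ci$ unfolds as $\ci = \varepsilon_1\ci = \ce\ci$, closing the chain.

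For $\ce = \cs(\ck\ci)$ the same three steps give $\cs(\ck\ci) = \lambda^\dag\vy\vz.\cs(\ck\ci)\vy\vz$, and $\cs(\ck\ci)\vy\vz \approx_{\vy\vz} \vy\vz$ by the $\cs,\ck,\ci$-axioms, so two applications of $\xi$ produce $\lambda^\dag\vy\vz.\vy\vz$, which unfolds through Definition~\ref{def:LambdaDag}\eqref{def:LambdaDag3} (first in $\vz$, then in $\vy$) to $\ce\ce$. This requires the auxiliary identity $\ce\ce = \ce$, which I would establish first by the same template applied to $\ce$ itself: $\ce = \varepsilon_2\ce = \lambda^\dag\vx\vy.\ce\vx\vy$ by stability of $\ce$ and Lemma~\ref{lem:EspsilonNACM}, and $\xi$ together with Definition~\ref{def:LambdaDag}\eqref{def:LambdaDag3} rewrite the latter as $\lambda^\dag\vx\vy.\vx\vy = \ce\ce$.

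The only delicate point in the plan is the $\beta$-rule for $\lambda^\dag$: it is phrased as an $\approx_\vX$-identity in the polynomial algebra, so the moves from $(\lambda^\dag\vx\vy\vz.\cs\vx\vy\vz)\ck\ck$ to $\lambda^\dag\vz.\cs\ck\ck\vz$ and from $(\lambda^\dag\vx\vy\vz.\cs\vx\vy\vz)(\ck\ci)$ to $\lambda^\dag\vy\vz.\cs(\ck\ci)\vy\vz$ have to be justified by noting that both sides are interpretations of closed terms in $\mathcal{T}(\underline{A})$, and that $\approx_\vX$-equivalent closed terms coincide in $A$ under the canonical interpretation. Beyond this observation, the argument is a routine chain of $\beta\xi$-rewrites guided by Lemma~\ref{lem:EspsilonNACM}.
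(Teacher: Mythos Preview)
Your argument is correct, but it takes a noticeably longer path than the paper's. The paper works directly in the polynomial algebra: from the trivial identities $\ci\vx \approx_{\vx} \cs\ck\ck\vx$ and $\ce\vx\vy \approx_{\vx\vy} \cs(\ck\ci)\vx\vy$, reflexivity (Lemma~\ref{lem:reflexive}) immediately gives $\ce\ci = \ce(\cs\ck\ck)$ and, after one more layer, $\ce\ce = \ce(\cs(\ck\ci))$. Stability is then used only to strip the outer $\ce$'s: $\ci = \varepsilon_1\ci = \ce\ci$ and, via Lemma~\ref{lem:EpsilonN}, $\cs\ck\ck = \varepsilon_3\cs\,\ck\ck = \ce(\cs\ck\ck)$, and similarly for the second equation. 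No $\lambda^\dag$-manipulation, no $\beta$, no $\xi$.

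Your route instead pushes everything through the $\lambda^\dag$-calculus: rewrite $\cs$ via Lemma~\ref{lem:EspsilonNACM}, $\beta$-reduce, apply the $\xi$-rule of Proposition~\ref{cor:AlgCombModelLamdaDag}, and evaluate the resulting $\lambda^\dag$-term using Definition~\ref{def:LambdaDag}. This works, and your handling of the delicate $\beta$-step is in fact fine here because the substituted terms ($\ck$ and the element $\ck\ci \in \underline{A}$) are atoms, so the case analysis of $\lambda^\dag$ is unchanged by the substitution and $(\lambda^\dag\vy\vz.\cs\vx\vy\vz)[\vx/\ck]$ literally equals $\lambda^\dag\vy\vz.\cs\ck\vy\vz$. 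The extra detour through $\ce\ce = \ce$ in the $\ce$-case is also correct. But compared with the paper you are invoking more machinery (the full $\xi$-rule rather than a single application of reflexivity, plus Lemma~\ref{lem:EspsilonNACM}) to reach the same endpoint; the paper's argument is essentially the same content compressed to two lines by computing with polynomials rather than with $\lambda^\dag$-normal forms.
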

\begin{proof}
  Suppose that $A$ is strongly reflexive  and stable. Then
  \begin{align*}
      \ci \vx 
      &\approx_\vx  \ck \vx (\ck \vx)
      \approx_\vx
      \cs \ck \ck \vx, \\[.5em]
      \ce \vx \vy
      &\approx_{\vx\vy}  
      \vx \vy 
      \approx_{\vx\vy}  
      \ck \ci \vy (\vx \vy )
      \approx_{\vx\vy}  
      \cs (\ck \ci) \vx \vy.
  \end{align*}
  Since $A$ is stable and reflexive, we have $\ci = \ce \ci = \ce (\cs \ck \ck) = \cs \ck
  \ck$. Similarly, we have $\ce = \cs (\ck \ci)$.
\end{proof}
The following is immediate from Lemma \ref{lem:EpsilonN}.
\begin{lemma}
  \label{lem:StabKSImpStabIE}
  If $A$
  is a combinatory pre-model  such that $\ci = \cs \ck\ck$ and $\ce =
  \cs (\ck \ci)$, then $\cs = \varepsilon_3 \cs$ implies $\ci =
  \varepsilon_1\ci$ and $\ce = \varepsilon_2\ce$.
\end{lemma}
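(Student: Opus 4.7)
The plan is to unfold both target equations using the two hypotheses $\ci = \cs\ck\ck$ and $\ce = \cs(\ck\ci)$, and then apply Lemma~\ref{lem:EpsilonN} in the specialisation $m=0$. In that specialisation the congruence $\approx_{\vx_1\dots\vx_m}$ on $\mathcal{T}(\underline{A})$ collapses (via the injection clause) to equality in $\underline{A}$, so the lemma supplies the clean identity $\varepsilon_{n+1}st = \varepsilon_n(st)$ for elements $s,t \in \underline{A}$, usable in either direction.

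For the equation $\ce = \varepsilon_2\ce$, I would substitute $\ce = \cs(\ck\ci)$ inside $\varepsilon_2\ce$, apply the lemma once (read backward with $s=\cs$, $t=\ck\ci$) to obtain $\varepsilon_2(\cs(\ck\ci)) = \varepsilon_3\cs\,(\ck\ci)$, and then invoke the hypothesis $\cs = \varepsilon_3\cs$ to drop back to $\cs(\ck\ci) = \ce$. This produces the short chain $\varepsilon_2\ce = \varepsilon_2(\cs(\ck\ci)) = \varepsilon_3\cs(\ck\ci) = \cs(\ck\ci) = \ce$.

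For the equation $\ci = \varepsilon_1\ci$, the same strategy applies with one extra application of the lemma, since $\ci$ is built from $\cs$ applied to two arguments. I would rewrite $\varepsilon_1\ci = \varepsilon_1(\cs\ck\ck)$, apply Lemma~\ref{lem:EpsilonN} backward first with $s = \cs\ck$, $t = \ck$ to get $\varepsilon_2(\cs\ck)\ck$, and then again with $s = \cs$, $t = \ck$ to reach $\varepsilon_3\cs\ck\ck$; finally $\cs = \varepsilon_3\cs$ collapses this to $\cs\ck\ck = \ci$.

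There is no real obstacle: both arguments are routine symbolic rewrites. The only point worth noting is that the whole thing works as literal equality (rather than merely up to $\approx$) precisely because we are invoking Lemma~\ref{lem:EpsilonN} with $m=0$, so that the statement is about elements of $\underline{A}$; nothing about reflexivity or strong reflexivity is needed here, which is why the lemma is stated at the level of arbitrary combinatory pre-models.
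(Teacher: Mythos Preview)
Your proposal is correct and takes essentially the same approach as the paper, which merely writes ``immediate from Lemma~\ref{lem:EpsilonN}''. You have accurately unpacked the intended argument: specialise Lemma~\ref{lem:EpsilonN} to $m=0$ (so that $\approx$ collapses to equality in $\underline{A}$) and peel off the arguments of $\cs$ one at a time to reduce $\varepsilon_1\ci$ and $\varepsilon_2\ce$ to expressions involving $\varepsilon_3\cs$, then invoke the hypothesis $\varepsilon_3\cs=\cs$.
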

\begin{remark}
  By the above two lemmas, the notion of stable combinatory models
  in the sense of Definition~\ref{def:Stability}
  agrees with the corresponding notion in the literature
  which does not include $\ci$ as a primitive
  \cite[5.6.4]{BarendregtLambda}.
\end{remark}

We recall the connection between stable combinatory models and lambda
models. As the definition of the latter, we adopt the following
characterisation.
\begin{definition}[{Barendregt~\cite[5.6.3]{BarendregtLambda}}]
  A \emph{lambda model} is a combinatory algebra $A = (\underline{A},
  \cdot, \ck,\cs)$ such that 
  the structure
  $(\underline{A}, \cdot, \ck,\cs,\ci,\ce)$, where
  $\ci = \cs\ck\ck$ and 
  $\ce = \cs(\ck\ci)$,
  is a combinatory model satisfying $\ck = \varepsilon_2 \ck$ and 
  $\cs = \varepsilon_3 \cs$.
\end{definition}
  The following is immediate from Lemma \ref{lem:FixEI} and Lemma
  \ref{lem:StabKSImpStabIE}.
\begin{proposition}[{Meyer\cite[Section 6]{MeyerLambda};
  Barendregt~\cite[5.6.6(i)]{BarendregtLambda}}]
  \label{prop:StableCMvxLM}
  The following are equivalent for a combinatory pre-model 
  $A$:
  \begin{enumerate}
    \item\label{prop:StableCMvxLM1} $A$ is a stable combinatory model.
    \item\label{prop:StableCMvxLM2} $(\underline{A},\cdot, \ck,\cs)$
      is a lambda model, and $\ci = \cs \ck \ck$ and $\ce = \cs (\ck \ci)$.
  \end{enumerate}
\end{proposition}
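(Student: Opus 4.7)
The plan is to verify the two directions separately, each by unfolding the definitions and invoking the two cited lemmas essentially off the shelf. Throughout, I would keep in mind that every combinatory model is strongly reflexive by Lemma~\ref{lem:CombModAlg}, so stability of a combinatory model in the sense of Definition~\ref{def:Stability} is meaningful and amounts to the four equations $\ck = \varepsilon_2 \ck$, $\cs = \varepsilon_3 \cs$, $\ci = \varepsilon_1 \ci$, $\ce = \varepsilon_2 \ce$.

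For the implication \ref{prop:StableCMvxLM1}~$\Rightarrow$~\ref{prop:StableCMvxLM2}, I would assume $A$ is a stable combinatory model. Since it is strongly reflexive and stable, Lemma~\ref{lem:FixEI} immediately yields $\ci = \cs\ck\ck$ and $\ce = \cs(\ck\ci)$. The underlying combinatory algebra $(\underline{A}, \cdot, \ck, \cs)$, when expanded with these derived constants, is by hypothesis a combinatory model satisfying $\ck = \varepsilon_2\ck$ and $\cs = \varepsilon_3\cs$ (the other two stability equations are surplus for matching the definition of a lambda model). Hence $(\underline{A}, \cdot, \ck, \cs)$ is a lambda model, and the required identifications of $\ci$ and $\ce$ have been established.

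For the converse \ref{prop:StableCMvxLM2}~$\Rightarrow$~\ref{prop:StableCMvxLM1}, I would assume $(\underline{A}, \cdot, \ck, \cs)$ is a lambda model and that the chosen primitives satisfy $\ci = \cs\ck\ck$ and $\ce = \cs(\ck\ci)$. By the definition of a lambda model, the expanded structure is a combinatory model satisfying $\ck = \varepsilon_2\ck$ and $\cs = \varepsilon_3\cs$. Now Lemma~\ref{lem:StabKSImpStabIE} applies precisely because $\ci$ and $\ce$ are given in the standard form, and it produces the remaining two stability equations $\ci = \varepsilon_1\ci$ and $\ce = \varepsilon_2\ce$. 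All four equations of Definition~\ref{def:Stability} then hold, so $A$ is stable, and combined with the combinatory model structure this gives \ref{prop:StableCMvxLM1}.

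There is really no main obstacle here: both directions are a matter of matching the two sets of stability equations. The only subtle point is bookkeeping, namely noting that Definition~\ref{def:Stability} requires all four equations while the definition of lambda model demands only the two for $\ck$ and $\cs$, with $\ci = \cs\ck\ck$ and $\ce = \cs(\ck\ci)$ baked in; the role of Lemma~\ref{lem:FixEI} and Lemma~\ref{lem:StabKSImpStabIE} is precisely to bridge these two formulations in opposite directions.
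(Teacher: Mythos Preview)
Your proposal is correct and follows exactly the approach indicated in the paper, which simply states that the result is immediate from Lemma~\ref{lem:FixEI} and Lemma~\ref{lem:StabKSImpStabIE}. You have merely made explicit the two-line argument the paper leaves implicit, using each lemma for the direction in which it is needed.
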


We establish an analogue of the above proposition
for strongly reflexive combinatory pre-models and lambda algebras. The following
characterisation of lambda algebras is often attributed to Curry.
\begin{definition}[{Barendregt~\cite[5.2.5, 7.3.6]{BarendregtLambda}}]
  \label{def:LambdaAlg}
    A \emph{lambda algebra} is a combinatory
    algebra
    \(
     A = (\underline{A},\mathbin{\cdot},\ck,\cs) 
    \) 
    satisfying the following equations:
    \begin{enumerate}
      \item\label{Ax:LambdaAlg1} 
        $\lambda^\ast \vx \vy. \ck \vx \vy = \ck$,

      \item\label{Ax:LambdaAlg2}
        $\lambda^\ast \vx \vy \vz. \cs \vx \vy  \vz = \cs$,

      \item\label{Ax:LambdaAlg3}
        $\lambda^\ast \vx \vy. \cs (\cs (\ck\ck)\vx)\vy 
        = \lambda^\ast \vx \vy \vz.  \vx \vz$,

      \item\label{Ax:LambdaAlg4}
        $\lambda^\ast \vx \vy \vz. \cs (\cs (\cs (\ck\cs)\vx)\vy) \vz
        = \lambda^\ast \vx \vy \vz.  \cs (\cs \vx \vz) (\cs \vy \vz)$,

      \item\label{Ax:LambdaAlg5}
        $\lambda^\ast \vx \vy. \cs (\ck \vx) (\ck \vy) 
        = \lambda^\ast \vx \vy.  \ck \vx \vy$.
    \end{enumerate}
\end{definition}
We recall the following fundamental result, which relates lambda algebras
to lambda models (cf.\ Theorem~\ref{thm:CombModel} and
Lemma~\ref{lem:CombModAlg}).
\begin{proposition}[{Meyer~\cite[Section~7]{MeyerLambda}}]
  \label{prop:LambdaAlgebraThm}
  \leavevmode
  \begin{enumerate}
    \item A combinatory algebra $A$ is a lambda algebra if and only if
      $A[\vX]$ is a lambda model.
    \item Every lambda model is a lambda algebra.
  \end{enumerate}
\end{proposition}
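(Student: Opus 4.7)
Both parts of this result can be proved by leveraging the main thread of Section~\ref{sec:Stability}: the correspondence between lambda algebras and stable strongly reflexive combinatory pre-models (equipped with the canonical constants $\ci = \cs\ck\ck$ and $\ce = \cs(\ck\ci)$).

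For part (2), let $A = (\underline{A}, \cdot, \ck, \cs)$ be a lambda model and consider its extension $A'$ to a combinatory pre-model with the canonical $\ci$ and $\ce$. By Proposition~\ref{prop:StableCMvxLM} and Lemma~\ref{lem:CombModAlg}, $A'$ is a stable strongly reflexive combinatory pre-model. To verify Curry's five axioms in $A'$, I would first establish an $n$-fold version of the Meyer--Scott axiom, deriving $\varepsilon_n s = \varepsilon_n t$ whenever $s\vec{c} = t\vec{c}$ for every tuple $\vec{c} \in \underline{A}^n$; this follows by iterating Meyer--Scott together with Lemma~\ref{lem:EpsilonN}. Next, I would show that the $\lambda^\ast$-abstractions occurring in Curry's axioms are stable (fixed by the appropriate $\varepsilon_n$) and in fact coincide with the corresponding $\lambda^\dag$-abstractions, using Proposition~\ref{prop:LambdaDagLambdaAst}, Lemma~\ref{lem:ReflE}, and the stability of $\ck$ and $\cs$. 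Axioms (1) and (2) of Definition~\ref{def:LambdaAlg} then follow from $\ck = \varepsilon_2 \ck$ and $\cs = \varepsilon_3 \cs$ via Lemma~\ref{lem:EspsilonNACM}, while axioms (3)--(5) reduce to polynomial identities in $A'[\vx_1, \vx_2, \vx_3]$ plus the $n$-fold Meyer--Scott.

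For part (1), the backward direction is immediate from part (2): $A$ is a retract of $A[\vX]$ by Proposition~\ref{prop:Retract}\eqref{prop:Retract2} with $n = 0$, and Curry's axioms are closed equations, hence preserved under the retraction. For the forward direction, I would start from a lambda algebra $A$, extend it to $A'$ with the canonical constants, and derive from Curry's axioms both the seven closed equations of Theorem~\ref{Thm:AlgCombMod} and the four stability equations of Definition~\ref{def:Stability}. Theorem~\ref{thm:CombModel} then ensures $A'[\vX]$ is a combinatory model; stability transfers as a closed property, so $A'[\vX]$ is a stable combinatory model; and Proposition~\ref{prop:StableCMvxLM} repackages the result as the required lambda model structure on $A[\vX]$.

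The principal obstacle is the equational bridge between Curry's five axioms and the eleven equations for stable strong reflexivity. Deriving the latter from Curry's axioms requires unfolding the $\lambda^\dag$-closures appearing in Theorem~\ref{Thm:AlgCombMod} into $\lambda^\ast$-expressions on which Curry's axioms act directly; conversely, recovering Curry's axioms from the eleven equations relies on the identification of $\lambda^\ast$ with $\lambda^\dag$ under stability and on Lemma~\ref{lem:FixEI} to ensure that the canonical $\ci$ and $\ce$ are the intended ones. This equational reformulation --- the content of the claim, cited in the abstract, that lambda algebras are exactly the stable strongly reflexive combinatory algebras --- supplies the bulk of the computational work hidden behind both directions.
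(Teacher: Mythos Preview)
The paper's own proof is simply a citation: ``See Meyer~[Section~7].'' In the paper's logical architecture, Proposition~\ref{prop:LambdaAlgebraThm} is an imported result on which the remainder of Section~\ref{sec:Stability} rests --- in particular Theorem~\ref{thm:StableACMisLA} (the correspondence between lambda algebras and stable strongly reflexive pre-models) is \emph{derived from} Proposition~\ref{prop:LambdaAlgebraThm}, not the other way round.

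Your proposal reverses this order: you want to establish the equational bridge between Curry's axioms and the axioms of stable strong reflexivity directly, and then read off Proposition~\ref{prop:LambdaAlgebraThm} via Theorem~\ref{thm:CombModel} and Proposition~\ref{prop:StableCMvxLM}. The individual lemmas you cite (Lemma~\ref{lem:CombModAlg}, Lemma~\ref{lem:EpsilonN}, Lemma~\ref{lem:EspsilonNACM}, Proposition~\ref{prop:StableCMvxLM}, Theorem~\ref{thm:CombModel}) do not themselves depend on Proposition~\ref{prop:LambdaAlgebraThm}, so there is no formal circularity. But be aware that when you write ``the content of the claim \dots\ that lambda algebras are exactly the stable strongly reflexive combinatory algebras,'' you are pointing at Theorem~\ref{thm:StableACMisLA}, whose proof in the paper \emph{does} invoke Proposition~\ref{prop:LambdaAlgebraThm}. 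You cannot simply quote that theorem; you must supply the equational derivation (Curry's five closed axioms $\Rightarrow$ the seven equations of Theorem~\ref{Thm:AlgCombMod} plus stability, and conversely) by hand. The paper gives you part of this --- Lemmas~\ref{lem:LambdaAstLambdaDagAgree} and~\ref{lem:minReflexive} go from the Selinger-style conditions to reflexivity --- but it nowhere derives those conditions directly from Curry's axioms without passing through Proposition~\ref{prop:LambdaAlgebraThm}. That missing computation is genuine work you would have to carry out, and your outline correctly flags it as the principal obstacle but does not discharge it.
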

\begin{proof}
See Meyer~\cite[Section~7]{MeyerLambda}.
\end{proof}
\begin{remark}
  In Proposition~\ref{prop:LambdaAlgebraThm}, $A[\vX]$ denotes the
  polynomial algebra of $A = (\underline{A}, \cdot, \ck, \cs)$ as a
  combinatory algebra.  However, note that $A[\vX]$ coincides with the
  polynomial algebra of $A$ as a combinatory pre-model
  $(\underline{A},\cdot, \ck, \cs, \ci, \ce)$ where $\ci = \cs \ck
  \ck$ and $\ce = \cs (\ck \ci)$.  This is because the properties of
  $\ci$ and $\ce$ in $A[\vX]$ (as a combinatory pre-model) are
  derivable from those of $\ck$ and $\cs$.
\end{remark}

We can now establish the following correspondence (cf.\ Proposition
\ref{prop:StableCMvxLM}).
\begin{theorem}
  \label{thm:StableACMisLA}
  The following are equivalent for a combinatory pre-model
  $A$:
  \begin{enumerate}
    \item\label{thm:StableACMisLA1} $A$ is strongly reflexive and
      stable.
    \item\label{thm:StableACMisLA2} $(\underline{A},\cdot, \ck,\cs)$
      is a lambda algebra, and $\ci = \cs \ck \ck$ and $\ce = \cs (\ck \ci)$.
  \end{enumerate}
\end{theorem}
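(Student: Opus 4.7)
The plan is to prove both directions by passing to the polynomial algebra $A[\vX]$ and leveraging the correspondence between stable combinatory models and lambda models established in Proposition \ref{prop:StableCMvxLM}, together with the characterisation of lambda algebras from Proposition \ref{prop:LambdaAlgebraThm}.

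For $\ref{thm:StableACMisLA1} \Rightarrow \ref{thm:StableACMisLA2}$, I first invoke Lemma \ref{lem:FixEI} to extract $\ci = \cs\ck\ck$ and $\ce = \cs(\ck\ci)$ in $A$. By Theorem \ref{thm:CombModel}, strong reflexivity of $A$ gives that $A[\vX]$ is a combinatory model. Since the stability equations of Definition \ref{def:Stability} are closed in $\ck,\cs,\ci,\ce$, they are preserved by the homomorphism $\sigma_A \colon A \to A[\vX]$ (note that $\sigma_A(\varepsilon_n a) = \varepsilon_n \sigma_A(a)$ because $\varepsilon_n$ is built from $\ck,\cs,\ce$), so $A[\vX]$ is stable as well. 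As $\ci = \cs\ck\ck$ and $\ce = \cs(\ck\ci)$ likewise transfer to $A[\vX]$, Proposition \ref{prop:StableCMvxLM} yields that $(\underline{A[\vX]}, \cdot, \ck, \cs)$ is a lambda model, and then Proposition \ref{prop:LambdaAlgebraThm} gives that $(\underline{A}, \cdot, \ck, \cs)$ is a lambda algebra.

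For $\ref{thm:StableACMisLA2} \Rightarrow \ref{thm:StableACMisLA1}$, the lambda algebra hypothesis combined with Proposition \ref{prop:LambdaAlgebraThm} gives that $A[\vX]$ (as a combinatory algebra) is a lambda model. By the Remark following Proposition \ref{prop:LambdaAlgebraThm}, the assumptions $\ci = \cs\ck\ck$ and $\ce = \cs(\ck\ci)$ also hold in $A[\vX]$, so Proposition \ref{prop:StableCMvxLM} upgrades $A[\vX]$ to a stable combinatory model. Lemma \ref{lem:CombModAlg} then shows $A[\vX]$ is strongly reflexive, and Theorem \ref{thm:CombModel} propagates strong reflexivity back to $A$. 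Stability of $A$ is obtained by pulling the four closed equations of Definition \ref{def:Stability} back along $\sigma_A$; this requires $\sigma_A$ to be injective, which follows from Lemma \ref{prop:UnivPropPolynomialAlgX} by extending $\id_A$ to a retraction $A[\vX] \to A$ (sending every $\vx_i$ to $\ci$).

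The main obstacle is the clean transfer of stability between $A$ and $A[\vX]$. Forward, it is immediate from the homomorphism property of $\sigma_A$; backward, it depends on constructing the explicit left inverse to $\sigma_A$ just described. A secondary concern is reconciling the two notions of stability in play (the four-equation version of Definition \ref{def:Stability} for combinatory pre-models versus the two-equation version for lambda models), but under the assumption $\ci = \cs\ck\ck$ and $\ce = \cs(\ck\ci)$ these are equivalent by Lemma \ref{lem:StabKSImpStabIE}, so this difficulty is only apparent.
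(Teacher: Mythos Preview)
Your proposal is correct and follows essentially the same route as the paper: both arguments pass through $A[\vX]$ and combine Theorem~\ref{thm:CombModel}, Proposition~\ref{prop:StableCMvxLM}, and Proposition~\ref{prop:LambdaAlgebraThm}, together with the observation that the stability axioms \eqref{eq:Stability} and the identities $\ci = \cs\ck\ck$, $\ce = \cs(\ck\ci)$ are closed equations and hence transfer along $\sigma_A$ and its retraction. The paper merely packages this as a single chain of four equivalent statements rather than treating the two implications separately, and it appeals to Proposition~\ref{prop:Retract}\eqref{prop:Retract2} for the retraction you construct by hand; otherwise the arguments coincide.
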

\begin{proof}
  Since the axioms of stability \eqref{eq:Stability} are closed
  equations,
  the following are equivalent by Theorem~\ref{thm:CombModel}, 
  Proposition \ref{prop:StableCMvxLM}, and  Proposition
  \ref{prop:LambdaAlgebraThm}:
  \begin{enumerate}
    \item $A$ is strongly reflexive and stable.
    \item $A[\vX]$ is a stable combinatory model.
    \item $(\underline{A[\vX]},\pa, \ck,\cs)$ is a
      lambda model, and $\ci \approx_{\vX} \cs \ck \ck$ and $\ce \approx_{\vX} \cs
      (\ck \ci)$.
    \item $(\underline{A}, \cdot, \ck,\cs)$ is a
      lambda algebra, and $\ci = \cs \ck \ck$ and $\ce = \cs (\ck \ci)$.
  \end{enumerate}
  Here, $\pa$ denotes the application of $A[\vX]$ (cf.\ Definition~\ref{def:PolyAlg}).
\end{proof}
We can also establish an algebraic analogue of the following result.
\begin{proposition}[{Meyer~\cite[Section 6]{MeyerLambda};
  Barendregt~\cite[5.6.6(ii)]{BarendregtLambda}}]
  \label{prop:LambdaModelThm}
  If $A$
  is a combinatory model, then $(\underline{A}, \cdot,
  \varepsilon_2 \ck,
  \varepsilon_3 \cs)$ is a lambda model.
\end{proposition}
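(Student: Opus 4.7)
The plan is to apply Proposition~\ref{prop:StableCMvxLM} by showing that the combinatory pre-model $(\underline{A}, \cdot, \ck', \cs', \ci', \ce')$, with $\ck' = \varepsilon_2\ck$, $\cs' = \varepsilon_3\cs$, $\ci' = \cs'\ck'\ck'$, and $\ce' = \cs'(\ck'\ci')$, is a stable combinatory model. Proposition~\ref{prop:StableCMvxLM} will then give that $(\underline{A}, \cdot, \ck', \cs')$ is a lambda model.

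The combinatory pre-model axioms for the new structure are routine: Lemma~\ref{lem:EpsilonN}\eqref{lem:EpsilonN1}, combined with substitution via Lemma~\ref{prop:UniversalPolyAlg}, yields $\ck' a b = a$ and $\cs' a b c = ac(bc)$ for all $a,b,c \in \underline{A}$, and then $\ci' a = a$ and $\ce' a b = ab$ follow by direct computation. For the Meyer--Scott axiom in the new structure, the key identity is
\[
  \ce' a = \cs'(\ck'\ci') a = \ce(\cs(\ck'\ci') a),
\]
which is a consequence of Lemma~\ref{lem:EpsilonN}\eqref{lem:EpsilonN2}. Assuming $ac = bc$ for all $c \in \underline{A}$, one computes $\cs(\ck'\ci') a \cdot c = \ci'(ac) = ac = bc = \cs(\ck'\ci') b \cdot c$, and the Meyer--Scott axiom of the original combinatory model $A$ then gives $\ce(\cs(\ck'\ci') a) = \ce(\cs(\ck'\ci') b)$, i.e.\ $\ce' a = \ce' b$.

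The delicate part, and the main obstacle, is stability: one must prove $\ck' = \varepsilon_2' \ck'$ and $\cs' = \varepsilon_3' \cs'$, where $\varepsilon_n'$ is built from the new constants $\ck', \cs', \ce'$. The key observation is that every element in sight, including $\ck'$, $\cs'$, $\ce'$, $\varepsilon_2'\ck'$, and $\varepsilon_3'\cs'$, unfolds to an expression manifestly starting with $\ce$, hence is $\ce$-closed by Lemma~\ref{lem:ReflE}. A direct calculation using the $\cs'$- and $\ck'$-axioms shows that $\varepsilon_2'\ck' \cdot a$ and $\ck' \cdot a$ agree as functions of a further argument, so the Meyer--Scott axiom of $A$ gives $\ce(\varepsilon_2'\ck') = \ce(\ck')$, whence $\varepsilon_2'\ck' = \ck'$ by the $\ce$-closedness of both sides. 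The analogous argument for $\cs'$ uses a nested Meyer--Scott: one first shows $\varepsilon_3'\cs' \cdot a \cdot b = \cs' \cdot a \cdot b$ --- itself obtained via a subsidiary Meyer--Scott step that dissolves the intermediate term $\varepsilon_2'(\cs' a) \cdot b$ --- and then lifts to $\varepsilon_3'\cs' = \cs'$ by $\ce$-closedness. The crux is careful bookkeeping of $\ce$-closedness: although the derived $\ce'$ need not literally coincide with $\ce$ in $\underline{A}$, it factors through $\ce$ in every relevant computation, enabling repeated application of the Meyer--Scott axiom of the original $A$.
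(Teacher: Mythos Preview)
The paper does not actually supply a proof of this proposition: its entire proof reads ``See Barendregt~\cite[5.6.6(ii)]{BarendregtLambda}.'' Your argument therefore cannot be compared to a proof in the paper, only assessed on its own merits.

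Your approach is sound. A few remarks. First, the $\ce$-closedness claims are indeed justifiable by direct unfolding rather than by appealing to Lemma~\ref{lem:EspsilonNACM}: one computes $\varepsilon_2 a = \ce(\cs(\ck\ce)a)$ and $\varepsilon_3 a = \ce(\cs(\ck\varepsilon_2)a)$ straight from Definition~\ref{def:EpsilonN}, and $\ce' x = \ce(\cs(\ck'\ci')x)$ from Lemma~\ref{lem:EpsilonN}, so every relevant term is literally of the form $\ce(\cdot)$. You then need only Lemma~\ref{lem:ReflE}, which applies since combinatory models are reflexive. Second, the nested Meyer--Scott for $\cs'$ really needs three layers rather than two (as you hint): from $\varepsilon_3'\cs' a b c = \cs' a b c$ you successively peel off $c$, then $b$, then $a$, each time using $\ce$-closedness of both sides at the current level ($\cs' a b = \ce(\cs a b)$, $\cs' a = \ce(\cs(\ck\ce)(\cs a))$, etc.). With that bookkeeping made explicit, the argument is complete. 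Your route via Proposition~\ref{prop:StableCMvxLM} is a natural internal proof and a reasonable alternative to the external citation the paper gives.
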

\begin{proof}
    See Barendregt~\cite[5.6.6(ii)]{BarendregtLambda}.
\end{proof}
\begin{theorem}
  \label{thm:LambdaAlgThm}
  If $A$
  is a strongly reflexive combinatory pre-model, then $(\underline{A},
  \cdot, \varepsilon_2 \ck, \varepsilon_3 \cs)$ is a lambda algebra.
\end{theorem}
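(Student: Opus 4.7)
The plan is to reduce this algebraic statement to its combinatory-model counterpart (Proposition~\ref{prop:LambdaModelThm}) by passing through the polynomial algebra $A[\vX]$, and then transferring the result back to $A$ via a retraction. The key observation is that the constants $\varepsilon_n \ck$ and $\varepsilon_n \cs$ are built from closed terms in $\ck,\cs,\ce$, so they are preserved by any homomorphism of combinatory pre-models.

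First, since $A$ is strongly reflexive, Theorem~\ref{thm:CombModel} yields that $A[\vX]$ is a combinatory model. Applying Proposition~\ref{prop:LambdaModelThm} to $A[\vX]$, we obtain that the combinatory algebra
\[
(\underline{A[\vX]},\; \pa,\; \varepsilon_{2}\ck,\; \varepsilon_{3}\cs)
\]
is a lambda model, and hence a lambda algebra by Proposition~\ref{prop:LambdaAlgebraThm}. Note that $\varepsilon_{n}\ck$ and $\varepsilon_{n}\cs$ computed inside $A[\vX]$ coincide with the images under $\sigma_A$ of the corresponding elements of $A$, since $\sigma_A$ preserves $\ck,\cs,\ce$ and application.

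Next, applying Lemma~\ref{prop:UnivPropPolynomialAlgX} to the identity $\id_A \colon A \to A$ and any fixed sequence in $\underline{A}$ (say the constant sequence $\ci$), we obtain a homomorphism of combinatory pre-models $r \colon A[\vX] \to A$ with $r \circ \sigma_{A} = \id_{A}$. In particular, $r$ is surjective. Because $\varepsilon_{n}$ is defined inductively from $\ck, \cs, \ce$, and $r$ preserves these primitive constants as well as application, we have $r(\varepsilon_{2}\ck) = \varepsilon_{2}\ck$ and $r(\varepsilon_{3}\cs) = \varepsilon_{3}\cs$ in $A$. Hence $r$ restricts to a surjective homomorphism of combinatory algebras
\[
r \colon (\underline{A[\vX]},\; \pa,\; \varepsilon_{2}\ck,\; \varepsilon_{3}\cs) \longrightarrow (\underline{A},\; \cdot,\; \varepsilon_{2}\ck,\; \varepsilon_{3}\cs).
\]

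Finally, lambda algebras are defined by Curry's five closed equations (Definition~\ref{def:LambdaAlg}), so they form an equational variety and are in particular closed under homomorphic images. Since the domain of $r$ is a lambda algebra and $r$ is surjective, the codomain $(\underline{A}, \cdot, \varepsilon_{2}\ck, \varepsilon_{3}\cs)$ is a lambda algebra as required. The only genuinely subtle step is verifying that the constants $\varepsilon_{n}\ck$ and $\varepsilon_{n}\cs$ transfer correctly under $r$; but as noted, this is immediate from the inductive definition of $\varepsilon_{n}$ and the fact that $r$ is a homomorphism of combinatory pre-models rather than merely of combinatory algebras — it is precisely here that the primitive status of $\ce$ in our setup pays off.
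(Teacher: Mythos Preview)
Your argument is correct and follows essentially the same route as the paper: pass to $A[\vX]$ via Theorem~\ref{thm:CombModel}, apply Proposition~\ref{prop:LambdaModelThm} and Proposition~\ref{prop:LambdaAlgebraThm}, then descend along the retraction $A[\vX] \to A$. You spell out explicitly why the modified constants $\varepsilon_2\ck$ and $\varepsilon_3\cs$ are preserved by the retraction (since they are closed terms in $\ck,\cs,\ce$), a point the paper leaves implicit in the phrase ``since $A$ is a retract of $A[\vX]$''.
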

\begin{proof}
  If $A$ is strongly reflexive, then $A[\vX]$ is a
  combinatory model by Theorem \ref{thm:CombModel}. 
  Then, $(\underline{A[\vX]}, \pa, \varepsilon_2 \ck, \varepsilon_3 \cs)$
  is a lambda model by Proposition \ref{prop:LambdaModelThm}; hence it
  is a lambda algebra by Proposition \ref{prop:LambdaAlgebraThm}.
  Since $A$ is a retract of $A[\vX]$, we see that $(\underline{A},
  \cdot, \varepsilon_2 \ck, \varepsilon_3 \cs)$ is a lambda algebra.
\end{proof}
One can also check that the stable strongly reflexive combinatory
pre-model determined by the lambda algebra $(\underline{A}, \cdot,
\varepsilon_2 \ck, \varepsilon_3 \cs)$ is of the form $(\underline{A},
\cdot, \varepsilon_2 \ck, \varepsilon_3 \cs, \varepsilon_1 \ci,
\varepsilon_2 \ce)$.

\begin{remark}
  As we have noted at the end of Section~\ref{sec:CCCMoid}, a strongly
  reflexive combinatory pre-model $A$ determines a cartesian closed category with a reflexive
  object, which induces a lambda algebra structure on $(\underline{A},
  \cdot)$. One can verify
  that the constants $\ck$ and $\cs$ of the lambda algebra thus obtained
  coincide with $\varepsilon_2\ck$ and $\varepsilon_3 \cs$,
  respectively.
\end{remark}

We now understand that stable strongly reflexive combinatory pre-models and lambda
algebras are equivalent. However, compared with the five axioms of
lambda algebras (Definition \ref{def:LambdaAlg}),
we have seven axioms of strong reflexivity (Definition
\ref{def:AlgCombMod}) and four axioms of stability
\eqref{eq:Stability}. Nevertheless, since $\ci$ and
$\ce$ are definable from $\ck$ and $\cs$ in a stable strongly
reflexive combinatory pre-model, some of the axioms of
strong reflexivity and stability concerning the properties
of $\ci$ and $\ce$ become redundant. Specifically, 
consider the following conditions for a combinatory
pre-model 
\(
 A = (\underline{A},\mathbin{\cdot},\ck,\cs,\ci,\ce):
\)
 
\begin{enumerate}
  \myitem[CA]\label{def:CA} $\ci = \cs \ck \ck$ and $\ce = \cs (\ck \ci)$;
  \myitem[L1]\label{def:L1} $\ck = \varepsilon_2\ck$ and $\cs = \varepsilon_3\cs$;
  \myitem[L2]\label{def:L2} The equations \ref{def:AlgCombMod1}, \ref{def:AlgCombMod2},
    \ref{def:AlgCombMod5}, and \ref{def:AlgCombMod6} of Definition
    \ref{def:AlgCombMod}:
  \begin{itemize}
      \item[\ref{def:AlgCombMod1}.] 
        $\lambda^{\dag}\vx\vy. \ce(\cs(\cs(\ck\ck) \vx)\vy)
        = \lambda^\dag \vx\vy. \ce \vx$,
        
      \item[\ref{def:AlgCombMod2}.]
        $\lambda^{\dag}\vx\vy\vz. \ce(\cs(\cs(\cs(\ck \cs)\vx)\vy)\vz)
        = \lambda^\dag \vx\vy\vz. \ce (\cs(\cs \vx\vz)(\cs\vy\vz))$,

      \item[\ref{def:AlgCombMod5}.]
        $\lambda^{\dag}\vx\vy. \ce(\cs (\ck\vx) (\ck\vy))
        = \lambda^\dag \vx\vy. \ce (\ck(\vx\vy))$,

      \item[\ref{def:AlgCombMod6}.]
        $\lambda^{\dag}\vx. \ce(\cs (\ck\vx) \ci)
        = \lambda^\dag \vx. \ce \vx$.
  \end{itemize}
\end{enumerate}
By Lemma \ref{lem:FixEI}, the conditions \eqref{def:CA}, \eqref{def:L1},
and \eqref{def:L2} hold if $A$ is strongly reflexive and stable. Below, we show that these conditions are sufficient for
$A$ to be a stable strongly reflexive combinatory pre-model.
 
We fix a combinatory pre-model $A$ satisfying \eqref{def:CA},
\eqref{def:L1}, and \eqref{def:L2}.  First, \eqref{def:CA} and
\eqref{def:L1} imply
that $A$ is stable by Lemma \ref{lem:StabKSImpStabIE}. Thus, it
remains to show that $A$ is strongly reflexive.  To this
end, it suffices to show that $A$ is reflexive: for then $A[\vx]$ is
reflexive because \eqref{def:CA}, \eqref{def:L1}, and \eqref{def:L2} are
closed equations. 

To see that $A$ is reflexive, the following
observation is useful.
\begin{lemma}
\label{lem:LambdaAstLambdaDagAgree}
\eqref{def:CA}, \eqref{def:L1}, and \eqref{def:L2}\eqref{def:AlgCombMod6}
together imply $\lambda^\ast \vx. t = \lambda^\dag \vx. t$
for each $t \in \mathcal{T}(\left\{ \vx \right\} + \underline{A})$.
\end{lemma}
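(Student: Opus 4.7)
The plan is to induct on the structure of $t$, having first established three $\ce$-fixed-point identities in $\underline{A}$: (a) $\ck a = \ce(\ck a)$ for all $a$, (b) $\cs a b = \ce(\cs a b)$ for all $a,b$, and (c) $\ci = \ce\ci$. From (L1) $\ck = \varepsilon_2\ck$, direct expansion of $\varepsilon_2 = \cs(\ck\ce)(\cs(\ck\ce))$ using the $\cs$ and $\ck$ axioms yields
\[
  \varepsilon_2 \ck\, a = \ce(\cs(\ck\ce)\ck) a = \cs(\ck\ce)\ck\, a = \ce(\ck a),
\]
so $\ck a = \ce(\ck a)$, giving (a). Analogously (L1) $\cs = \varepsilon_3\cs$ with the expansion $\varepsilon_3 = \cs(\ck\ce)(\cs(\ck\varepsilon_2))$ gives $\varepsilon_3 \cs\, a\, b = \ce(\cs a b)$, whence (b); and combining (b) with (CA) yields $\ci = \cs\ck\ck = \ce(\cs\ck\ck) = \ce\ci$, which is (c).

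The crucial step is to derive $\cs(\ck a)\ci = \ce a$ from (L2)(6). I apply both sides of
\[
  \lambda^{\dag}\vx.\,\ce(\cs (\ck\vx) \ci) \;=\; \lambda^{\dag}\vx.\,\ce \vx
\]
to an arbitrary $a \in \underline{A}$; by the $\beta$-property of $\lambda^\dag$ noted in Definition~\ref{def:LambdaDag}, together with the universal property of the polynomial algebra applied via the evaluation sending $\vx \mapsto a$, this gives $\ce(\cs(\ck a)\ci) = \ce a$ in $\underline{A}$. Combining with (b) yields $\cs(\ck a)\ci = \ce(\cs(\ck a)\ci) = \ce a$ for every $a \in \underline{A}$.

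With these identities in hand the induction on $t \in \mathcal{T}(\{\vx\}+\underline{A})$ is routine. For $t = \vx$ use (c); for $t = a \in \underline{A}$ use (a); for $t = (a,\vx)$ with $a \in \underline{A}$ use the key step $\cs(\ck a)\ci = \ce a$. In the remaining case $t = (s,u)$, clause~\ref{def:LambdaDag4} of Definition~\ref{def:LambdaDag} applies, so the induction hypothesis yields $\lambda^\ast\vx.s = \lambda^\dag\vx.s$ and $\lambda^\ast\vx.u = \lambda^\dag\vx.u$ (when $u = \vx$ this is the base case $\ci = \ce\ci$), and then
\[
  \lambda^\ast\vx.t \;=\; \cs(\lambda^\dag\vx.s)(\lambda^\dag\vx.u) \;=\; \ce\bigl(\cs(\lambda^\dag\vx.s)(\lambda^\dag\vx.u)\bigr) \;=\; \lambda^\dag\vx.t
\]
using (b) to absorb the outer $\ce$. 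The only non-routine part is the key step above: the expressions $\cs(\ck a)\ci$ and $\ce a$ are a priori distinct and equating them is precisely what (L2)(6) forces, once the outer $\ce$'s are peeled off via the $\ce$-fixed identity (b) for $\cs$.
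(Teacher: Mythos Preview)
Your proof is correct and follows essentially the same approach as the paper. The paper's proof is a terse ``by induction on the complexity of $t$; the non-trivial case is $t \equiv (a,\vx)$, where one uses \eqref{def:L2}\eqref{def:AlgCombMod6}; the other cases follow from \eqref{def:CA}, \eqref{def:L1}, and Lemma~\ref{lem:EpsilonN}\eqref{lem:EpsilonN2}'' --- your identities (a), (b), (c) are exactly what the cited lemma unpacks to under \eqref{def:L1} and \eqref{def:CA}, and your key step $\cs(\ck a)\ci = \ce a$ is the same use of \eqref{def:L2}\eqref{def:AlgCombMod6} the paper points to.
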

\begin{proof}
  By induction on the complexity of $t$. 
  The non-trivial case
  is $t \equiv (a,\vx)$ for $a \in \underline{A}$; in this case,
  we can use \eqref{def:L2}\eqref{def:AlgCombMod6}.
  For the other cases, the results follow from \eqref{def:CA},
  \eqref{def:L1}, and 
  Lemma~\ref{lem:EpsilonN}\eqref{lem:EpsilonN2}.
\end{proof}

\begin{lemma}
  \label{lem:minReflexive}
  $A$ is reflexive.
\end{lemma}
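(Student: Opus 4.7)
The plan is to verify the seven universal equations of Proposition~\ref{prop:CharRef} that characterise reflexivity. By Lemma~\ref{lem:LambdaAstLambdaDagAgree}, the $\lambda^{\dag}$-closures in (L2) can be read as $\lambda^{\ast}$-closures, so $\beta$-expanding each of (L2.1), (L2.2), (L2.5), (L2.6) at arbitrary elements of $\underline{A}$ yields, immediately and respectively, equations (1), (2), (5), and (6) of Proposition~\ref{prop:CharRef}.

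The remaining three equations rest on an idempotency observation. By Lemma~\ref{lem:StabKSImpStabIE}, (CA) and (L1) make $A$ stable; in particular $\ce = \varepsilon_{2}\ce$, and a direct unfolding combined with the defining equation $\ce u v = u v$ gives $a b = \ce(a b)$ for all $a, b \in \underline{A}$. Hence $\ce$ fixes every product, and Proposition~\ref{prop:CharRef}\,(3), which asserts $\ce(\ce a) = \ce a$, is just this idempotency applied to the product $\ce \cdot a$.

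Equations (4) and (7) are each derived from a single well-chosen instantiation of the $\beta$-like axiom (L2.2). For (7), I substitute $(\ck a, \ck b, \ci)$ into (L2.2): two applications of (L2.5) collapse the nested $\cs(\ck{-})(\ck{-})$ pattern to $\ck(\cs a b)$, and (L2.6) then reduces the left-hand side to $\ce(\cs a b)$ once the outer $\ce$ is absorbed by idempotency, while two applications of (L2.6) rewrite the right-hand side as $\ce(\cs(\ce a)(\ce b))$. For (4), the right instantiation is $(\ck(\ck\ci), a, b)$: on the left, (L2.5) together with (CA) simplifies $\cs(\ck\cs)(\ck(\ck\ci))$ to $\ck\ce$, producing $\ce(\cs(\cs(\ck\ce)a)b)$; on the right, (L2.5) rewrites $\ck(\ck\ci)$ as $\cs(\ck\ck)(\ck\ci)$, whereupon (L2.1) with $a' = \ck\ci$ collapses $\cs(\ck(\ck\ci))b$ to $\ck\ci$, and (CA) together with idempotency then reduces the right-hand side to $\ce(\cs a b)$.

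The main obstacle is spotting the substitution $(\ck(\ck\ci), a, b)$ for (L2.2): the guiding principle is to arrange the simplifications so that (CA)'s identity $\cs(\ck\ci) = \ce$ appears on both sides --- on the left inside $\ck\ce$ via one use of (L2.5), and on the right as the residue after (L2.1) absorbs $\cs(\ck(\ck\ci))b$ into $\ck\ci$. Once this instantiation is found, the remaining manipulations are routine combinations of (L2.1), (L2.5), (L2.6), (CA), and the idempotency of $\ce$ on products.
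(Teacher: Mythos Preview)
Your overall strategy---verify the seven universal equations of Proposition~\ref{prop:CharRef} directly---is sound and genuinely different from the paper's proof, which instead uses Proposition~\ref{lem:ReflexLambda} together with Lemma~\ref{lem:LambdaAstLambdaDagAgree} and argues by induction on the derivation of $t \approx_\vx u$: there (CA) collapses the $\ci$- and $\ce$-clauses into the $\ck$- and $\cs$-clauses, so only three generating relations need to be checked, each handled by one of (L2.1), (L2.2), (L2.5). Your route costs more bookkeeping but produces explicit derivations of (4) and (7) from a single instance of (L2.2), which is pleasant.

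There is, however, a genuine error in your ``idempotency observation''. From $\ce = \varepsilon_2\ce$ you only get $\ce a = \varepsilon_2\ce\,a = \varepsilon_1(\ce a) = \ce(\ce a)$, i.e.\ $\ce$ is idempotent on its own range; you do \emph{not} get $\ce(ab) = ab$ for arbitrary $a,b$. That stronger claim is false: in any non-extensional lambda model it fails (it would amount to $\lambda y.(ab)y = ab$, a form of $\eta$). What \emph{is} available from (L1), via Lemma~\ref{lem:EpsilonN}\eqref{lem:EpsilonN2}, is the pair of identities $\ce(\ck c) = \ck c$ and $\ce(\cs c d) = \cs c d$, and fortunately this is exactly what your computations need. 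Every time you silently pass from an (L2) instance to its ``naked'' form---e.g.\ from $\ce(\cs(\ck u)(\ck v)) = \ce(\ck(uv))$ to $\cs(\ck u)(\ck v) = \ck(uv)$---the term under $\ce$ on each side is either $\ck(\text{--})$ or $\cs(\text{--})(\text{--})$, so these two identities suffice. With this repair your derivations of (4) and (7) go through verbatim, and (3) follows directly from $\ce(\ce a) = \ce a$ (obtained as above from $\ce = \varepsilon_2\ce$), not from the spurious ``$\ce$ fixes the product $\ce\cdot a$''.

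A minor point: the appeal to Lemma~\ref{lem:LambdaAstLambdaDagAgree} at the start is unnecessary (and that lemma is stated only for one-variable terms). You only need the $\beta$-property $(\lambda^\dag \vx. t)\,u \approx_{\vX} t[\vx/u]$ recorded in Definition~\ref{def:LambdaDag}, together with the fact that $\approx_\vX$ on closed terms reflects equality in $A$.
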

\begin{proof}
  By Proposition~\ref{lem:ReflexLambda} and Lemma
  \ref{lem:LambdaAstLambdaDagAgree}, it suffices to show that 
  $t \approx_{\vx} u$ implies
  $\lambda^\ast \vx. t  = \lambda^\ast \vx.u$  for each
  $t, u \in \mathcal{T}(\left\{ \vx \right\} + \underline{A})$. The proof is by
  induction on the derivation of $t \approx_{\vx} u$.
  Note that 
  we only need to show that the following defining equations of
  $A[\vx]$ are preserved by $\lambda^\ast \vx$:
  \begin{enumerate}
    \item $((\ck,s),t) \approx_{\vx} s$,
    \item $(((\cs,s),t),u) \approx_{\vx} ((s,u),(t,u))$,
    \item $(\ci,s) \approx_{\vx} s$,
    \item $((\ce,s),t) \approx_{\vx} (s,t)$,
    \item $(a,b) \approx_{\vx} ab$,
  \end{enumerate}
  where $a, b \in \underline{A}$ and $s,t,u\in
  \mathcal{T}(\left\{ \vx \right\} + \underline{A})$. By the assumption \eqref{def:CA}, equations
  \ref{eq:polyI} and \ref{eq:polyE} are derivable from \ref{eq:polyK} and
  \ref{eq:polyS}; thus we only need to deal with \ref{eq:polyK}, \ref{eq:polyS},
  and \ref{eq:polyInj}.

  We consider equation \ref{eq:polyK} as an example.
  First, note that $\ce (\lambda^\ast \vx. t)
  = \lambda^\ast \vx.t$ for each $t \in
   \mathcal{T}(\left\{ \vx \right\} + \underline{A})$ by \eqref{def:L1} and Lemma
  \ref{lem:EpsilonN}\eqref{lem:EpsilonN2}.
  Then, $\lambda^\ast \vx. ((\ck,s),t) = \lambda^\ast \vx. s$
  follows from \eqref{def:L2}\eqref{def:AlgCombMod1}
  by applying $\lambda^\ast \vx. s$  and $\lambda^\ast \vx. t$
  to both sides and removing some $\ce$ using Lemma
  \ref{lem:EpsilonN}\eqref{lem:EpsilonN2}.
  Similarly, one can show that equations \ref{eq:polyS} and
  \ref{eq:polyInj} are preserved by $\lambda^\ast \vx$ using
  \eqref{def:L2}\eqref{def:AlgCombMod2} and
  \eqref{def:L2}\eqref{def:AlgCombMod5}.
\end{proof}

From the above lemma, we obtain the desired conclusion.
\begin{theorem}
A combinatory pre-model is strongly reflexive and stable
if and only if it satisfies \eqref{def:CA}, \eqref{def:L1} and \eqref{def:L2}.
\end{theorem}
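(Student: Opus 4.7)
The plan is to assemble the pieces that the preceding development has already put in place.

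\textbf{($\Rightarrow$).} Assume $A$ is strongly reflexive and stable. Lemma~\ref{lem:FixEI} gives \eqref{def:CA} directly. The first half of stability \eqref{eq:Stability} is exactly \eqref{def:L1}. For \eqref{def:L2}, I would invoke Theorem~\ref{Thm:AlgCombMod}: strong reflexivity is equivalent to the seven closed equations listed there, and the four equations in \eqref{def:L2} are a subset of those.

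\textbf{($\Leftarrow$).} Now assume $A$ satisfies \eqref{def:CA}, \eqref{def:L1}, and \eqref{def:L2}. I would argue stability and strong reflexivity in turn. For stability: \eqref{def:CA} and \eqref{def:L1} together with Lemma~\ref{lem:StabKSImpStabIE} yield $\ci = \varepsilon_1 \ci$ and $\ce = \varepsilon_2 \ce$, which combined with \eqref{def:L1} gives all four equations of \eqref{eq:Stability}. Thus $A$ is stable, but this requires $A$ to already be known strongly reflexive, so let me prove strong reflexivity first.

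For strong reflexivity: by Definition~\ref{def:AlgCombMod} it suffices to show that $A[\vx]$ is reflexive. Lemma~\ref{lem:minReflexive} already shows that $A$ itself is reflexive under the hypotheses \eqref{def:CA}, \eqref{def:L1}, \eqref{def:L2}. The key observation is that each of these hypotheses is a \emph{closed} equation, built from the constants $\ck, \cs, \ci, \ce$ alone (the indeterminates occurring in \eqref{def:L2} are bound by $\lambda^\dag$). Consequently $A[\vx]$ also satisfies \eqref{def:CA}, \eqref{def:L1}, \eqref{def:L2}, and so by Lemma~\ref{lem:minReflexive} applied to $A[\vx]$ in place of $A$, the combinatory pre-model $A[\vx]$ is reflexive. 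This establishes strong reflexivity of $A$, which then licenses Lemma~\ref{lem:StabKSImpStabIE} and completes the proof of stability as above.

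The only subtlety is verifying that the closed-equation argument for $A[\vx]$ is legitimate, i.e.\ that the $\varepsilon_n$ and $\lambda^\dag$ operations appearing in \eqref{def:L1} and \eqref{def:L2} are interpreted uniformly in any combinatory pre-model extending $A$. This is immediate: $\varepsilon_n$ is defined from $\ck, \cs, \ce$ (Definition~\ref{def:EpsilonN}) and the closed $\lambda^\dag$-terms in \eqref{def:L2} unfold into closed terms over $\ck, \cs, \ci, \ce$ (Definition~\ref{def:LambdaDag}), so their values are preserved by the homomorphism $\eta_A \colon A \to A[\vx]$. Everything else is a direct citation of the preceding results, so no further obstacle is expected.
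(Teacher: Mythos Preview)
Your proposal is correct and follows essentially the same route as the paper: the forward direction cites Lemma~\ref{lem:FixEI}, the stability axioms, and Theorem~\ref{Thm:AlgCombMod}; the converse uses Lemma~\ref{lem:minReflexive} together with the observation that \eqref{def:CA}, \eqref{def:L1}, \eqref{def:L2} are closed equations and hence transfer to $A[\vx]$. The only cosmetic difference is that the paper asserts the stability equations first (via Lemma~\ref{lem:StabKSImpStabIE}, which does not itself require strong reflexivity) and then turns to strong reflexivity, whereas you reverse the order to respect the fact that Definition~\ref{def:Stability} is stated only for strongly reflexive pre-models; note that your phrase ``licenses Lemma~\ref{lem:StabKSImpStabIE}'' is slightly off, since that lemma needs no such license---what strong reflexivity licenses is the use of the word ``stable.''
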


In the context of combinatory algebras, the conditions \eqref{def:L1}
and \eqref{def:L2} provide an alternative characterisation of lambda
algebras. These conditions are similar to Curry's axiomatisation 
in Definition \ref{def:LambdaAlg}, but they do not directly correspond to each other.
Instead, \eqref{def:L1} and \eqref{def:L2} naturally correspond to
Selinger's axiomatisation of lambda algebras~\cite{Selinger02},
as we will see below.

The relation between the two axiomatisations can be clarified by the
following characterisation of stable combinatory
models~\cite[5.6.5]{BarendregtLambda}. In the light of
Lemma~\ref{lem:FixEI}, the result can be stated as follows.
\begin{lemma}
  \label{lema:BarendregtStability}
  Let
  $A$
  be a combinatory model satisfying $\ci = \cs \ck \ck$ and
  $\ce = \cs (\ck \ci)$. Then, $A$ is stable if and only if it
  satisfies the following equations:
  \begin{align*}
    \ce \ck &= \ck, &
    \ce \cs &= \cs, &
    \ce (\ck a) &= \ck a, &
    \ce (\cs a) &= \cs a, &
    \ce (\cs a b ) &= \cs a b
  \end{align*}
  for each $a,b \in \underline{A}$.
\end{lemma}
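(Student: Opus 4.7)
The plan is to prove both directions using two simple observations: (a) by the shape $\varepsilon_{n+1} = \cs(\ck\ce)(\cs(\ck\varepsilon_n))$, each term $\varepsilon_n x$ reduces to something of the form $\ce(\cdots)$, so by Lemma~\ref{lem:ReflE} every $\varepsilon_n x$ is $\ce$-fixed (for $n \geq 1$, where we even have $\varepsilon_1 x = \ce x$); (b) by Lemma~\ref{lem:EpsilonN}, applying $\varepsilon_n a$ to $n-1$ arguments reduces to $\ce$ of the iterated application. Combined with the Meyer--Scott axiom (which equates $\ce a$ and $\ce b$ whenever $a$ and $b$ agree on all arguments), these let us convert between the stability equations and the five $\ce$-equations.

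For the forward direction, suppose $A$ is stable. From $\ck = \varepsilon_2\ck$, Lemma~\ref{lem:EpsilonN} gives $\ck a = \varepsilon_2 \ck a = \ce(\ck a)$, which is the third equation. Since $\varepsilon_2\ck$ is $\ce$-fixed by observation (a), also $\ce\ck = \ce(\varepsilon_2\ck) = \varepsilon_2\ck = \ck$, which is the first equation. The same pattern applied to $\cs = \varepsilon_3\cs$ yields $\ce(\cs a b) = \cs a b$ (applying both arguments) and $\ce(\cs a) = \cs a$ (noting that $\varepsilon_2(\cs a)$ is $\ce$-fixed, and $\cs a = \varepsilon_2(\cs a)$ by Lemma~\ref{lem:EpsilonN} applied once), and finally $\ce\cs = \cs$.

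For the backward direction, by Lemma~\ref{lem:StabKSImpStabIE} together with the hypothesis $\ci = \cs\ck\ck$ and $\ce = \cs(\ck\ci)$, it suffices to derive $\ck = \varepsilon_2\ck$ and $\cs = \varepsilon_3\cs$. For the first: $(\ce\ck)c = \ck c$ by Ax:E, while $(\varepsilon_2\ck)c = \ce(\ck c) = \ck c$ by Lemma~\ref{lem:EpsilonN}\eqref{lem:EpsilonN2} and the hypothesis $\ce(\ck a) = \ck a$. Thus $\ce\ck$ and $\varepsilon_2\ck$ agree on every argument, so Meyer--Scott gives $\ce(\ce\ck) = \ce(\varepsilon_2\ck)$; as both sides are $\ce$-fixed (by Lemma~\ref{lem:ReflE} and observation (a)), we obtain $\ce\ck = \varepsilon_2\ck$, and combining with $\ce\ck = \ck$ yields $\ck = \varepsilon_2\ck$. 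For the second, we iterate the same trick: first show $(\ce\cs c)d = \cs c d$ and $(\varepsilon_3\cs c)d = \ce(\cs c d) = \cs c d$ (using $\ce(\cs ab) = \cs ab$), so that $\ce\cs c = \varepsilon_3\cs c$ for every $c$ by Meyer--Scott and $\ce$-fixedness; then apply Meyer--Scott once more, together with $\ce\cs = \cs$ and the $\ce$-fixedness of $\varepsilon_3\cs$, to conclude $\cs = \varepsilon_3\cs$. The argument for $\cs$ crucially uses the intermediate equation $\ce(\cs a) = \cs a$ only indirectly via $\ce$-fixedness of $\varepsilon_3\cs$, so it is actually derivable; but it is cleaner to include it among the hypotheses as stated.

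The only subtle point is bookkeeping: one must remember that Meyer--Scott delivers equality only modulo $\ce$, so each pointwise equality of functions must be promoted to a genuine equality by checking that both sides are already $\ce$-fixed. Once one notes that every $\varepsilon_n x$ is $\ce$-fixed thanks to the outer $\ce$ hidden in the definition of $\varepsilon_{n+1}$, each step is routine combinatorial calculation.
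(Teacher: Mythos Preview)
Your proof is correct and self-contained, whereas the paper simply cites Barendregt~\cite[5.6.5]{BarendregtLambda} without giving any argument. Both directions work exactly as you describe: the forward direction uses that each $\varepsilon_n x$ is an $\ce$-fixed point (because for $n\ge 2$ one has $\varepsilon_n x = \ce(\cs(\ck\varepsilon_{n-1})x)$, and Lemma~\ref{lem:ReflE} applies since combinatory models are reflexive), together with Lemma~\ref{lem:EpsilonN}; the backward direction iterates Meyer--Scott and uses $\ce$-fixedness on both sides to promote the resulting equalities modulo $\ce$ to genuine equalities.

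One remark: your closing comment that the hypothesis $\ce(\cs a)=\cs a$ is ``only indirectly'' used ``via $\ce$-fixedness of $\varepsilon_3\cs$'' and hence ``actually derivable'' is misleading. In your own argument it is used to establish $\ce$-fixedness of $\ce\cs c = \cs c$, not of $\varepsilon_3\cs c$ (the latter is $\ce$-fixed for free). Concretely, after the first Meyer--Scott step you obtain $\ce(\cs c)=\varepsilon_3\cs c$, and to turn this into $\cs c = \varepsilon_3\cs c$ you invoke exactly $\ce(\cs c)=\cs c$. So the hypothesis is genuinely used, and you should drop the parenthetical about its derivability. This does not affect the correctness of the proof itself, which stands as written.
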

\begin{proof}
  See Barendregt~\cite[5.6.5]{BarendregtLambda}.
\end{proof}

From the above lemma, we can derive the following algebraic analogue.
\begin{proposition}
  \label{prop:Stability}
  Let $A$
  be a strongly reflexive combinatory pre-model satisfying
  $\ci = \cs \ck \ck$ and  $\ce = \cs (\ck \ci)$.
  Then, $A$ is stable if and only if the following equations hold:
  \begin{align}
    \label{eq:SelingerStability}
    \ce \ck &= \ck, &
    \ce \cs &= \cs, &
    \ce (\ck \vx) &\approx_{\vx} \ck \vx, &
    \ce (\cs \vx) &\approx_{\vx} \cs \vx, &
    \ce (\cs \vx \vy) &\approx_{\vx\vy} \cs \vx \vy.
  \end{align}
\end{proposition}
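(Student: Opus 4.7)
The plan is to transfer both directions through the polynomial algebra $A[\vX]$, which by Theorem~\ref{thm:CombModel} is a combinatory model (since $A$ is strongly reflexive), and then apply Lemma~\ref{lema:BarendregtStability} there. By Proposition~\ref{prop:Retract}, the sections $\sigma_n \colon A[\vx_1,\dots,\vx_n] \to A[\vX]$ (in particular $\sigma_A \colon A \to A[\vX]$, the $n=0$ case) are all injective, so closed equations in $A$ and equations with finitely many indeterminates transfer faithfully to and from $A[\vX]$. The hypotheses $\ci = \cs\ck\ck$ and $\ce = \cs(\ck\ci)$ are closed, hence lift to $A[\vX]$ unchanged.

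For the forward direction, assume $A$ is stable. The stability equations~\eqref{eq:Stability} are closed, so they hold in $A[\vX]$ as well, making $A[\vX]$ a stable combinatory model that still satisfies the definability conditions for $\ci$ and $\ce$. Lemma~\ref{lema:BarendregtStability} then yields
\[
  \ce\ck = \ck, \quad \ce\cs = \cs, \quad \ce(\ck a) = \ck a, \quad \ce(\cs a) = \cs a, \quad \ce(\cs a b) = \cs a b
\]
in $A[\vX]$ for all $a, b \in \underline{A[\vX]}$. The first two are images under $\sigma_A$ of the corresponding elements of $A$, and injectivity of $\sigma_A$ transfers them back. Specialising the remaining three to $a = \vx$ and $b = \vy$ gives $\ce(\ck\vx) = \ck\vx$, $\ce(\cs\vx) = \cs\vx$, $\ce(\cs\vx\vy) = \cs\vx\vy$ in $A[\vX]$; these equations live in the images of $\sigma_1$ and $\sigma_2$ respectively, whose injectivity recovers the desired $\approx_{\vx}$ and $\approx_{\vx\vy}$ versions in~\eqref{eq:SelingerStability}.

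For the backward direction, assume \eqref{eq:SelingerStability} holds in $A$. The two closed equations lift to $A[\vX]$ directly. For the three open equations, Lemma~\ref{prop:UniversalPolyAlg} provides, for any $a \in \underline{A[\vX]}$, a homomorphism $A[\vx] \to A[\vX]$ extending $\sigma_A$ and sending $\vx \mapsto a$; applying it to $\ce(\ck\vx) \approx_{\vx} \ck\vx$ produces $\ce(\ck a) = \ck a$ in $A[\vX]$, and the analogous one- and two-variable arguments deliver $\ce(\cs a) = \cs a$ and $\ce(\cs a b) = \cs a b$ for all $a, b \in \underline{A[\vX]}$. Together with the lifted $\ci = \cs\ck\ck$, $\ce = \cs(\ck\ci)$ and Theorem~\ref{thm:CombModel}, Lemma~\ref{lema:BarendregtStability} makes $A[\vX]$ a stable combinatory model; in particular $\ck = \varepsilon_2\ck$ and $\cs = \varepsilon_3\cs$ hold in $A[\vX]$, and injectivity of $\sigma_A$ descends these to $A$. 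Lemma~\ref{lem:StabKSImpStabIE} then supplies $\ci = \varepsilon_1\ci$ and $\ce = \varepsilon_2\ce$, so $A$ is stable.

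The main obstacle is purely bookkeeping: keeping straight which equations are closed and which carry indeterminates, and choosing the corresponding injective section ($\sigma_A$, $\sigma_1$, or $\sigma_2$) for each descent step and the corresponding extension of $\sigma_A$ for each lifting step. Once this is in place, both directions reduce to a direct invocation of Lemma~\ref{lema:BarendregtStability} on $A[\vX]$.
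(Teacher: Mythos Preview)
Your proof is correct and follows essentially the same route as the paper: transfer the problem to $A[\vX]$ via Theorem~\ref{thm:CombModel}, apply Lemma~\ref{lema:BarendregtStability} there, and move equations back and forth using that closed equations lift along $\sigma_A$ while polynomial equations with indeterminates instantiate via the universal property (Lemma~\ref{prop:UniversalPolyAlg}). The paper compresses this into three lines, but the underlying mechanism is identical; your only slight detour is invoking Lemma~\ref{lem:StabKSImpStabIE} at the end, whereas all four stability equations~\eqref{eq:Stability} could have been descended from $A[\vX]$ directly since they are closed.
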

\begin{proof}
  By Theorem~\ref{thm:CombModel}, $A$ is stable
  if and only if $A[\vX]$ is stable as a combinatory model. By Lemma
  \ref{lema:BarendregtStability}, the latter is equivalent to
  \begin{align*}
    \ce \ck &\approx_{\vX} \ck, &
    \ce \cs &\approx_{\vX} \cs, &
    \ce (\ck t) &\approx_{\vX} \ck t, &
    \ce (\cs t) &\approx_{\vX} \cs t, &
    \ce (\cs t u ) &\approx_{\vX} \cs t u
  \end{align*}
  for all $t,u \in \mathcal{T}(\vX + \underline{A})$, which in turn is equivalent to 
  \eqref{eq:SelingerStability}.
\end{proof}

Proposition~\ref{prop:Stability} allows us to see
the connection
between \eqref{def:L1} and \eqref{def:L2} and the following axiomatisation 
of lambda algebras by Selinger.
\begin{theorem}[{Selinger~\cite[Theorem~3]{Selinger02}}]
  \label{thm:Selinger}
  Let $A = (\underline{A}, \cdot, \ck,\cs)$ be a combinatory algebra. 
  Then, $A$ is a lambda algebra if and only if the following
  equations hold with $\ci = \cs
  \ck \ck$ and $\ce = \cs (\ck \ci)$:%
  \footnote{As noted by Selinger~\cite[Section~2.4]{Selinger02},
  condition \ref{thm:Selinger3} follows from \ref{thm:Selinger8}, so
it is redundant.}
  {
    \setlength{\columnsep}{-65pt}
  \begin{multicols}{2}
  \begin{enumerate}[label=(\alph*)]
    \item\label{thm:Selinger1}  
      $\ce \ck = \ck$,
    \item\label{thm:Selinger2} 
      $\ce \cs = \cs$,
    \item\label{thm:Selinger3} 
      $\ce (\ck \vx) \approx_{\vx} \ck \vx$,
    \item\label{thm:Selinger4} 
      $\ce (\cs \vx) \approx_{\vx} \cs \vx$,
    \item\label{thm:Selinger5} 
      $\ce (\cs \vx \vy) \approx_{\vx\vy} \cs \vx \vy$,
    \item\label{thm:Selinger6} 
      $\cs(\cs(\ck\ck) \vx)\vy
        \approx_{\vx\vy} \ce \vx$,
    \item\label{thm:Selinger7} 
      $\cs(\cs(\cs(\ck \cs)\vx)\vy)\vz
        \approx_{\vx\vy\vz}  \cs(\cs \vx\vz)(\cs\vy\vz)$,
    \item\label{thm:Selinger8} 
      $\cs (\ck\vx) (\ck\vy)
        \approx_{\vx\vy} \ck(\vx\vy)$,
    \item\label{thm:Selinger9} 
      $\cs (\ck\vx) \ci
        \approx_{\vx} \ce \vx$.
      \item[]
  \end{enumerate}
  \end{multicols}
  }
\end{theorem}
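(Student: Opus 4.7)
The plan is to combine Theorem~\ref{thm:StableACMisLA} with the theorem immediately preceding this proposition, which characterises stable strongly reflexive combinatory pre-models via \eqref{def:CA}, \eqref{def:L1}, and \eqref{def:L2}, together with Proposition~\ref{prop:Stability} (relating stability to Selinger's (a)--(e) in the presence of strong reflexivity). These results reduce Selinger's theorem to showing the equivalence between \eqref{def:CA} augmented with \eqref{def:L1} and \eqref{def:L2} on the one hand, and \eqref{def:CA} augmented with Selinger's (a)--(i) on the other.

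For the forward direction, suppose $A$ satisfies \eqref{def:CA}, \eqref{def:L1}, and \eqref{def:L2}. By the preceding theorem, $A$ is strongly reflexive and stable, so Proposition~\ref{prop:Stability} directly yields Selinger's (a)--(e). For (f)--(i), I would instantiate each equation of \eqref{def:L2} at the appropriate indeterminates via the $\beta$-property $(\lambda^\dag \vx. t) u \approx_{\vX} t[\vx/u]$, then strip the outermost $\ce$ on the left using the polynomial algebra axiom $\ce s t \approx_{\vX} s t$ from Definition~\ref{def:PolyAlg}. For instance, instantiating the first equation of \eqref{def:L2} at $\vx, \vy$ gives $\ce(\cs(\cs(\ck\ck)\vx)\vy) \approx_{\vx\vy} \ce \vx$, and the left side simplifies to $\cs(\cs(\ck\ck)\vx)\vy$, yielding Selinger (f). The other three pairings are handled identically.

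For the backward direction, assume \eqref{def:CA} and Selinger's (a)--(i). The task is to derive \eqref{def:L1} and \eqref{def:L2}. For \eqref{def:L1}, I would unfold $\varepsilon_2 \ck$ and $\varepsilon_3 \cs$ via Definition~\ref{def:EpsilonN} and reduce the required equalities to instances of Selinger's (a)--(e) applied to specific subterms. For \eqref{def:L2}, we invert the forward manipulation by applying $\lambda^\dag$-abstractions to both sides of (f)--(i), using Selinger's (c)--(e) together with the axiom $\ce s t \approx_{\vX} s t$ to manage the $\ce$-wrappers introduced by the syntactic definition of $\lambda^\dag$. The main obstacle lies in this backward direction: since reflexivity of $A[\vx]$ is not yet available, one cannot simply pass from $\approx_{\vx\vy}$-equalities in polynomial algebras to equalities of $\lambda^\dag$-abstractions in $A$. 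The argument must instead proceed by direct syntactic computation, expressing, e.g., both $\lambda^\dag\vx\vy.\ce(\cs(\cs(\ck\ck)\vx)\vy)$ and $\lambda^\dag\vx\vy.\ce\vx$ as closed terms in $\ck,\cs,\ci,\ce$ and verifying their equality in $A$ using \eqref{def:CA} and Selinger's equations as rewrite rules; the remaining cases of \eqref{def:L2} and all of \eqref{def:L1} follow the same pattern.
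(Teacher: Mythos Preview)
Your forward direction is correct and matches the paper's approach. The backward direction, however, has a genuine gap in your plan for \eqref{def:L1}. Unfolding $\varepsilon_2\ck$ and applying Selinger's (e) leaves you with $\cs(\ck\ce)\ck$, and there is no evident way to reduce this to $\ck$ using only (a)--(e): those five conditions merely say that $\ce$ fixes certain shapes of terms, and provide no rewrite for the pattern $\cs(\ck\ce)\ck$. Proposition~\ref{prop:Stability}, which you invoke, deduces stability from (a)--(e) only \emph{in the presence of} strong reflexivity; you cannot appeal to it before that is available. A brute-force syntactic verification using all of (a)--(i) might in principle succeed for \eqref{def:L1} just as for \eqref{def:L2}, but this is far from the simple reduction to (a)--(e) you describe, and you have not carried it out.

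The paper inverts your ordering. It observes that the \emph{proofs} of Lemma~\ref{lem:LambdaAstLambdaDagAgree} and Lemma~\ref{lem:minReflexive}---originally stated under \eqref{def:CA}, \eqref{def:L1}, \eqref{def:L2}---actually go through using only (c) and (e)--(i): instances of (c), (e), (i) give $\lambda^\ast\vx.t = \lambda^\dag\vx.t$, and instances of (f), (g), (h) (with $\ce$ stripped via (c), (e)) show that $\lambda^\ast\vx$ respects $\approx_\vx$. This yields reflexivity of $A$, and since (c)--(i) persist to $A[\vx]$, strong reflexivity. Now \eqref{def:L2} holds automatically by Theorem~\ref{Thm:AlgCombMod}, and \eqref{def:L1} follows from (a)--(e) via Proposition~\ref{prop:Stability}, which is finally applicable. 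In short, the paper secures the $\xi$-rule first---exactly the missing ingredient you identified---rather than attempting to work around its absence by raw term rewriting.
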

\begin{proof}
  By Proposition~\ref{prop:Stability}, the conditions \eqref{def:L1} and
  \eqref{def:L2} imply the nine conditions above.
  Conversely, the proofs of Lemma \ref{lem:LambdaAstLambdaDagAgree} and Lemma \ref{lem:minReflexive} show that
  the conditions \ref{thm:Selinger3} and \ref{thm:Selinger5}--\ref{thm:Selinger9} imply that the structure $(\underline{A}, \cdot, \ck,
  \cs, \ci, \ce)$ with $\ci = \cs \ck \ck$ and $\ce =  \cs (\ck \ci)$
  is a strongly reflexive combinatory pre-model;
  hence they imply \eqref{def:L2}. Then, \eqref{def:L1} follows from
  \ref{thm:Selinger1}--\ref{thm:Selinger5}
  by Proposition \ref{prop:Stability}.
\end{proof}

The correspondence between \eqref{def:L1}--\eqref{def:L2} and
\ref{thm:Selinger1}--\ref{thm:Selinger9} are evident: in
\ref{thm:Selinger1}--\ref{thm:Selinger9} the conditions \eqref{def:L1}
is broken down into an equivalent set of smaller pieces and some $\ce$'s are
omitted from \eqref{def:L2} using \ref{thm:Selinger3} and \ref{thm:Selinger5}.

\subsection*{Acknowledgement}
We thank the anonymous referees for helpful comments and suggestions.
This research was supported by the Core-to-Core Program (A.\ Advanced
Research Networks) of the Japan Society for the Promotion of Science.

\bibliographystyle{abbrv}
\bibliography{ref.bib}
\end{document}